\documentclass[a4paper,11pt]{article}

\usepackage[utf8]{inputenc}
\usepackage[T1]{fontenc}
\usepackage{lmodern}
\usepackage{microtype}
\usepackage{lineno}

\usepackage[a4paper,top=1.0in,bottom=1.0in,left=1.0in,right=1.0in]{geometry}

\usepackage{amsmath,amssymb,amsthm,mathtools,bm} 

\newtheorem{theorem}{Theorem}[section]
\newtheorem{lemma}[theorem]{Lemma}
\newtheorem{corollary}[theorem]{Corollary}

\newtheorem{observation}[theorem]{Observation}
\newtheorem{definition}[theorem]{Definition}
\newtheorem{example}[theorem]{Example}
\newtheorem{claim}[theorem]{Claim}

\DeclareMathOperator{\OPT}{\sf OPT}
\DeclareMathOperator{\Alg}{\sf ALG}

\usepackage{algorithm,algorithmic}
\usepackage{graphicx}
\usepackage{booktabs}
\usepackage[font=small,labelfont=bf]{caption}
\usepackage{subcaption}

\usepackage[dvipsnames]{xcolor}
\usepackage{hyperref}
\hypersetup{
    colorlinks=true,
    linkcolor=blue,
    citecolor=Green,
    urlcolor=MidnightBlue
}

\title{Online Contract Design}
\date{}

\author{
  Elad Lavi\thanks{Computer Science Department, Technion, Haifa, Israel. \texttt{elad.lavi@cs.technion.ac.il}} 
  \and
  Hadas Shachnai\thanks{Computer Science Department, Technion, Haifa, Israel. \texttt{hadas@cs.technion.ac.il}} 
  \and
  Inbal Talgam-Cohen\thanks{School of Computer Science, Tel Aviv University. \texttt{inbaltalgam@gmail.com}}
}

\begin{document}

\maketitle

\begin{abstract}
    We initiate the study of online contracts, which integrate the game-theoretic considerations of \emph{economic contract theory}, with the algorithmic and informational challenges of \emph{online algorithm design}. Our starting point is the classic online setting with preemption of Buchbinder \emph{et al.}~[SODA'15], in which a hiring principal faces a sequence of adversarial agent arrivals. Upon arrival, the principal must decide whether to tentatively accept the agent to their team, and whether to dismiss previous tentative choices. Dismissal is irrevocable, giving the setting its online decision-making flavor. In our setting, the agents are rational players: once the team is finalized, a game is played where the principal offers contracts (performance-based payment schemes), and each agent decides whether or not to work. Each working agent adds value to the principal's reward, and the goal is to choose a team that maximizes the principal's utility. Our main positive result is a $1/2$-competitive algorithm,
    which matches the best-possible competitive ratio. Our algorithm is randomized and this is necessary, as we show that no deterministic algorithm can attain a constant competitive ratio. Moreover, if agent rewards are allowed to exhibit combinatorial structure known as XOS, even randomized algorithms might fail. En route to our competitive algorithm, we develop the technique of \emph{balance points}, which can be useful for further exploration of online contracts in the adversarial model. 
\end{abstract}

\section{Introduction}

Contract theory studies how a \emph{principal} can incentivize agents to work, when work is costly for the agents and its outcome rewards the principal~\cite{holmstrom1979moral,holmstrom1982moral}. The economic tool for aligning incentives is \emph{contracts} --- performance-based payments schemes which the principal commits to. 
The theory of contracts is a cornerstone of microeconomics, recognized by the 2016 Nobel prize.
The rising research area of \emph{algorithmic} contract theory studies contracts through the computational lens. 
Early papers include~\cite{BabaioffFNW12,dutting2019simple}; for a survey of the recent flurry of research activity see~\cite{dutting2024algorithmic}. 

\emph{Multi-agent} contract settings are of particular computational interest~\cite{BabaioffFNW12,dutting2023multi}. In such settings, the principal chooses a subset of agents to work for them as a team; the agents' contracts must incentivize them to work, by disincentivizing them from free-riding on the efforts of other team members. Finding the ``right'' subset of agents to form a team is inherently a combinatorial problem, and is tackled by~\cite{BabaioffFNW12,dutting2023multi,VuongDPP23,CastiglioniEtAl23,CacciamaniEtAl24,dutting2025multi,TalCCELT25}, among others.
To date, the substantial body of work on multi-agent contracts has focused on the \emph{offline} setting, in which the set of agents is known in advance, so that the principal can optimize globally.  

Bringing algorithmic contract theory closer to real-world applications calls for models that capture how interactions unfold over time. In many classic and modern environments, agents arrive sequentially and unpredictably. For example, a team leader occasionally receives applications from potential new members, who may change the composition of the team; online labor platforms have a dynamic and fluctuating supply of workers; even when the team is composed of AI rather than human agents (see e.g.~\cite{AutoGen}), new AI models appear periodically. 
In this work we study the question: How should team contracts be designed when the potential participants are revealed gradually over time?

\emph{Online algorithms} provide a classic framework and rich body of techniques for decision-making in settings characterized by sequential and uncertain arrivals~\cite{BorodinE05}. In this paper, we marry this framework with contracts, and initiate the study of \emph{online contract design}. 

\subsection{Our Contribution: OMAC Model}

As a first step towards integrating the two fields, we formulate the \emph{Online Multi-Agent Contract (OMAC)} model. 
The OMAC model consists of sequential and uncertain agent arrivals that can conclude abruptly. We briefly describe our model, deferring details to Section~\ref{sec:preliminaries}.

\paragraph{Modeling Uncertainty.}
Our first design choice is how to approach uncertainty, where two standard approaches are the adversarial (worst-case) approach, and the Bayesian (stochastic) one.
We follow the classic work of Sleator and Tarjan~\cite{SleatorT85}, which introduces the \emph{competitive ratio under adversarial inputs} as the gold standard for online algorithms. The competitive ratio measures the relative performance of an online algorithm compared to an optimal offline one (see Section~\ref{sec:preliminaries} for a formal definition). 
The adversarial approach is particularly well-suited for situations with little to no data about the uncertain components of the setting,%
\footnote{For example, rapid technological change can invalidate previously learned distributional information. In light of the technological leaps in AI, there is much speculation but little to no data on the next generation of AI agents that will arrive on the market. Similarly, it is unknown how the worker population will evolve in the GenAI era, and this may render prior distributional knowledge obsolete.}  
and provides the canonical worst-case baseline for contracts under online arrivals.
Stochastic and stochastic-adversarial variants are also of interest
(see e.g.~\cite{samuelcahn1984comparison}), and there is forthcoming work on Bayesian online contracts~\cite{DuttingFG26}. 

\paragraph{A Combination of Settings: Online.}
We obtain our model by combining two fundamental settings: the adversarial online setting with preemption (e.g., \cite{BuchbinderFS15a}), and the multi-agent contract setting~\cite{dutting2023multi}. We now describe how these seemingly unrelated settings blend together naturally.

Buchbinder et al.~\cite{BuchbinderFS15a,BuchbinderFS19} describe an online hiring setting with arbitrarily-arriving agents.%
\footnote{They use different language, referring to candidate players hired by a sports team manager.} 
The team of agents~$S$ must maintain proficiency (as measured by a value function $f(S)$) at all times, since it is unknown when the arrivals will cease, or when the team will need to take action. For example, this is the case if the team's job is to respond to an emergency, recover from a system crash, or supply a surge in demand \cite{valentine2025flash}.

A central feature of Buchbinder et al.'s model is preemption --- the principal is allowed to replace an agent who was tentatively accepted as more promising agents arrive. While agent acceptance is tentative, agent dismissal is final and irrevocable, which is the main characteristic of online decision-making settings. 
Buchbinder et al.~show that preemption is key to circumventing intractability and achieving constant-factor guarantees, and there are many other such examples in the literature (see, e.g.~\cite{feldman2009online,RemovableKS}, and the multiple additional preemption related works surveyed in Section~\ref{sub:related}). Our model displays the same dichotomy: preemption is what allows meaningful guarantees to emerge, and without it we show an impossibility result (Theorem~\ref{theo: general negative}). 

\paragraph{A Combination of Settings: Contracts.}
In our model, after the online stage consisting of adversarial agent arrivals, tentative accepts, replacements, and irrevocable dismissals, the composition of the team is finalized. Since we model agents as rational and utility-maximizing, a game is then played: the principal designs and offers contracts to the team members, who in turn decide whether to work or not. A working agent incurs a cost and contributes additively to the principal's reward --- an additive instantiation of the multi-agent (offline) game formulated by D\"utting et al.~\cite{dutting2023multi}. 

As in the formulation of \cite{dutting2023multi}, the principal offers the agents \emph{linear} contracts, which are a fixed share of the reward eventually generated by the team. The fixed shares should sum up to at most~$1$ (the entire reward), or to less than $1$ in order to reward the principal. Intuitively, this directly connects the setting to the knapsack problem. For a formal connection to knapsack (a.k.a.~budgeted maximization), see Appendix~\ref{app:ks to omac}. 

The linear contracts can be designed according to a simple elegant formula of Babaioff et al.~\cite{BabaioffFNW12}, which induces an equilibrium in which all agents on the team are incentivized to work, while minimizing the principal's overall payment (equivalently, maximizing the principal's total utility). 
Thus, the main algorithmic challenge in OMAC is to maintain a competitive team during the online stage, while taking into account the overall contractual payment once the team is finalized.

\subsection{Our Contribution: Results and Techniques}

In Section~\ref{sec:rand-alg-additive} we design a randomized algorithm that achieves a competitive ratio of \(1/2\) for the OMAC model.
We complement our positive result by showing that no deterministic online algorithm achieves a constant competitive ratio. We also establish that randomization allows a competitive ratio of at most \(1/2\). The impossibilities are established through a direct adversarial construction (see Section~\ref{sec:negative}). 
Thus, our randomized algorithm achieves a tight, best-possible guarantee, and the case of additive rewards is fully resolved. 

Our tight positive result is obtained via a randomized strategy built around \emph{balance points} (see Section~\ref{sec:balance-points}). We introduce the structural concept of balance points to synchronize the agents' shares with the principal's utility as arrivals unfold. This technique departs sharply from prior online algorithms, where objectives are always nonnegative, and are often additive or submodular in the chosen set. Because we combine online algorithms with contract design, we must optimize the principal's incentive-induced utility, which can be negative and non-submodular even when the reward function has additive structure. 
As a result, existing online algorithms for settings with preemption fail in our model, as they might output solutions with arbitrarily negative utility. Our approach is explicitly designed to prevent such failures.

To place our approach in context, in Appendix~\ref{app:ks to omac} we formalize the connection between our problem and \emph{online budgeted maximization (knapsack) with preemption}, 
and show how a known randomized algorithm of~\cite{RemovableKS} for the latter problem can be adapted to OMAC. However, this algorithm is based on static thresholds, and achieves at most a $1/4$-competitive ratio for OMAC (compared to $1/2$ for online knapsack). In our approach, balance points serve as dynamic thresholds, and this flexibility is essential for attaining the optimal competitive ratio of $1/2$ for OMAC. Balance points thus supply the novel ingredient that optimally resolves the OMAC challenge. 

The connection to online budgeted maximization also reveals a barrier to going beyond additive, to submodular rewards: the problem of online monotone \emph{submodular} maximization subject to a budget constraint remains open in the literature (see~\cite{rawitz2021online}). We thus expect that solving OMAC with submodular rewards will be require new ideas and approaches. 

Beyond submodular, there are even more barriers: we can construct instances with XOS rewards (a superclass of submodular), where even randomized strategies fail to secure any constant-factor guarantee --- thus ruling out the possibility of meaningful performance in this richer domain (see Section~\ref{sec:negative}). 
Our construction exploits hidden complementarities among the agents. These cause an agent's value and share to crucially depend on the composition of the team, in a way that is inherently unpredictable in an online setting.

Together, our results show a range of online contract design problems, from manageable to intractable. 
Our resolution of the additive and XOS cases is a first step towards defining the boundary between tractable and hard reward functions in our model of online contracts.

\subsection{Related Work}
\label{sub:related}

\paragraph{Dynamic Contracts.} Temporal models of contracting have previously been explored by multiple works, all different from ours:
Several works on dynamic contracts have considered settings
where a single agent is engaged with repeatedly. E.g., in \cite{ezra2024contract}, the agent chooses actions sequentially, and in \cite{guruganesh2024contracting} the principal optimizes the contract against a no-regret learning agent.
Other works have focused on learning contracts over time, by interacting over many iterations with agents drawn from the same distribution, or alternatively with the same single agent. The principal's goal in the learning process is to adapt the incentives based on past outcomes and thus achieve low regret \cite{ho2014adaptive,ZhuEtAl22,BacchiocchiC0024,ChenCDH24,DuettingGSW23,GuruganeshSW23}. 
In \cite{collina2024repeated}, the principal induces a multi-round game by adaptively choosing the agent to contract with among $k$ candidates, again with the goal of no regret. 
Reinforcement learning approaches have also been considered \cite{ivanov2024principal,wu2024contractual,Bollini24}. Complementing this work, in concurrent forthcoming work by D\"utting et al.~\cite{DuttingFG26}, they consider Bayesian (rather than adversarial) online contract settings and achieve positive results as well as lower bounds.

\paragraph{Preemption and Consistency.}
Online maximization problems with preemption have been studied extensively and under many different names like \emph{free dismissal}, \emph{free disposal}, \emph{removable}, or \emph{recourse}.  
A closely related line of work develops \emph{consistent} algorithms, which bound the recourse under dynamic updates in problems like clustering and covering (e.g., \cite{Lattanzi2017,Fichtenberger2021,Gupta2020,Gupta2022,Lacki2021,Bhattachary2024}). In the terminology of D\"utting et al.~\cite{DuttingFLNSZ25}, our model is $1$-consistent, since each new arrival may cause at most one new hire (dismissed agents are never rehired). 
However, our algorithmic techniques and analysis are quite different as they are closely tied to the challenges of the contracting model. 
Closer to our model are works on constrained online submodular maximization. As discussed above, budget constraints are still open; the recent work of \cite{Yang024} studies cardinality constraints. 

\section{The OMAC Model}
\label{sec:preliminaries}

We introduce an online contract model, which can be viewed from two perspectives: First, as the \emph{online} version of the standard multi-agent contract setting~\cite{BabaioffFNW12,dutting2023multi}; second, as the \emph{strategic} version of the classic online decision problem with preemption~\cite{BuchbinderFS15a,ChanHJ+18}. 
We present our model beginning with its online aspects, then its strategic ones. 

\paragraph{Online Multi-Agent Contract (OMAC) Setting.}  

In the OMAC setting, a \emph{principal} interacts with an unknown set of \emph{agents} $N=[n]$, revealed sequentially in an arbitrary and unpredictable manner. The principal aims to hire a \emph{team}---a subset $S\subseteq N$ of agents. The team, once finalized, will be tasked with a project. Crucially, the principal does not know the total number of agents $n$ in advance. I.e., from the principal's point of view, the process may terminate after any iteration. The principal must ensure that at every time step, the quality of the current team is competitive. 

W.l.o.g., we call the $i$th agent to arrive agent $i$, and let $N_i=[i]$ be the first $i$ agents. At every step $i$, the principal chooses a tentative team $S_i$, subject to the restriction \(S_i \subseteq S_{i-1}\cup \{i\}\) (with the convention that $S_0=\emptyset$). Note that the team after step $i$ may or may not include the new agent $i$, and that previously chosen agents in $S_{i-1}$ can be dismissed. However, any agent \emph{not} tentatively chosen (i.e., any agent in $N_{i-1}\setminus S_{i-1}$) cannot be included in $S_i$. Thus, the principal cannot rehire dismissed agents. 

The team $S_n$ chosen in step~$n$ is final. I.e., after step $n$ is completed, the solution $S$ is set to $S_n$, and no additional dismissals are allowed. 

The above free dismissal assumption is precisely preemption, studied in many previous works on adversarial online algorithms. Since $|S_i\setminus S_{i-1}|\le 1$, it is also $1$-\emph{consistency}, studied e.g.~in \cite{DuttingFLNSZ25} and characteristic of the rapidly growing literature on \emph{stable} online algorithms (see Section~\ref{sub:related}).
Without free dismissal, a hired agent would necessarily remain on the team throughout the arrival of subsequent agents. We do not study this variant since, as shown in Section~\ref{app:dismiss-free}, it becomes impossible to guarantee any meaningful performance for the principal through an online strategy. 

\paragraph{The Principal's Reward Function.}  
The final team $S$ is delegated a project, whose outcome depends on the team's ensemble. As usual in multi-agent contract settings, this is captured by the \emph{reward function} $f: 2^N \rightarrow \mathbb{R}_{\geq 0}$, which is a set function  specifying the principal’s reward $f(S)$ from the project when carried out by team~$S$. As standard in the literature, we assume that $f$ is normalized ($f(\emptyset)=0$), and monotone ($f(T)\leq f(S)$ whenever $T\subseteq S\subseteq N$). This reflects that a project yields no reward given an empty team, and that extending a team can only increase the principal's reward. 
Our focus is on \emph{additive} set functions $f$, where $f(S) = \sum_{i \in S} f(\{i\})$ for all $S \subseteq N$, meaning each agent's contribution is independent of other agents' contributions.  
In Section~\ref{sec:negative} we also consider XOS set functions $f$, for which there exists a family of additive clauses $\{w^{(1)},\dots,w^{(m)}\}$ such that 
$f(S) = \max_{\ell \in [m]} w^{(\ell)}(S)$ for all $S \subseteq N$. In words, $f$ evaluates each team by the most favorable additive clause.

\paragraph{Agent Costs and Linear Contracts (Shares).}  

Each agent $i$ has a known \emph{cost} $c_i \in \mathbb{R}_{\geq 0}$ for working on the project. Thus, for $i$ to work as part of team $S$, the principal must pay $i$. In multi-agent contract settings, this is done by \emph{a priori} allocating to $i$ (before the project starts) an $\alpha_i$-fraction of the eventual reward from the project. This fraction $\alpha_i$ is known as a \emph{linear contract}~\cite{holmstrom1987aggregation}, and is also referred to as $i$'s \emph{share}~\cite{Aharoni0T25}. We write \(\alpha(S) = \sum_{i \in S} \alpha_i\) for the total share allocated to team \(S\), with the convention that \(\alpha(\emptyset) = 0\). 

A defining property of multi-agent contract settings is that once an agent is allocated a fraction of the future reward, this agent might be tempted to \emph{shirk}, allowing other agents to do the costly work while enjoying a fraction of the reward they generate. This is known in contract design as \emph{moral hazard}.

The agents' shares $(\alpha_i)_{i\in S}$ are designed to overcome moral hazard by making work an equilibrium, while simultaneously ensuring agents are well-compensated and protected in this dynamic setting, mirroring the robust participation incentives of our offline foundation~\cite{BabaioffFNW12}.
When this holds, we say that contract ${\bm \alpha}=(\alpha_i)_{i\in S}$ \emph{incentivizes} team $S$. There is a well-known closed formula for shares that incentivize team $S$, while maximizing the principal's remaining fraction of the project~\cite{BabaioffFNW12}:
\begin{equation}
    \label{eq: def alpha}
    \alpha_i := \frac{c_i}{f(S) - f(S \setminus \{i\})}.
\end{equation}
For every agent $i$, the fraction $\alpha_i$ increases as $i$'s cost $c_i$ increases, and somewhat counterintuitively, increases as $i$'s marginal contribution $f(S) - f(S \setminus \{i\})$ \emph{decreases} (when $i$'s work matters less, more compensation is needed to incentivize it). 
When \(f\) is additive, $\alpha_i$ simplifies to $c_i/f(\{i\})$.

\paragraph{The Design Problem and Competitive Ratio.}  

We can now describe the full setting: 
Upon the arrival of agent \(i\), the principal learns the cost \(c_i\), and can evaluate the reward function \(f\) on any subset of the arrived agents  $N_i$. Based on this information, the principal must immediately decide whether to hire agent \(i\), without knowledge of future arrivals (but with the ability to revoke earlier choices under free dismissal). 
When agent $i$ is hired, the principal commits to $i$'s share $\alpha_i$ of the eventual reward $f(S)$, where $\alpha_i$ is defined in Eq.~\eqref{eq: def alpha}. This guarantees the agent an incentive to work (despite incurring cost $c_i$) if they make it to the final team $S$. The share $\alpha_i$ is parameterized by $S$ until step $n$, when $S$ is finalized. Once an agent is dismissed, their share is set to zero. 

The players' utilities are as follows: The utility of each agent \(i \in S\) is \(\alpha_i f(S) - c_i\), the utility of any other agent is $0$, and the principal’s utility depends on the remaining share, and is given by 
$g(S):=\bigl(1 - \alpha(S)\bigr) f(S).$

Since the agents' shares $(\alpha_i)_{i\in S}$ are determined by $S$, the principal's design problem boils down to an online algorithm for selecting $S$ such that the principal's utility $g(S)$ is maximized. In online optimization, the performance of an online algorithm is typically evaluated using the \emph{competitive ratio} (CR), which compares the outcome of the online algorithm to that of an optimal offline algorithm that has complete knowledge of the input sequence in advance. 

Formally, an online algorithm $\Alg$ has a CR $\gamma \in [0, 1]$ if for every input sequence $I$, its output satisfies $\Alg(I) \geq \gamma \cdot \OPT(I)$, where $\OPT(I)$ denotes the value achieved by the optimal offline solution. When the algorithm is randomized and the adversary is oblivious (i.e., cannot adapt to the random choices), the \emph{expected competitive ratio} is a random variable defined similarly, requiring that $\mathbf{E}[\Alg(I)] \geq \gamma \cdot \mathrm{\OPT}(I)$ for all $I$. In our model, $\OPT(N_i)=\max_{S_i\subseteq N_i} g(S_i)$ for every $i\in[n]$.

\section{Toolbox: Balance Points}
\label{sec:balance-points}

In this section, we lay the groundwork for the randomized online algorithm presented in Section~\ref{sec:rand-alg-additive}.
In particular, we develop the notion of \emph{balance points}—a key concept that captures the trade-off between allocating additional shares to agents to incentivize work, and preserving the principal’s utility—forming the basis for both the design and analysis of our algorithm in Section~\ref{sec:rand-alg-additive}.
Formal proofs of the results in this section appear in Appendix \ref{app:bp-analysis}. 

\subsection{Agent Quality}

We introduce a quality measure for agents and teams, which captures economic efficiency: 
how much value is generated by the hired agents, relative to the total share that must be allocated to them in order to incentivize their work. 

\begin{definition}[Quality]
\label{def:quality-set}
    Consider an additive reward function $f$. For every team \(S \subseteq N\), its \emph{quality} $q(S)$ is the ratio between the reward it yields, and the total share allocated to the team members:
\begin{align}
\label{eq:quality-set}
    q(S) := \frac{f(S)}{\alpha(S)} = \frac{f(S)}{\sum_{i \in S} \frac{c_i}{f(\{i\})}};~~~q(\emptyset) := 0.
\end{align}
\end{definition}

Definition~\ref{def:quality-set} applied to a singleton team \(\{i\}\) gives the quality of agent $i$: \(q_i := \frac{f(\{i\})}{\alpha_i} = \frac{f^2(\{i\})}{c_i}\). 
Using~Eq.~\eqref{eq:quality-set} we can rewrite the principal's utility using the notion of quality: 
\begin{equation}
\label{eq: g}
    g(S) = (1 - \alpha(S)) \alpha(S) q(S).
\end{equation}

\paragraph{Notation.}

We partition the set of all agents \(N\)
into subsets of uniform quality, ordered from highest to lowest: \(N = Q_1 \cup \dots \cup Q_p\), where for every \(1 \leq x \leq p\), all agents in \(Q_x\) have the same quality, which is strictly greater than that of any agent in \(Q_y\) with \(x < y \leq p\). 
Each \(Q_x\) is referred to as a \emph{quality group}, and we denote its uniform quality by \(q^x\); thus, \(q^x = q_i\) for every agent \(i \in Q_x\) and \(q^x = q(Q_x)\). 

Since $f$ is additive, each agent $i$ has a fixed value $\alpha_i$ that can be determined upon arrival. Consequently, an agent’s quality and associated quality group remain unchanged throughout the process. This stability is a defining property of additive reward functions, and we utilize it in the remainder of this section.

Consider a hired set of agents $S \subseteq N$.
For every \(1 \leq x \leq p\), let $S^x := S \cap \bigcup_{1 \leq y \leq x} Q_y$ be the subset of \(S\) consisting of agents from the first \(x\) quality groups, with \(S^0 := \emptyset\).
 
Let $\bar{S}$ denote the ordering of agents in $S$ sorted in decreasing order of quality. In case of ties in quality, agents with smaller shares appear earlier in the ordering. 
Let $\bar{N}$ denote the entire set of agents, ordered in this way.
For every $i \in [n]$, let 
$\bar{N}_i=(\bar{N})_i$ denote the agents in $\bar{N}$ appearing up to agent $i$, including $i$ itself.
Now we can define $\bar{S}_i:=S \cap \bar{N}_i$. 
In words, these are the agents in $\bar{S}$ that are ordered before agent~$i$.%
\footnote{For example, if the agents arrive in increasing order of quality, and $S=N$, then $\bar{S}_n=\{n\}$ since all other agents succeed agent $n$ in the ordering.}

\subsection{Balance Point Definition}

In this section we define balance points.
Intuitively, a balance point is defined per quality group $Q_x$ and per team $S$ (which determines the set $S^{x-1}$ of higher-quality agents who are allocated shares).
It quantifies the maximum additional share that can be allocated to agents of quality $Q_x$ before the principal’s utility begins to decline. 
Once the balance point is exceeded by the allocated shares, adding more agents from $Q_x$ or from lower-quality groups can only decrease the principal’s utility. Therefore, identifying and maintaining allocations near the balance point is key to achieving optimal utility when selecting agents in an online manner.

\begin{definition}[Balance point]
    \label{def: bp}
    Given a hired set of agents \(S \subseteq N\) and a quality group \(Q_x\), 
    the \emph{balance point} \(b(S, x) \in (0,1)\) of \(Q_x\) relative to \(S\) is a share satisfying the following property. For
    any two subsets \(S', S'' \subseteq Q_x\) such that 
    %\[
    $\big|\alpha(S^{x-1} \cup S') - b(S, x)\big| \leq \big|\alpha(S^{x-1} \cup S'') - b(S, x)\big|$,
    %\]
    it holds that
    \[
    g(S^{x-1} \cup S') \geq g(S^{x-1} \cup S'').
    \] 
\end{definition}

\begin{figure}
    \centering
    \includegraphics[width=1\linewidth]{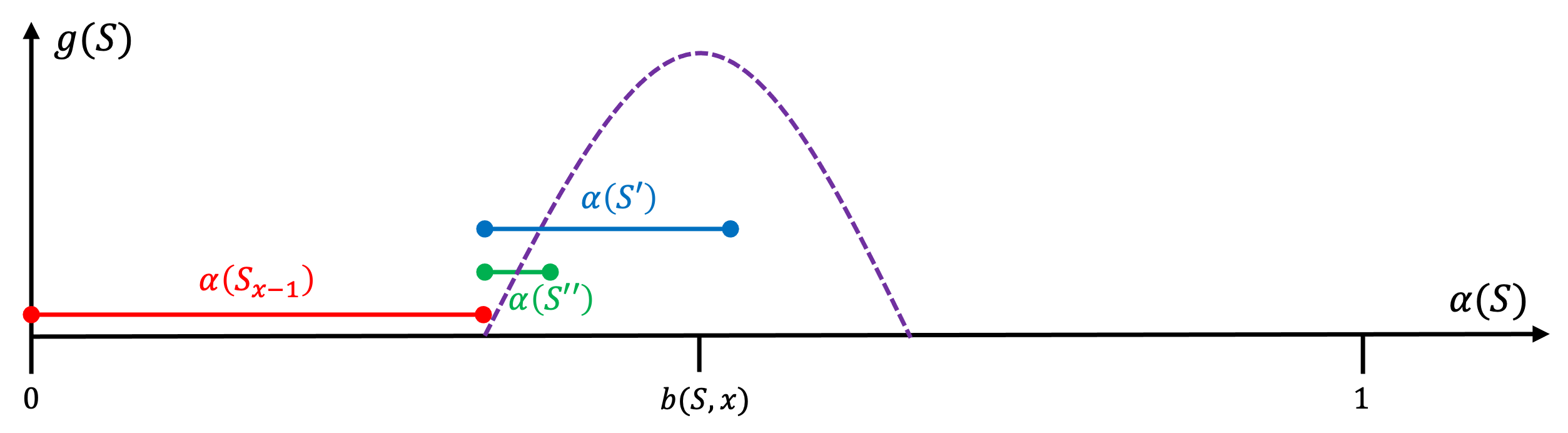}
    \caption{{\bf Geometric intuition for balance points.} 
    For a team $S$ and quality level $x$, the balance point $b(S,x)$ is marked on the $x$-axis, which represents allocated share. Recall that $S^{x-1}$ denotes the subset of agents of quality at least $q^{x-1}$. Consider two subsets $S', S'' \subseteq Q_x$ of quality $q^x$. The horizontal segments show the shares allocated to $S^{x-1} \cup S'$ (red and blue) and $S^{x-1} \cup S''$ (red and green). 
    Since the share of $S^{x-1} \cup S'$ is closer to the balance point than that of $S^{x-1} \cup S''$, by Definition~\ref{def: bp} this implies $g(S^{x-1} \cup S') \geq g(S^{x-1} \cup S'')$. The $y$-axis represents the principal's utility, and the dashed parabola depicts the utility as a function of shares allocated to agents of quality $q^x$. The principal's utility is maximized at the balance point, and indeed $g(S^{x-1} \cup S') \geq g(S^{x-1} \cup S'')$.
}
    \label{fig: Balance Points}
\end{figure}

Figure \ref{fig: Balance Points} illustrates this definition geometrically.
In Lemma~\ref{lem: balance point} we show that such a balance point exists for every quality group and every set of hired agents. 

To better understand the idea of balance points, consider a fixed quality group, \(Q_x\). 
Similarly to Eq.~\eqref{eq: g}, we can use the additivity of $f$ to write the principal's utility as
\begin{align*}
g(S) = \big(1 - \alpha(S)\big) \Big( \alpha(S \cap Q_x) \cdot q^x + \alpha(S \setminus Q_x) \cdot q(S \setminus Q_x) \Big).
\end{align*}
Suppose that the agents hired from other quality groups \(S \setminus Q_x\) are fixed. As the principal allocates more shares to agents in \(Q_x\), the utility may increase at first but is guaranteed to decline eventually.
This trade-off arises because, by the monotonicity of \(f\), hiring additional agents from \(Q_x\) (increasing \(\alpha(S \cap Q_x)\)) increases the reward \(f(S \cap Q_x) = \alpha(S \cap Q_x) q^x\) contributed by this quality group, thereby increasing the total reward.  
At the same time, the principal’s share of the reward \(1 - \alpha(S \setminus Q_x) - \alpha(S \cap Q_x)\) is reduced. 
Thus, the marginal effect of hiring agents in \(Q_x\) on \(g(S)\) follows a concave trajectory as a function of \(\alpha(S \cap Q_x)\), forming a downward-opening parabola with the balance point marking its peak---see Figure~\ref{fig: Balance Points}.

\paragraph{Characterizing Balance Points.}

We now give an alternative characterization of balance points. 
To derive this characterization, it is convenient to introduce an \emph{auxiliary agent}: a hypothetical agent of quality $q^x$, whose share is set according to Eq.~\eqref{eq: def alpha} by setting a corresponding cost.
The auxiliary agent whose share maximizes utility relative to $S^{x-1}$ yields the balance point $b(S,x)$:

\begin{observation}
\label{obs: aux agent bp}
Let $r$ be an auxiliary agent with $q_r = q^x$ such that $g(S^{x-1} \cup \{r\}) \geq g(S^{x-1} \cup \{v\})$ for any other auxiliary agent $v$ with $q_v = q^x$. Then, $\alpha(S^{x-1} \cup \{r\}) = b(S,x)$.
\end{observation}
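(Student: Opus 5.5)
Write $A := \alpha(S^{x-1})$ and $F := f(S^{x-1}) = A\, q(S^{x-1})$. For an auxiliary agent $v$ of quality $q^x$ with share $t = \alpha_v \ge 0$, additivity gives $f(\{v\}) = t q^x$. Hence
\[
h(t) := g(S^{x-1} \cup \{v\}) = (1 - A - t)\,(F + t q^x).
\]
Since the cost $c_v$ can be chosen freely, $t$ ranges over all nonnegative reals. So the auxiliary agent $r$ of the statement is precisely a maximizer of $h$ over $t \ge 0$, and we must show that $A + \alpha_r = b(S,x)$.

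The plan is to compute this maximizer explicitly and match it with the balance point. $h$ is a downward-opening quadratic in $t$, and $h'(t) = q^x(1 - A - t) - (F + t q^x)$. Its unconstrained maximizer is therefore
\[
t^* = \frac{(1-A)q^x - F}{2q^x},
\qquad\text{so}\qquad
A + t^* = \frac{1}{2}\Bigl(1 + A - \frac{F}{q^x}\Bigr).
\]
Next, for any $S' \subseteq Q_x$ we have $\alpha(S^{x-1} \cup S') = A + \alpha(S')$ and $f(S') = \alpha(S') q^x$. Thus $g(S^{x-1} \cup S') = h(\alpha(S'))$: the utility of any real team extending $S^{x-1}$ by agents of $Q_x$ is the same parabola evaluated at the total added share. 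Since $h$ is symmetric about $t^*$ and decreasing in $|t - t^*|$, the point $b := A + t^*$ satisfies Definition~\ref{def: bp}. Indeed, $|\alpha(S^{x-1}\cup S') - b| \le |\alpha(S^{x-1}\cup S'') - b|$ is exactly $|\alpha(S') - t^*| \le |\alpha(S'') - t^*|$, which forces $h(\alpha(S')) \ge h(\alpha(S''))$. This is presumably also the content of Lemma~\ref{lem: balance point}, whose construction I would reuse to identify $b(S,x) = A + t^*$. As a side remark, a balance point is essentially unique whenever the parabola is genuinely quadratic and enough subset shares are available. The cleanest route, however, is to take $b(S,x)$ to be the point produced by Lemma~\ref{lem: balance point}, namely the vertex $A + t^*$.

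The main obstacle is the boundary case $t^* \le 0$, i.e.\ $F \ge (1-A)q^x$. Then the maximum of $h$ over $t \ge 0$ is attained at $t = 0$, a degenerate zero-share auxiliary agent (cost $0$), giving $\alpha(S^{x-1}\cup\{r\}) = A$. The balance point must then be taken as $\max(A, A + t^*)$, clipped into $(0,1)$. I would check that the definition of $b(S,x)$ in Lemma~\ref{lem: balance point} handles this clipping the same way. When $t^* < 0$, all feasible shares $A + \alpha(S')$ are at least $A$, so distance to $A$ orders them exactly as $h$ does, because $h$ is decreasing on $[0,\infty)$; hence $A$ is a valid balance point, consistent with $r$. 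With the interior case $t^* > 0$ handled by the vertex computation and this boundary case settled, the observation follows.
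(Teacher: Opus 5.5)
Your central step is correct, and it is the reasoning the paper relies on. For every $S'\subseteq Q_x$ one has $g(S^{x-1}\cup S')=h(\alpha(S'))$ with $h(t)=h(t^*)-q^x(t-t^*)^2$, so the vertex $A+t^*$ satisfies Definition~\ref{def: bp}. The paper gives no standalone proof of this Observation; it rests on the parabola picture before Definition~\ref{def: bp}, and the same vertex computation appears in its proof of Lemma~\ref{lem: balance point}.

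The gap is in how you conclude the equality $\alpha(S^{x-1}\cup\{r\})=b(S,x)$. Identifying $b(S,x)$ through Lemma~\ref{lem: balance point} is circular. The paper proves that lemma by first invoking this very Observation (``Using Observation~\ref{obs: aux agent bp}, we focus our attention on finding the auxiliary agent $r$'').

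Uniqueness cannot rescue the step either. Every constraint in Definition~\ref{def: bp} with $h(\alpha(S''))>h(\alpha(S'))$ only forces $b$ to lie strictly on the $A+\alpha(S'')$ side of the midpoint of the two shares. There are finitely many such open half-line conditions, and the vertex satisfies each one strictly. Hence every point of some open interval around the vertex is also a balance point. A balance point is therefore never unique when the vertex lies in $(0,1)$, however many agents $Q_x$ contains, and your ``essentially unique'' remark is false.

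The honest formulation is this: Definition~\ref{def: bp} does not pin down $b(S,x)$, and the Observation is the convention by which the paper selects it. What must be proved is that this selection is admissible, and that is exactly your vertex verification. Drop the appeal to the Lemma and the uniqueness claim, and state it that way.

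On the boundary case you are more careful than the paper, but the consistency check you plan would fail. Lemma~\ref{lem: balance point} does not clip; it returns the unconstrained vertex $A+t^*$. That value lies below $A$ whenever $F>(1-A)q^x$, and it can even be negative (e.g.\ $A=1/2$ and $q(S^{x-1})>3q^x$), violating $b(S,x)\in(0,1)$. So with auxiliary shares restricted to $t\ge 0$, your $r$ yields $A$, while the Lemma's formula gives $A+t^*$.

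Both points satisfy the ordering property of Definition~\ref{def: bp} in this regime: $A+t^*$ by your vertex argument, and $A$ by your monotonicity argument. So nothing breaks, but this is the same non-uniqueness again. To reproduce the paper's formula you must let $\alpha_r$ range over all reals, as the paper's unconstrained maximization implicitly does.
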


The concept of auxiliary agent provides a simple and intuitive way to derive the balance point of each quality group, as it represents the optimal share allocation for agents of a given quality. 
This allows us to express \(b(S,x)\) using a concise algebraic formula. 
\begin{lemma}
    \label{lem: balance point}
    For every hired set of agents \(S \subseteq N\) and quality group \(Q_x\),
    \begin{equation}
    \label{eq: bal_point}
    b(S, x) = \frac{1}{2} + \frac{1}{2}\left(1 - \frac{q(S^{x-1} )}{q^x}\right)\alpha(S^{x-1} ).
    \end{equation}
\end{lemma}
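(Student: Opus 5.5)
The plan is to reduce the computation of $b(S,x)$ to a one-variable concave maximization. Write $A := \alpha(S^{x-1})$ and $F := f(S^{x-1}) = A\,q(S^{x-1})$. By additivity, for any $S' \subseteq Q_x$ we have $f(S^{x-1}\cup S') = F + \alpha(S')\,q^x$, since every agent of $Q_x$ has reward equal to its share times $q^x$. Setting $t := \alpha(S^{x-1}\cup S') = A + \alpha(S')$, the principal's utility becomes a function of $t$ alone:
\[
g(S^{x-1}\cup S') = h(t) := (1-t)\bigl(F + (t-A)\,q^x\bigr).
\]
This already explains why a single balance point controls all subsets $S'\subseteq Q_x$. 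The same formula covers $S^{x-1}\cup\{r\}$ for an auxiliary agent $r$ of quality $q^x$, whose share can be any nonnegative value.

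The next step is to maximize $h$. Since $q^x>0$, $h$ is a downward-opening parabola in $t$. Its roots are $t_1 = 1$ and $t_2 = A - F/q^x = A\bigl(1 - q(S^{x-1})/q^x\bigr)$. The vertex is the midpoint of the roots:
\[
t^\ast = \tfrac12 + \tfrac12\Bigl(1-\tfrac{q(S^{x-1})}{q^x}\Bigr)A,
\]
which is exactly the claimed formula. A parabola is symmetric about its vertex and decreases in $|t-t^\ast|$. Hence whenever $|\alpha(S^{x-1}\cup S') - t^\ast| \le |\alpha(S^{x-1}\cup S'') - t^\ast|$, we get $h$ at the first point $\ge h$ at the second. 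So $t^\ast$ satisfies Definition~\ref{def: bp}. Equivalently, by Observation~\ref{obs: aux agent bp}, the optimal auxiliary agent has share $t^\ast - A$, provided this is nonnegative, so that $\alpha(S^{x-1}\cup\{r\}) = t^\ast$.

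I will treat the case $x=1$ separately. There $S^0=\emptyset$ and $A=0$, so $F=0$ and $q(\emptyset)=0$, and the formula gives $b=1/2$, consistent with $h(t)=(1-t)\,t\,q^1$.

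The main obstacle is checking that $t^\ast\in(0,1)$, as the definition requires, and that the auxiliary-agent share $t^\ast - A$ is nonnegative. For the range, note that $q(S^{x-1}) \ge q^x$, since all agents in $S^{x-1}$ have strictly higher quality (when $S^{x-1}\neq\emptyset$). This gives $t_2\le 0$, so $t^\ast \le 1/2 < 1$. For the lower bound I will try to show $t^\ast>0$, i.e.\ $A\bigl(q(S^{x-1})/q^x - 1\bigr) < 1$. This need not hold for arbitrary $S$. If it fails, the parabola's vertex lies at a share at most $0$. In that case I would argue either that the symmetric-distance property still holds on the feasible range $t\ge A$, where only the monotone-decreasing branch is relevant, or that the statement implicitly assumes the regime where the vertex is meaningful. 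I would first try to confirm that the definition's comparison only ever involves $t\ge A$. The monotonicity of $|t-t^\ast|$ there then carries the argument regardless of the sign of $t^\ast$.
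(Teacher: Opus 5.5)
Your proposal is correct and is essentially the paper's argument. Both reduce to maximizing the concave quadratic $g(S^{x-1}\cup\cdot)$ in the share given to $Q_x$: the paper expands in the auxiliary share $\alpha_r$ and optimizes, while you read the vertex off as the midpoint of the roots $1$ and $\alpha(S^{x-1})\bigl(1-q(S^{x-1})/q^x\bigr)$.

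Your direct check of Definition~\ref{def: bp} via $h(t)=h(t^\ast)-q^x(t-t^\ast)^2$ is tidier than the paper's appeal to the unproved Observation~\ref{obs: aux agent bp}, and because it holds for all real $t$, no restriction to $t\ge\alpha(S^{x-1})$ is needed. Your caveats are genuine and the paper's proof passes over them: $t^\ast-\alpha(S^{x-1})$ can be negative, and $t^\ast$ can be $\le 0$, so the range $(0,1)$ asserted in the definition is not guaranteed.
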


\begin{corollary}
\label{rem: bp1-main}
For every \(S \subseteq N\), the balance point of the first quality group is
$b(S,1) = \frac{1}{2}$. 
\end{corollary}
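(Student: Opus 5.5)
The plan is to specialize Lemma~\ref{lem: balance point} to $x=1$. By the notation of the previous subsection, $S^{0}=\emptyset$ for every team $S$. Hence $S^{x-1}=S^{0}=\emptyset$, and by the conventions $\alpha(\emptyset)=0$ and $q(\emptyset)=0$ from Definition~\ref{def:quality-set}, both quantities entering Eq.~\eqref{eq: bal_point} vanish.

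Substituting into Eq.~\eqref{eq: bal_point} gives
\[
b(S,1)=\frac{1}{2}+\frac{1}{2}\left(1-\frac{0}{q^1}\right)\cdot 0=\frac{1}{2}.
\]
The expression is well defined because $q^1>0$; this requires $f(\{i\})>0$ for the agents under consideration. The edge case of a zero-cost agent, whose quality is infinite, can be excluded or handled as a limit.

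As a sanity check, and as an alternative route that avoids relying on the formula, I would also argue directly via Observation~\ref{obs: aux agent bp}. With $S^{0}=\emptyset$, an auxiliary agent $r$ of quality $q^1$ and share $a$ yields utility $g(\{r\})=(1-a)\,a\,q^1$. This concave quadratic in $a$ is maximized at $a=1/2$, so $\alpha(\{r\})=1/2=b(S,1)$.

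The only potential obstacle is confirming that the empty-set conventions are the ones used in Lemma~\ref{lem: balance point}. Once that is checked, the remainder is immediate substitution.
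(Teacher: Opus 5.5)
Your proposal is correct and matches the paper, which states this corollary without a separate proof: it follows by substituting $x=1$, $S^{0}=\emptyset$ and $\alpha(\emptyset)=0$ into Eq.~\eqref{eq: bal_point}. Your concern about the empty-set conventions does not arise, since the factor $\alpha(S^{0})=0$ multiplies the entire second term, so the value assigned to $q(\emptyset)$ plays no role.
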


\subsection{Utilizing Balance Points}

We now examine key properties of balance points that will be used in the analysis of our algorithm for additive OMAC (see Section~\ref{sec:rand-alg-additive}). 

\paragraph{Crossing a Balance Point.}
Once a balance point \(b(S, x)\) is reached by agents from \(S^x\), adding agents of equal or lower quality can only reduce the principal's utility. 
Specifically, if an agent in \(Q_x\) is the first in the ordering to cause the cumulative share \(\alpha(S^x)\) to cross \(b(S,x)\), then any additional agent in \(Q_x\) or lower quality groups would decrease \(g(S)\). 
This idea is formalized in the next result.

\begin{lemma}
    \label{lem: one crossing bp}
    Let $i$ be the first agent \(i \in Q_x\) in $S \subseteq N$ such that \(\alpha(\bar{S}_i) \geq b(S, x)\). Then, $g(\bar{S}_i) \geq g(S)$.
\end{lemma}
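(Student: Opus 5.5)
The plan is to treat the agents of $S$ that come after $i$ in the ordering as one block added on top of $\bar{S}_i$, and to show that adding this block cannot increase $g$. The balance-point condition $\alpha(\bar{S}_i)\ge b(S,x)$ will turn out to be exactly the condition that the resulting one-dimensional concave bound is non-increasing from the start.

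\emph{Decomposition.} Write $S=\bar{S}_i\cup R$ with $R:=S\setminus\bar{S}_i$. Since $i\in Q_x$ and the ordering is by decreasing quality, $S^{x-1}\subseteq\bar{S}_i$. Every agent of $R$ lies in $Q_x$ (after $i$) or in a lower quality group, so each has quality at most $q^x$. Set
\[
A:=\alpha(\bar{S}_i),\qquad F:=f(\bar{S}_i),\qquad \beta:=\alpha(R)\ge 0 .
\]
By additivity and $f(\{j\})=\alpha_j q_j$, we get $f(R)=\sum_{j\in R}\alpha_j q_j\le \beta q^x$. Then
\[
g(S)=(1-A-\beta)\bigl(F+f(R)\bigr),\qquad g(\bar{S}_i)=(1-A)F .
\]

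\emph{Case $1-A-\beta<0$.} Since $f(R)\ge 0$ and the prefactor is negative,
\[
g(S)\le(1-A-\beta)F\le(1-A)F=g(\bar{S}_i),
\]
using $\beta\ge0$ and $F\ge0$.

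\emph{Case $1-A-\beta\ge 0$.} Here $g(S)\le h(\beta)$, where $h(t):=(1-A-t)(F+tq^x)$. The function $h$ is a downward-opening parabola in $t$. It therefore suffices to show $h'(0)=(1-A)q^x-F\le 0$: then $h$ is non-increasing on $t\ge0$, so $g(S)\le h(\beta)\le h(0)=g(\bar{S}_i)$.

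\emph{Linking $h'(0)\le 0$ to the balance point.} This is the step that needs care. Additivity splits $\bar{S}_i$ into $S^{x-1}$ and the $Q_x$-agents up to $i$, which gives
\[
F=\alpha(S^{x-1})\,q(S^{x-1})+\bigl(A-\alpha(S^{x-1})\bigr)q^x .
\]
Substituting, the condition $(1-A)q^x\le F$ rearranges to
\[
2A\ \ge\ 1+\alpha(S^{x-1})\Bigl(1-\frac{q(S^{x-1})}{q^x}\Bigr),
\]
that is, $A\ge b(S,x)$ by the formula of Lemma~\ref{lem: balance point}. This holds by the choice of $i$.

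I expect the main obstacle to be bookkeeping rather than ideas. One must check that the formula of Lemma~\ref{lem: balance point} is used with exactly the set $S^{x-1}\subseteq\bar{S}_i$. One must also handle $S^{x-1}=\emptyset$, where $q(\emptyset)=0$ gives $b=1/2$ and the computation above still goes through. Finally, one must separate the sign cases of $1-A-\beta$, since bounding $f(R)$ by $\beta q^x$ is only valid when the prefactor is nonnegative.
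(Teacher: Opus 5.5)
Your proof is correct, but it is organized differently from the paper's.

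The paper works agent by agent. It first proves Lemma~\ref{app: bp order}, which says balance points are non-increasing across quality groups: $b(S,x)\ge b(S,y)$ for $y\ge x$. Hence every later prefix $\bar{S}_{j-1}$, with $j\in Q_y$, already sits at or beyond $b(S,y)$. The distance form of Definition~\ref{def: bp} then gives $g(\bar{S}_{j-1})\ge g(\bar{S}_j)$, and chaining these inequalities yields $g(\bar{S}_i)\ge g(S)$.

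You instead handle the whole tail $R$ in one step. You dominate it by a single block of quality $q^x$ and share $\alpha(R)$, in effect an auxiliary agent of group $x$. This uses only the balance point of group $x$ and the closed form of Lemma~\ref{lem: balance point}, and it bypasses the monotonicity lemma entirely.

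What each route buys:
\begin{itemize}
\item Your route costs you the sign split on $1-A-\beta$, because the bound $f(R)\le \beta q^x$ may be multiplied by a negative factor. The paper avoids this: each of its comparisons is on the exact quadratic in the share of a single group, and that property holds for all shares regardless of sign.
\item In return, your argument is self-contained and never uses how $R$ is ordered or which groups it meets. It therefore shows something slightly more general: appending to $\bar{S}_i$ any set of agents of quality at most $q^x$ cannot increase $g$.
\item The paper's chain gives a different intermediate fact: $g$ is non-increasing along every prefix $\bar{S}_j$ with $j\ge i$.
\end{itemize}
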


This result implies that in an optimal hiring strategy, no agent appearing after such a crossing point (in the quality order) would be hired. This property is a cornerstone of our approach, as it ensures that by avoiding balance point crossings, we risk losing at most one potentially valuable agent at any point in time.  
For example, if an optimal algorithm hires an agent from the first quality group whose share size is $\geq \tfrac{1}{2}$, it would not hire any additional agents. This is by applying Corollary~\ref{rem: bp1-main} and Lemma~\ref{lem: one crossing bp}.

\paragraph{Tracking Balance Points Over Time.}

We now show that the role of balance points as natural stopping criteria remains stable throughout the entire online hiring process,
by examining how balance points evolve as agents are hired sequentially. 
Observe that as additional agents are hired, the cumulative shares within each quality group grow, causing the corresponding balance points \(b(S, x)\) to decrease. 
This monotonic behavior is crucial, as it ensures that the reasoning about crossing a balance point in Lemma~\ref{lem: one crossing bp}, established in the offline context, remains valid in the online process. 
In particular, once an agent crosses a balance point, it will continue to do so as more agents are hired. 
The next lemma formalizes this property.

\begin{lemma}
    \label{lem: comp bp 1}
    Let \(S, S' \subseteq N\) be two sets of hired agents, and let \(1 \leq z \leq p\) be an index such that for every quality group \(Q_x\) with \(1 \leq x \leq z\), the shares satisfy \(\alpha(S \cap Q_x) \leq \alpha(S' \cap Q_x)\). 
    Then, for all \(1 \leq x \leq z\), it holds that \(b(S, x) \geq b(S', x)\).
\end{lemma}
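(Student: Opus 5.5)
Our plan is to rewrite the balance point formula of Lemma~\ref{lem: balance point} so that its dependence on $S$ is explicit through the per-group shares, and then check that it is non-increasing in each of them. Since $f$ is additive, $q(S^{x-1})\alpha(S^{x-1}) = f(S^{x-1}) = \sum_{y<x} q^y \alpha(S\cap Q_y)$ and $\alpha(S^{x-1}) = \sum_{y<x}\alpha(S\cap Q_y)$. Substituting into Eq.~\eqref{eq: bal_point} gives
\begin{equation*}
b(S,x) = \frac{1}{2} + \frac{1}{2}\sum_{y=1}^{x-1}\Bigl(1 - \frac{q^y}{q^x}\Bigr)\alpha(S\cap Q_y).
\end{equation*}
When $S^{x-1}=\emptyset$ this reduces to $b(S,x)=\tfrac12$, consistent with Corollary~\ref{rem: bp1-main}. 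The convention $q(\emptyset)=0$ causes no trouble here, since $q$ always appears multiplied by $\alpha(\emptyset)=0$.

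The key observation is that the quality groups are ordered by strictly decreasing quality, so $q^y > q^x$ for every $y<x$. Hence each coefficient $\tfrac12(1-q^y/q^x)$ is strictly negative. The balance point $b(S,x)$ is therefore an affine function of the vector $(\alpha(S\cap Q_y))_{y<x}$ with nonpositive coefficients, so it is non-increasing in each coordinate.

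Now fix $x\le z$. By hypothesis, $\alpha(S\cap Q_y)\le\alpha(S'\cap Q_y)$ for all $y\le z$, and in particular for all $y<x$. Comparing the two sums term by term yields $b(S,x)\ge b(S',x)$.

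We expect the only real care point to be the rewriting step: recognizing that $q(S^{x-1})\alpha(S^{x-1})$ equals $f(S^{x-1})$ and decomposes additively over the quality groups. Once that is in place, the argument is immediate. Note that groups $Q_y$ with $y\ge x$ do not enter $b(\cdot,x)$ at all, which is why the hypothesis only needs to hold up to $z$.
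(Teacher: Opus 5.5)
Your proof is correct and is essentially the paper's argument. The paper adds, group by group, an auxiliary agent of quality $q^x$ carrying the share gap between $S'$ and $S$, and computes that this shifts each later balance point by $\tfrac12\alpha_r(1-q^x/q^y)\le 0$; that shift is exactly the coefficient you read off all at once from your expanded affine formula. Your one-shot version is cleaner: it makes explicit that $b(\cdot,x)$ depends only on the per-group shares $\alpha(S\cap Q_y)$ for $y<x$, and it avoids the sign slips in the paper's write-up, where the gap is written as $\alpha(S\cap Q_x)-\alpha(S'\cap Q_x)$ and yet claimed to be nonnegative.
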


Intuitively, Lemma~\ref{lem: comp bp 1} confirms that increasing the total share distributed up to some quality group only reduces the balance points for all lower-quality groups, making future additions less beneficial. 

\paragraph{Comparing Hired Agent Sets Based on Balance Points.}
The notion of balance points allows us to compare different sets of hired agents in terms of the total utility they induce for the principal. 
In particular, if one set allocates shares more effectively across quality groups while staying within the corresponding balance points, its total utility is guaranteed to be higher.

\begin{lemma}
    \label{lem: comp bp 2}
    Let \(S, S' \subseteq N\) be two sets of hired agents. 
    Suppose that for every quality group \(Q_x\), \(1 \leq x \leq p\), the shares satisfies
    $\alpha(S \cap Q_x) \leq \alpha(S' \cap Q_x)$ and $\alpha(S^x) \leq \alpha((S')^x) \leq b(S', x)$,
    then it holds that \(g(S) \leq g(S')\).
\end{lemma}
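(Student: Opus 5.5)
We prove Lemma~\ref{lem: comp bp 2} by induction on the quality index $x$. The claim we carry through the induction is
\[
g(S^x)\le g((S')^x)\quad\text{for every } 0\le x\le p .
\]
The base case $x=0$ is trivial, since both sides equal $g(\emptyset)=0$. The case $x=p$ is the lemma. For the inductive step, fix $x$ and introduce an intermediate team
\[
T:=(S')^{x-1}\cup (S\cap Q_x).
\]
This team uses the higher-quality agents of $S'$ together with the $Q_x$-agents of $S$. We then compare $g(S^x)\le g(T)\le g((S')^x)$.

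For the second inequality $g(T)\le g((S')^x)$, we use the balance point $b(S',x)$ relative to $S'$. Note that $(S')^{x-1}$ is exactly the prefix used in Definition~\ref{def: bp} for $S'$. The hypothesis $\alpha(S\cap Q_x)\le\alpha(S'\cap Q_x)$ gives
\[
\alpha(T)\le\alpha((S')^x)\le b(S',x).
\]
So both $\alpha(T)$ and $\alpha((S')^x)$ lie on the left of the balance point, with $(S')^x$ closer to it. Definition~\ref{def: bp}, applied with $S''=S\cap Q_x$ and $S'\cap Q_x$ in the role of the closer set, yields $g(T)\le g((S')^x)$. The definition quantifies over arbitrary subsets of $Q_x$, so it applies even though $S\cap Q_x$ need not be a subset of $S'$. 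Existence of the balance point is guaranteed by Lemma~\ref{lem: balance point}.

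For the first inequality $g(S^x)\le g(T)$, both teams contain the same $Q_x$-part, $A:=S\cap Q_x$, with share $a=\alpha(A)$. They differ only in their higher-quality prefixes. Write $\beta=\alpha(S^{x-1})$, $\beta'=\alpha((S')^{x-1})$, $F=f(S^{x-1})$ and $F'=f((S')^{x-1})$. By additivity,
\[
g(S^x)=(1-\beta-a)(F+aq^x),\qquad g(T)=(1-\beta'-a)(F'+aq^x).
\]
The induction hypothesis gives $(1-\beta)F\le(1-\beta')F'$, and the hypothesis of the lemma gives $\beta\le\beta'$. Expanding, the difference $g(T)-g(S^x)$ equals
\[
\big[(1-\beta')F'-(1-\beta)F\big]-a(F'-F)-aq^x(\beta'-\beta).
\]
The first bracket is nonnegative, so we must control the remaining negative terms.

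This is the main obstacle. The natural route is to exploit that every agent in the prefixes has quality at least $q^x$, together with the condition $\alpha((S')^x)\le b(S',x)$. By Lemma~\ref{lem: balance point}, the latter means that the marginal utility of adding share of quality $q^x$ at $(S')^{x-1}$ is nonnegative up to $\beta'+a$. Concretely, the derivative of $(1-\beta'-t)(F'+tq^x)$ in $t$ is nonnegative on $[0,\,b(S',x)-\beta']$.

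To use this, I would avoid the direct algebra above and instead pass through one more intermediate team, $S^{x-1}\cup A\cup B$. Here $B$ is a fictitious block of auxiliary agents of quality $q^x$, in the spirit of Observation~\ref{obs: aux agent bp}, with share $\beta'-\beta$. The two steps are:
\begin{itemize}
\item Adding $B$ does not decrease utility. This should follow from concavity and the position of the balance point: the whole share stays below $b(S',x)\le b(S,x)$, where the inequality comes from Lemma~\ref{lem: comp bp 1} applied to the first $x-1$ groups.
\item Replacing $S^{x-1}\cup B$ by $(S')^{x-1}$ does not decrease utility. Both have the same total share $\beta'$, and $(S')^{x-1}$ has reward at least $F+(\beta'-\beta)q^x$. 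I expect this reward bound to follow from the induction hypothesis, since $(1-\beta)F\le(1-\beta')F'$ forces $F'$ to be large enough.
\end{itemize}
The second step is the part I am least sure of; if it fails, I would strengthen the induction hypothesis to compare rewards at equal total share.
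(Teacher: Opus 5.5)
\textbf{Gap.} The step you flagged (Step~2) is genuinely unestablished, and the justification you propose for it does not work. The inductive hypothesis $(1-\beta)F\le(1-\beta')F'$ does not imply $F'\ge F+(\beta'-\beta)q^x$. Take $F'$ as small as the hypothesis allows. For $\beta'>\beta$, the bound you need then reduces to $F\ge(1-\beta')q^x$, which typically fails. For example, if $S^{x-1}=\emptyset$, the hypothesis only gives $F'\ge 0$, whereas you need $F'\ge\beta'q^x$. So, as written, the key step is unproved, and the induction carries no useful information.

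The rest of your chain is correct. In Step~1 the total share after adding $B$ is $\beta'+a\le\alpha((S')^x)\le b(S',x)\le b(S,x)$, and $s\mapsto(1-\beta-s)(F+sq^x)$ increases up to share $b(S,x)$. The final step via Definition~\ref{def: bp} is also correct.

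\textbf{The fix.} The reward bound follows in one line from the lemma's own per-group hypothesis and the quality ordering. Since $q^y>q^x$ and $\alpha(S'\cap Q_y)\ge\alpha(S\cap Q_y)$ for every $y<x$,
\[
F'-F=\sum_{y<x}q^y\bigl(\alpha(S'\cap Q_y)-\alpha(S\cap Q_y)\bigr)\;\ge\;q^x(\beta'-\beta).
\]
You also need $1-\beta'-a>0$. This holds because Lemma~\ref{lem: balance point} gives $b(S',x)\le\tfrac12$, as $q((S')^{x-1})\ge q^x$. Hence
\[
g(S^{x-1}\cup A\cup B)=(1-\beta'-a)\bigl(F+(\beta'-\beta+a)q^x\bigr)\le(1-\beta'-a)(F'+aq^x)=g(T).
\]
None of your three steps then uses the inductive hypothesis. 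You can drop the induction and run the chain once at $x=p$ to get $g(S)\le g(S')$ directly.

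\textbf{Comparison with the paper.} The patched argument is a genuinely different route. The paper compares at equal total share rather than group by group:
\begin{itemize}
\item It truncates the quality-sorted list of $S'$ at share $\alpha(S)$, topping up with a fractional auxiliary agent.
\item It argues $q(T)\ge q(S)$, hence $g(S)\le g(T)$.
\item It uses the balance point of the group where the truncation occurs to get $g(T)\le g((S')^z)$.
\item It then climbs through the lower groups via their balance points to reach $g(S')$.
\end{itemize}
The paper's quality comparison really needs only the prefix condition $\alpha(S^x)\le\alpha((S')^x)$. Your version leans on the per-group condition, both for the reward bound and for Lemma~\ref{lem: comp bp 1}; it could be extended to prefix domination by Abel summation. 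In exchange, your chain is a fully explicit computation and avoids the exchange argument behind $q(T)\ge q(S)$.
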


The statement of the above lemma naturally extends to the case where we introduce an auxiliary agent~$r$ that maximizes the contribution of a given quality group, assuming the agents in all preceding quality groups are fixed. 

\begin{lemma}
    \label{lem: comp bp 3}
    Let \(S, S' \subseteq N\) be two sets of hired agents and \(1 \leq y \leq p\). 
    Suppose that for every quality group \(Q_x\), \(1 \leq x \leq y\), the shares satisfy
    $\alpha(S \cap Q_y) \leq \alpha(S' \cap Q_y)$ and $\alpha(S^y) \leq \alpha((S')^y) \leq b(S', y)$. Then $g(S) \leq g((S')^y \cup \{r\})$,
    where \(r\) is the auxiliary agent that maximizes utility over the fixed \((S')^y\).
\end{lemma}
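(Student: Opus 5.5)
We reduce Lemma~\ref{lem: comp bp 3} to Lemma~\ref{lem: comp bp 2} in two steps.

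\textbf{Step 1: truncate $S$ at level $y$.} The hypotheses of Lemma~\ref{lem: comp bp 3} only constrain the groups $Q_1,\dots,Q_y$, while $S$ may contain agents from lower groups. We collapse all of $S\setminus S^{y}$ onto level $y$. Replace $S$ by $\hat S := S^{y}\cup\{v\}$, where $v$ is an auxiliary agent of quality $q^y$ whose share equals $\alpha(S\setminus S^{y})$. Since every agent in $S\setminus S^y$ has quality at most $q^y$, we have $f(S\setminus S^y)\le \alpha(S\setminus S^y)\,q^y = f(\{v\})$. The total share is unchanged, so $g(S)\le g(\hat S)$.

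\textbf{Step 2: compare $\hat S$ with the auxiliary optimum over $S^{y-1}$.} Now $\hat S$ lives entirely in levels $1,\dots,y$. Its level-$y$ content is $(S\cap Q_y)$ together with $v$, so we can view it as $S^{y-1}\cup\{w\}$, where $w$ is a single auxiliary agent of quality $q^y$ with share $\alpha(S\cap Q_y)+\alpha(S\setminus S^y)$. By Observation~\ref{obs: aux agent bp} applied to the set $S$ and level $y$, we get $g(\hat S)\le g(S^{y-1}\cup\{r_S\})$. Here $r_S$ is the utility-maximizing auxiliary agent over $S^{y-1}$, and $\alpha(S^{y-1}\cup\{r_S\})=b(S,y)$. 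The point of this step is that no choice of level-$y$ share, including one that overshoots, can beat the share placed exactly at the balance point.

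\textbf{Step 3: compare with $(S')^y\cup\{r\}$.} We must show $g(S^{y-1}\cup\{r_S\}) \le g((S')^y\cup\{r\})$. Here $r$ is the auxiliary agent of quality $q^y$ maximizing utility over $(S')^y$; it may have share zero if $\alpha((S')^y)=b(S',y)$ already.

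\begin{itemize}
\item By Lemma~\ref{lem: comp bp 1}, the groupwise share domination on levels $\le y$ gives $b(S,y)\ge b(S',y)$.
\item Both candidate sets agree with their base sets on levels $<y$, and at level $y$ each carries the share that optimizes utility on its own parabola. For $(S')^y\cup\{r\}$ this is the share reaching $b(S',y)$. This holds because $\alpha((S')^y)\le b(S',y)$ and the parabola on level $y$ (viewed over $(S')^{y-1}$) peaks at $b(S',y)$.
\end{itemize}

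So the comparison reduces to showing that the optimal utility $\max_t g(T^{y-1}\cup\{\text{aux of share }t\})$ is monotone in $T^{y-1}$ under the given domination. We prove this by a level-by-level exchange. We transform $S^{y-1}$ into $(S')^{y-1}$ by increasing the level-$x$ shares one group at a time, for $x=1,\dots,y-1$. Increasing the share of quality $q^x\ge q^y$ while simultaneously decreasing the auxiliary level-$y$ share by the same amount keeps $\alpha$ fixed and weakly raises $f$, so $g$ does not drop. This is feasible as long as the auxiliary share stays nonnegative. The hypothesis $\alpha((S')^{y-1})\le\alpha((S')^y)\le b(S',y)$ guarantees that $S'$'s higher-level shares never exceed the target balance total, so the auxiliary share remains nonnegative throughout. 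After the exchange we re-optimize the auxiliary share, which can only help. This exchange is the main obstacle: one must check that the total at each stage stays at most $b$ of the current set. That check uses Lemma~\ref{lem: comp bp 1} and the closed form \eqref{eq: bal_point}.

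Alternatively, one can simply invoke Lemma~\ref{lem: comp bp 2} on the pair $\hat S$ and $(S')^y\cup\{r\}$, treating $r$ as the sole member of a level-$y$ group. Since $\alpha(((S')^y\cup\{r\})^y)=b(S',y)$, the balance condition holds with equality at level $y$, and it holds trivially for the empty lower levels. The remaining requirement for Lemma~\ref{lem: comp bp 2} is the share domination at level $y$, namely $\alpha(S\cap Q_y)+\alpha(v)\le\alpha(S'\cap Q_y)+\alpha(r)$. To establish it, we first cap $\hat S$'s total at $b(S,y)$ via Step 2, and then use $b(S,y)\ge b(S',y)$ together with the level-$<y$ domination.
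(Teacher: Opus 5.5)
\textbf{There is a genuine gap: the proposal misidentifies $r$ and so proves a weaker inequality.} The statement leaves $q_r$ implicit. In the paper, $r$ is the optimal auxiliary agent of the \emph{next} quality level, $q_r=q^{y+1}$, placed on top of the fixed $(S')^y$. Three things fix this reading. It matches the convention of Observation~\ref{obs: aux agent bp}, where the optimizer over $S^{x-1}$ has quality $q^x$. The paper's proof collapses $S\setminus S^y$ onto an auxiliary agent $t''$ of quality $q^{y+1}$ before invoking that observation. And in Theorem~\ref{theo: A CR} the lemma is applied with $y=z-1$, where $r$ must share the quality $q^z$ of $t$ so that $\alpha_r\le\alpha_t$ gives $g(\{r\})\le g(\{t\})$.

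You take $q_r=q^y$ instead. Your right-hand side is then the peak of the level-$y$ parabola over $(S')^{y-1}$, attained at total share $b(S',y)$. This is in general strictly larger than the lemma's right-hand side. For $y=1$, $S'=\{a\}$ with $q_a=q^1=1$, $\alpha_a=0.1$, and $q^2=0.1$, your bound is $1/4$. The lemma's bound is $g(\{a\})=0.09$, because the optimal $q^2$-top-up of $\{a\}$ is empty. Your Steps 1--3 correctly prove the weaker inequality, but Step 1 (lifting $S\setminus S^y$ to quality $q^y$) already concedes more than the intended comparison allows. Tellingly, your argument never uses the level-$y$ domination $\alpha(S\cap Q_y)\le\alpha(S'\cap Q_y)$ or the non-crossing conditions $\alpha((S')^x)\le b(S',x)$ for $x<y$.

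\textbf{Switching to $q^{y+1}$ does not simply repair it.} Your exchange pushes share from the auxiliary agent into the missing higher-quality share of $S'$ at fixed total. That requires the total to be at least $\alpha((S')^y)$. Neither $\alpha(S)$ nor the optimized total $\max\{b(S,y+1),\alpha(S^y)\}$ need reach it. The hypothesis gives $\alpha((S')^y)\le b(S',y)$, but possibly $\alpha((S')^y)>b(S',y+1)$, so the auxiliary share would go negative.

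In that case you need the other ingredient of the paper's construction, borrowed from Lemma~\ref{lem: comp bp 2}:
\begin{itemize}
\item Move $S$'s share onto a top-quality prefix of $(S')^y$ with the same total; this can only raise the reward.
\item Complete this prefix to $(S')^y$. This never crosses a balance point, so it only raises $g$. This is exactly where the hypotheses at all levels $x\le y$, which you left unused, enter.
\end{itemize}
The paper's write-up asserts $T^y=(S')^y$, which tacitly assumes $\alpha(S^y)=\alpha((S')^y)$; the general case needs precisely this completion step.

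\textbf{Your alternative route fails even under your own reading.} By \eqref{eq: bal_point}, $b(T,y)-\alpha(T^{y-1})=\tfrac12\bigl(1-\alpha(T^{y-1})-f(T^{y-1})/q^y\bigr)$, which decreases as the higher-level shares grow. So domination at levels $<y$ makes the level-$y$ room of $S$ at least that of $S'$. That is the opposite of the level-$y$ domination Lemma~\ref{lem: comp bp 2} requires.
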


\section{Positive Result: Randomized Algorithm}
\label{sec:rand-alg-additive}

In this section, we present our randomized algorithm for OMAC.
We start with two deterministic algorithms representing complementary approaches to OMAC with an additive reward function. Our final algorithm is a randomized combination of these strategies.

Algorithm {\sf BP} (Balance Point) hires agents with the highest possible utility while ensuring no group exceeds its balance points. When comparing two agents of equal quality, {\sf BP} prefers the one with the smaller share. To achieve this, it runs {\sf BP-STEP} (Algorithm~\ref{algo: bp}) upon each agent's arrival.

\begin{algorithm}
    \caption{\sf BP-STEP}
    \label{algo: bp}
    \begin{algorithmic}[1]
        \REQUIRE The algorithm receives as input the previously hired group of agents $S$, and an arriving agent~$j$.
        \STATE $S' \gets S \cup \{j\}$
        \STATE Sort the agents in $S'$ in decreasing order of quality, breaking ties by non-decreasing order of shares.
        \STATE Let $\bar{S}' = \{i_1, \dots, i_s\}$  be the sorted list.
        \STATE Set $S \gets \emptyset$
        \FOR{$k = 1$ to $s$}
            \STATE Let $x$ be such that $i_k \in Q_x$
            \IF{ 
            $\alpha(S \cup \{i_k\}) < b(S, x)$}
                \STATE $S \gets S \cup \{i_k\}$
            \ENDIF
        \ENDFOR
        \STATE \RETURN $S$
    \end{algorithmic}
\end{algorithm}

Algorithm {\sf MAX} selects the best agent from all arrivals by running {\sf MAX-STEP} upon each agent's arrival. It is formally described in Algorithm~\ref{alg: max}. For simplicity, we initialize with a dummy agent 0, having $\alpha_0 = 0$, indicating no agents have been hired yet.

\begin{algorithm}
\caption{\sf MAX-STEP} 
\label{alg: max}
    \begin{algorithmic}[1]
        \REQUIRE
        The algorithm receives as input the previously hired agent $i$, and an arriving agent~$j$.
        \RETURN $\arg\max\{g(\{i\}), g(\{j\})\}$
    \end{algorithmic}
\end{algorithm}

We define the randomized algorithm {\sf ALG-OMAC} (Algorithm \ref{alg:omac}) as follows:  
run either {\sf BP} or {\sf MAX} with equal probability (i.e., uniformly at random), throughout the entire game. 

\begin{algorithm}
    \caption{{\sf ALG-OMAC}}
    \label{alg:omac}
    \begin{algorithmic}[1]
        \REQUIRE The algorithm receives an instance of OMAC.
        \STATE With probability $1/2$, run {\sf BP}; with probability $1/2$, run {\sf MAX.}
    \end{algorithmic}
\end{algorithm}

The performance guarantee of {\sf ALG-OMAC} stems from the complementary strengths of the two deterministic algorithms it leverages. Algorithm {\sf BP} is guided by a greedy strategy: it consistently attempts to approach balance points, as described above.  
However, identifying when the algorithm should cross a balance point is challenging. Thus, {\sf BP} adopts a cautious approach—getting as close as possible to the balance points without ever exceeding them.  
This conservatism comes at a cost in scenarios where crossing a balance point would lead to greater utility.  
Nevertheless, this potential loss is mitigated by the performance of {\sf MAX}, which simply selects the best single agent.  
That is, the utility gained from hiring the best agent alone often compensates for any utility missed by {\sf BP} due to its conservative behavior. Building on this intuition, we now prove the main theorem establishing the competitive ratio of {\sf ALG-OMAC}:
\begin{theorem}
        \label{theo: A CR}
    Given an instance $I$ of OMAC, it holds that $\OPT(I) \leq g({\sf BP}(I)) + g({\sf MAX}(I))$.
\end{theorem}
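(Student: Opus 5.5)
Let $O\subseteq N$ be an optimal team, so $\OPT(I)=g(O)$, and let $B={\sf BP}(I)$ be the final output of {\sf BP}. I plan to split $O$ at the first balance-point crossing. Using Lemma~\ref{lem: one crossing bp}, we may assume w.l.o.g.\ that $O$ contains no agent after the first crossing agent in the quality order. Hence $O=O'\cup\{i^*\}$ or $O=O'$, where every prefix of $O'$ stays strictly below its balance point, and $i^*$ (if present) is the unique agent at which $\alpha(\bar O_{i^*})\ge b(O,x)$ first holds. I will show two inequalities. First, that the "non-crossing part'' is dominated by {\sf BP}, in the sense $g(O)\le g(B)+g(\{i^*\})$. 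Second, $g(\{i^*\})\le g({\sf MAX}(I))$. The second is immediate, because {\sf MAX} keeps the single agent of maximum utility among all arrivals.

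\textbf{Step 1: an invariant of {\sf BP}.} I will show by induction over arrivals that the current set $S$ of {\sf BP} dominates any non-crossing set of arrived agents in share per quality prefix. Concretely, for every non-crossing $T\subseteq N_i$ and every $x$, we want $\alpha(T^x)\le\alpha(S^x)$, or else {\sf BP} is "saturated'' at group $x$ (the next candidate would cross $b(S,x)$). The key tool is Lemma~\ref{lem: comp bp 1}: as {\sf BP} adds shares, balance points only decrease. So an agent rejected or dismissed because it would cross $b(S,x)$ stays unhelpful later, and dismissals are never regretted. The greedy rebuild in {\sf BP-STEP} processes agents by quality and then by increasing share. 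This makes $S$ the greedy non-crossing set over $S\cup\{j\}$, and I would argue it is at least as share-rich per group as any non-crossing competitor, using an exchange argument within each quality group.

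\textbf{Step 2: comparing utilities.} Given the invariant, I compare $g(O')$ with $g(B)$ via Lemma~\ref{lem: comp bp 2}, with $S=O'$ and $S'=B$. Its hypotheses are $\alpha(O'\cap Q_x)\le\alpha(B\cap Q_x)$ and $\alpha(O'^x)\le\alpha(B^x)\le b(B,x)$; the last inequality holds because {\sf BP} never crosses. Where the invariant fails in the saturated form, I use Lemma~\ref{lem: comp bp 3} instead. Truncating at the saturated group $y$, we get $g(O)\le g(B^y\cup\{r\})$ with $r$ the optimal auxiliary agent. Then $\alpha(B^y\cup\{r\})=b(B,y)$ (Observation~\ref{obs: aux agent bp}), and the gap between $\alpha(B^y)$ and $b(B,y)$ is smaller than the share of the rejected agent. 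Finally, I bound $g(B^y\cup\{r\})-g(B)$ by the utility of that single blocking agent, which is at most $g({\sf MAX}(I))$. Since $g$ is a concave quadratic in the added share with its peak at the balance point, the marginal loss should be controlled by $g(\{\text{agent}\})$: the single agent has quality at least $q^y$ and starts from the same share $0$.

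\textbf{Main obstacle.} The hardest step is making this last bound tight without double-charging. Both the crossing agent $i^*$ of $O$ and the saturation loss of {\sf BP} want to be charged to {\sf MAX}, but we only get one copy of $g({\sf MAX}(I))$. I would first try to show the two cases are exclusive. If {\sf BP} is saturated at some group $y$ before $O$'s crossing group, Lemma~\ref{lem: comp bp 3} already absorbs $O$ entirely, including $i^*$, since everything in $O$ after group $y$ only lowers utility relative to the auxiliary optimum. Otherwise {\sf BP} dominates $O'$ fully, and only $i^*$ needs to be charged. The delicate algebra is to verify $g(B^y\cup\{r\})\le g(B)+g(\{k\})$ for the blocking agent $k$, using Eq.~\eqref{eq: bal_point} and the fact that $k$'s quality is at least $q(B^y\cup\{r\})$.
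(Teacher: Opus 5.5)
\textbf{There is a genuine gap, and it sits at the core of your proof.} The domination step (Step~1, and the first branch of Step~2) is only asserted. As you state it, it is false, because {\sf BP}'s dismissals are irrevocable. Take arrivals $a,c,d,e$ with $(q,\alpha)$ equal to $(1,0.28)$, $(1.1,0.45)$, $(3,0.1)$, $(0.5,0.01)$. Then {\sf BP} evolves as follows:
\begin{itemize}
\item it first holds $\{a\}$;
\item it then holds $\{c\}$, since $0.73\ge b(\{c\},\cdot)=0.4775$ rejects $a$;
\item it then holds $\{d\}$, since $0.55\ge b(\{d\},\cdot)\approx 0.414$ rejects $c$;
\item it ends with $B=\{d,e\}$, rejecting nobody in the last step.
\end{itemize}
The optimum is $O=\{d,a\}$. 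It is non-crossing ($0.38<b(O,\cdot)=0.4$), so $O'=O$ and there is no $i^*$. Yet $g(O)\approx 0.360>g(B)\approx 0.271$, and $\alpha(O^x)>\alpha(B^x)$ at $a$'s group, with no pending blocker. The theorem survives only because $g(\{c\})\approx 0.272$ is available to {\sf MAX}. So the loss must be charged to $c$, an agent in neither $O$ nor $B$. Splitting $O$ at its \emph{own} crossing agent therefore cannot identify what {\sf MAX} has to pay for. Note also that Lemma~\ref{lem: comp bp 2} needs per-group domination $\alpha(O'\cap Q_x)\le\alpha(B\cap Q_x)$, which a prefix invariant does not supply.

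\textbf{Your saturated branch fails as well.} The bound $g(B^{y}\cup\{r\})\le g(B)+g(\{k\})$ for the blocking agent $k$ needs $\alpha_k\le 1/2$, because $s\mapsto(1-s)sq$ increases only on $[0,1/2]$. Take $h=(10,0.1)$ and $k=(9,0.9)$. Then {\sf BP} returns $\{h\}$ and $g(\{h\})+g(\{k\})=1.71$. But $g(\{h,r\})\approx 2.30$ for the auxiliary $r$ of quality $9$ completing to $b(\{h\},\cdot)\approx 0.494$. The double-charging issue you flag is also left unresolved.

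\textbf{The missing idea is to measure crossings against the balance points of the whole input, $b(N,\cdot)$.} By Lemma~\ref{lem: comp bp 1} these are the lowest possible: $b(S,y)\ge b(N,y)$ for every $S\subseteq N$. The paper takes $t\in Q_z$ to be the first agent of $\bar N$ that either has $\alpha(\bar N_t)\ge b(N,z)$ and $\alpha_t\le 1/2$, or equals the last agent $k^*$ of $\bar S^*$. Every agent preceding $t$ passes the acceptance test in every {\sf BP-STEP}, whatever the arrival order. Hence $\bar N_{t'}\subseteq S^1$ (Lemma~\ref{lem: m in s1}) and $g(\bar N_{t'})\le g(S^1)$. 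The proof then splits into two cases:
\begin{itemize}
\item If $t=k^*$, subadditivity gives $g(S^*)\le g(\bar S^*_{t'})+g(\{t\})$.
\item Otherwise, Lemma~\ref{lem: comp bp 3} absorbs all of $S^*$ into $N^{z-1}\cup\{r\}$, and $\alpha_r\le\alpha_t\le 1/2$ gives $g(\{r\})\le g(\{t\})$.
\end{itemize}
Exactly one agent, $t$, is ever charged to {\sf MAX}; in the first example $t=c$. This removes your double-charging problem. It also means {\sf BP} is compared with the prefix $\bar N_{t'}$, which it provably keeps, rather than with $O'$, which it may not.
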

\begin{corollary}
    The expected CR of {\sf ALG-OMAC} is at least $\frac{1}{2}$.
\end{corollary}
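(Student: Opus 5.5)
The plan is to compare an optimal offline team $O$ with the final team $B={\sf BP}(I)$ quality group by quality group. The part of $O$ that $B$ cannot reproduce will be charged to at most one agent, and that agent's utility is bounded by $g({\sf MAX}(I))$. Since {\sf MAX} keeps the agent $i^*$ maximizing $g(\{i\})$ over all arrivals, we have $g({\sf MAX}(I))=\max_i g(\{i\})$. So it suffices to show that $g(O)\le g(B)+g(\{i^*\})$ for some agent $i^*$.

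\textbf{Step 1: normalize $O$.} Among optimal teams, pick $O$ to be minimal in the quality order. By Lemma~\ref{lem: one crossing bp}, if some agent $i\in O\cap Q_x$ is the first with $\alpha(\bar O_i)\ge b(O,x)$, then $g(\bar O_i)\ge g(O)$. We may therefore truncate $O$ at $i$. After truncation, $O=O'\cup\{i\}$, where no prefix of $O'$ reaches its balance point. If no such $i$ exists, then $O'=O$ and there is no extra agent.

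\textbf{Step 2: invariant of {\sf BP}.} I will prove by induction over arrivals that at every time $t$ the current {\sf BP} team $B_t$ dominates $O'\cap N_t$ in the sense required by Lemma~\ref{lem: comp bp 2}. That is, for every quality group $x$ with $O'\cap Q_x\neq\emptyset$, either $\alpha(B_t\cap Q_x)\ge\alpha(O'\cap N_t\cap Q_x)$ with prefix shares $\alpha(B_t^x)\le b(B_t,x)$, or $B_t$ has ``filled'' group $x$, meaning the next agent of $O'$ in the sorted order fails the test $\alpha(B_t^x\cup\{\cdot\})<b(B_t,x)$. The tools are the following.
\begin{itemize}
\item {\sf BP-STEP} recomputes greedily in sorted order, preferring smaller shares within a group.
\item Lemma~\ref{lem: comp bp 1}: balance points only decrease as higher-quality shares increase, so an agent rejected once would also be rejected later. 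This justifies the irrevocability of dismissals.
\item Greedy by smaller share within a group dominates any subset of that group staying below the same threshold, by an exchange argument.
\end{itemize}

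\textbf{Step 3: conclude.} Let $y$ be the first group where $B$ is blocked below $O'$. Apply Lemma~\ref{lem: comp bp 3} with $S=O'$ and $S'=B$ at level $y$. This gives $g(O')\le g(B^y\cup\{r\})$, where $r$ is the auxiliary agent reaching $b(B,y)$. Since $B$ is blocked at $y$, the missing share between $\alpha(B^y)$ and $b(B,y)$ is less than the share of the single blocked agent $j\in Q_y$. Using the concave parabola form of $g$ in that group's share, the gain from $r$ over $B^y$ is at most $g(\{j\})$. Also, $g(B^y)\le g(B)$, because the agents BP adds beyond level $y$ stay below their balance points and hence do not decrease utility (Definition~\ref{def: bp}). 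If no group is blocked, Lemma~\ref{lem: comp bp 2} gives $g(O')\le g(B)$ directly. Finally, the truncated agent $i$ from Step~1 must still be accounted for, and this yields $g(O)\le g(B)+g(\{i^*\})$.

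\textbf{Main obstacle.} A naive accounting charges two single agents to {\sf MAX}: the crossing agent $i$ of $O$ and the blocked agent $j$ of $B$. I would try to merge them. The key step is to show, via the auxiliary-agent characterization (Observation~\ref{obs: aux agent bp}) together with Lemma~\ref{lem: balance point}, that
\[
g(O'\cup\{i\}) - g(B) \le \max\{g(\{i\}),g(\{j\})\}.
\]
The idea is that both excesses live in the same parabola at the first deviating group, and the parabola's increment over an interval of length $\alpha_k$ starting at a nonnegative share is at most $g(\{k\})=(1-\alpha_k)\alpha_k q_k$ for the corresponding agent $k$, since the leading coefficient is the same. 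Establishing this single-agent charge, and proving that the group where $O$ crosses coincides with the group where BP is blocked (or else is dominated), is the technical heart of the argument.
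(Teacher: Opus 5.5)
Your opening reduction is exactly how the paper proves this corollary: given Theorem~\ref{theo: A CR}, the expected utility $\tfrac12\bigl(g({\sf BP}(I))+g({\sf MAX}(I))\bigr)$ is at least $\tfrac12\OPT(I)$. The rest of your proposal tries to reprove Theorem~\ref{theo: A CR}, and there it has a genuine gap.

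First, you leave the single-agent charge unproven, and that charge is the entire content of the theorem. Second, Step~2 relies on a false premise: {\sf BP}'s hired share in a quality group is not monotone over time. A newly arriving agent with a small share is inserted early in its group's order. This can push a larger-share agent of that group past the balance point and get it dismissed. The group's total share then drops, and by Lemma~\ref{lem: balance point} the balance points of all lower groups \emph{rise}. So an agent rejected earlier may fit later, but it is gone.

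Concretely, let $u,v,w$ have quality $1.01$ and shares $0.2,0.25,0.1$, and let $d$ have quality $1$ and share $0.19$. Let them arrive in the order $u,v,d,w$. When $d$ arrives, {\sf BP} holds $\{u,v\}$ and rejects $d$, since $\alpha(\{u,v,d\})=0.64\ge b(\{u,v\},2)=0.49775$. When $w$ arrives, it keeps $\{w,u\}$ and drops $v$. The unique optimum is $O=\{u,w,d\}$ (utility $\approx 0.2514$), which crosses no balance point, so $O'=O$. At the end, $b(B,2)=0.4985>0.49=\alpha(B\cup\{d\})$. So in group~$2$, $B$ neither dominates $O'$ nor is ``filled,'' and the gap $b(B,2)-\alpha(B^2)=0.1985$ exceeds the share of the only missing agent of $O'$. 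Both your invariant and the Step~3 claim that the gap is smaller than one blocked agent's share fail here.

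The missing idea is to measure everything against the balance points of the \emph{entire} arrival set $N$. Every team $S'$ that {\sf BP} ever holds satisfies $S'\subseteq N$, so Lemma~\ref{lem: comp bp 1} gives $b(S',y)\ge b(N,y)$. Hence if an agent $i\in Q_y$ is ever rejected, then $\alpha(\bar N_i)\ge b(N,y)$, a condition that no longer depends on time.

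The paper then defines one critical agent $t\in Q_z$: the first agent in the global order $\bar N$ that either crosses $b(N,z)$ with $\alpha_t\le 1/2$, or is the last agent of the sorted optimum. It shows (Lemma~\ref{lem: m in s1}) that {\sf BP} never dismisses an agent ordered before $t$, so $\bar N_{t'}\subseteq B$ and $g(\bar N_{t'})\le g(B)$. This single index replaces both your $i$ and your $j$, so only one agent is charged to {\sf MAX}:
\begin{itemize}
\item If $t$ ends the optimum, sub-additivity gives $g(S^*)\le g(\bar S^*_{t'})+g(\{t\})$.
\item Otherwise, Lemma~\ref{lem: comp bp 3} gives $g(S^*)\le g(N^{z-1}\cup\{r\})$, and what remains is the gain from $\bar N_{t'}$ up to the peak at $b(N,z)$.
\end{itemize}
Your parabola-increment bound is the right tool for that last step. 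The gain is taken over an interval of length $b(N,z)-\alpha(\bar N_{t'})\le\alpha_t\le 1/2$ starting at a nonnegative share, so it is at most $g(\{t\})$. When $\bar N_{t'}$ already contains agents of $Q_z$, as in the example above, this is the form of the argument actually needed, since $\alpha_r$ can then exceed $\alpha_t$.
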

\begin{proof}
By the definition of {\sf ALG-OMAC}, its expected utility is given by  
\[
\frac{1}{2} \left( g({\sf BP}(I)) + g({\sf MAX}(I)) \right).
\]  
Therefore, by Theorem~\ref{theo: A CR}, the competitive ratio of the algorithm is at least $\frac{1}{2}$.
\end{proof} 

We proceed to formally analyze the algorithm.
Let $S^1$ and $S^2$ denote the groups of agents hired by {\sf BP} and {\sf MAX}, respectively, given the arrival sequence $N$.  
Let $\bar{S}^* = \{1^*, \dots, k^*\}$ denote an optimal solution of size $k = |S^*|$, where agents are sorted in decreasing order of quality, with agents of the same quality ordered in non-decreasing order of shares.

We now identify an agent that will play a central role in the analysis of our algorithm. Let agent $t \in Q_z$ be the first agent that satisfies at least one of these two properties:
\begin{itemize}
    \item $\alpha(\bar{N}_t) \geq b(N, z)$ and $\alpha_t \leq 1/2$;
    \item $t = k^*$.
\end{itemize}
In other words, $t$ is defined as the first agent in the ordering $N_t$ who, when hired together with all preceding agents, crosses their corresponding balance point and whose share size is at most $\frac{1}{2}$—or, alternatively, as the last agent in $\bar{S}^*$, whichever occurs earlier. We denote by $t'$ the agent that immediately precedes $t$ in the ordering.

\begin{lemma}
    \label{lem: m in s1}
     $\bar{N}_{t'} \subseteq S^1$.
\end{lemma}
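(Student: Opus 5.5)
The plan is to show, by induction over the arrival steps, that BP never drops an agent of $\bar{N}_{t'}$. Concretely, I would prove the following invariant: after {\sf BP-STEP} has processed arrival $j$, the hired set contains every agent of $\bar{N}_{t'}$ that has arrived so far. Since BP only ever removes agents inside a call to {\sf BP-STEP}, the invariant at step $n$ gives $\bar{N}_{t'} \subseteq S^1$.

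For the inductive step, fix an arrival $j$. By the induction hypothesis, the list $\bar{S}'$ sorted in line 2 contains all arrived agents of $\bar{N}_{t'}$. I then run a second induction along the for-loop. Let $a \in \bar{N}_{t'}$ be the current $i_k$, with $a \in Q_x$, and let $S$ be the set built so far. Because the list is processed in the same order as $\bar{N}$, we have $S \subseteq \bar{N}_a \setminus \{a\}$, and hence $\alpha(S \cup \{a\}) \le \alpha(\bar{N}_a)$. Also $S \subseteq N$, so $\alpha(S \cap Q_y) \le \alpha(N \cap Q_y)$ for all $y$. Lemma~\ref{lem: comp bp 1} then gives $b(S,x) \ge b(N,x)$; this holds since the balance point depends only on $S^{x-1}$, which is exactly what the test in line 7 uses. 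It therefore suffices to show $\alpha(\bar{N}_a) < b(N,x)$. Then $\alpha(S\cup\{a\}) < b(S,x)$, so $a$ is added, which closes the inner induction.

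The inequality $\alpha(\bar{N}_a) < b(N,x)$ should come from the minimality of $t$. Every $a$ preceding $t$ fails both defining properties: $a \neq k^*$, and either $\alpha(\bar{N}_a) < b(N,x)$ or $\alpha_a > 1/2$. In the first case we are done.

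The main obstacle is the second case, an agent $a$ before $t$ with $\alpha_a > 1/2$ (and so $\alpha(\bar{N}_a) \ge b(N,x)$). From Lemma~\ref{lem: balance point} and $q(S^{x-1}) \ge q^x$, every balance point is at most $1/2$. So such an $a$ would fail the test in line 7, and BP would reject it. Moreover, every agent after $a$ then also has prefix share above $1/2$, hence above its balance point.

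To handle this, I would first try to show that no such $a$ can precede $t$. The idea is to argue that $a$ preceding $k^*$ together with $\alpha(\bar{N}_a) > 1/2$ contradicts optimality of $S^*$. This would use Lemma~\ref{lem: one crossing bp} applied to $S^*$ and Lemma~\ref{lem: comp bp 1} to compare $b(S^*,\cdot)$ with $b(N,\cdot)$, to conclude that $S^*$ would stop at or before such a crossing.

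If that route fails, I would instead read the lemma as concerning the agents of $\bar{N}_{t'}$ with share at most $1/2$. I would then check that the large-share agents are exactly those whose loss is later covered by {\sf MAX} in the proof of Theorem~\ref{theo: A CR}.
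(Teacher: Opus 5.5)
Your handling of agents $a$ preceding $t$ with $\alpha(\bar{N}_a) < b(N,x)$ is correct. It is the paper's second case, argued directly rather than by contradiction, and your outer/inner induction is a cleaner version of its ``regardless of $S$'' argument. The large-share case you flagged, however, cannot be closed, because the lemma as stated is false.

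Take two agents with $(q_A,\alpha_A)=(2,0.9)$ and $(q_B,\alpha_B)=(1,0.4)$. Then $g(\{A\})=0.18$, $g(\{B\})=0.24$ and $g(\{A,B\})<0$, so $S^*=\{B\}$ and $k^*=B$. Agent $A$ is first in $\bar{N}$ but satisfies neither defining property of $t$: its share exceeds $1/2$ and it is not $k^*$. So $t=B$ and $\bar{N}_{t'}=\{A\}$. Yet {\sf BP} never hires $A$, since $\alpha_A>1/2=b(\cdot,1)$ by Corollary~\ref{rem: bp1-main}. Hence $\bar{N}_{t'}\not\subseteq S^1=\{B\}$.

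Your first route fails for exactly this reason. Optimality of $S^*$, and Lemma~\ref{lem: one crossing bp} applied to $S^*$, constrain only agents \emph{in} $S^*$; a large-share agent outside $S^*$ is invisible to them. The paper's proof disposes of this case by that same route. It asserts that an agent with $\alpha_i>1/2$ crosses its balance point in $S^*$ and so must be $k^*$, which tacitly assumes $i\in S^*$. So the paper's argument has the same hole, and your skepticism was warranted.

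What survives is your fallback, but note that it is a different and weaker statement: every $a\in\bar{N}_{t'}$ with $\alpha_a\le 1/2$ lies in $S^1$. For agents of $S^*$ the large-share case can be excluded by a one-line exchange argument rather than via balance points. Let $R=S^*\setminus\{k^*\}$; all of its agents have quality at least $q_{k^*}$, so $g(R)-g(S^*)\ge \alpha_{k^*}q_{k^*}\bigl(2\alpha(R)+\alpha_{k^*}-1\bigr)$. Optimality then forces $\alpha(R)\le 1/2$ whenever $\alpha_{k^*}>0$. Since $t$ precedes or equals $k^*$, combining this with your argument gives $\bar{S}^*_{t'}\subseteq S^1$.

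Neither version gives the inequality $g(\bar{N}_{t'})\le g(S^1)$ in \eqref{eq:upper_lower_bound_Mt}, which the proof of Theorem~\ref{theo: A CR} draws from this lemma. Letting {\sf MAX} absorb the large-share agents would therefore have to be argued at the level of the theorem; it is not a proof of the lemma as written.
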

\begin{proof}
Consider an arriving agent $i \in \bar{N}_{t'}$ and a group of hired agents $S \subseteq N$ such that $i \notin S$.  
We aim to show that algorithm {\sf BP} does not dismiss agent $i$, regardless of the choice of $S$.  
This immediately implies that every agent $i \in \bar{N}_{t'}$ remains hired under {\sf BP}, from the time of their arrival until the end of the sequence $N$, meaning that $i \in S^1$.

Assume towards contradiction that there exists such an agent $i \in \bar{N}_{t'}$ and a group of hired agents $S \subseteq N$ such that $i \notin S$, and {\sf BP} dismisses $i$ upon arrival. We distinguish between two cases. 

If $\alpha_i > \frac{1}{2}$, then by Corollary~\ref{rem: bp1-main},
this agent crosses its matching balance point in $S^*$. Consequently, by Lemma~\ref{lem: one crossing bp} and the optimality of the matching, it must be that $i^* = k^*$. We get that, by the definition of $t$, $i$ succeeds $t$ in the ordering, contradicting the fact that $i \in \bar{N}_{t'}$.

Now, suppose that $\alpha_i \leq \frac{1}{2}$. Let $i \in Q_y$.  
Since $i$ is not hired when considered against the group $S$, by the definition of {\sf BP},
\[
\alpha(\bar{S}'_i) \geq b(S', y),
\]
where $S'$ denotes the output of {\sf BP} in this iteration.

By applying Lemma~\ref{lem: comp bp 1}, we have
\[
b(S', y) \geq b(N, y).
\]
Combining this with the fact that $\bar{S}'_{i-1} \subseteq \bar{N}_{i-1}$, we get:
\[
\alpha(\bar{N}_{i-1} \cup \{i\}) \geq \alpha(\bar{S}'_{i-1} \cup \{i\}) \geq b(S', y) \geq b(N, y).
\]
By the definition of $t$, we have that $i$ succeeds $t$ in the ordering, leading to a contradiction.

It follows that agent $i$ remains hired until the sequence ends, i.e., $i \in S^1$ for all $i \in \bar{N}_{t'}$.
\end{proof}

We note that containment (between two subsets $S,S' \subseteq N$) is a special case of Lemma \ref{lem: comp bp 2}. Hence, using Lemma~\ref{lem: m in s1} and the fact that $\bar{S}^*_{t'} \subseteq \bar{N}_{t'}$, we have 
\begin{equation}
\label{eq:upper_lower_bound_Mt}
g(\bar{S}^*_{t'}) \leq g(\bar{N}_{t'}) \leq g(\bar{S}^1).
\end{equation}

We can now complete the analysis of {\sf ALG-OMAC}.

\begin{proof}[\bf Proof of Theorem \ref{theo: A CR}:]
We distinguish between two cases.

Assume first that $t$ is the last agent in $S^*$. By definition, $S^* = \bar{S}^*_{t'} \cup \{t\}$, and by the sub-additivity of $g$, we have
\[
g(S^*) = g(\bar{S}^*_{t'} \cup \{t\}) \leq g(\bar{S}^*_{t'}) + g(\{t\}).
\]
By inequality~\eqref{eq:upper_lower_bound_Mt}, $g(\bar{S}^*_{t'}) \leq g(S^1)$, and from the maximality of the single agent selected by algorithm {\sf MAX}, we have that $g(\{t\}) \leq g(S^2)$. Therefore,
\[
g(S^*) \leq g(\bar{S}^*_{t'}) + g(\{t\}) \leq g(S^1) + g(S^2),
\]
as required.

\medskip

If $t$ is not last in $S^*$ then it must hold that
 $\alpha_t \leq \frac{1}{2}$. We know that ${S^*}^{z-1} \subseteq N^{z-1}$, which allows us to apply Lemma~\ref{lem: comp bp 3}. Consequently,
$g(S^*) \leq g(N^{z-1} \cup \{r\})$,
where $r$ denotes the auxiliary agent corresponding to $N^z$. By sub-additivity of $g$, we have that
$g(S^*) \leq g(N^{z-1}) + g(\{r\})$.
Furthermore, using Lemma~\ref{lem: comp bp 2}, we have $N^{z-1} \subseteq \bar{N}_{t'}$, implying that
$g(N^{z-1}) \leq g(\bar{N}_{t'})$. Thus, $g(S^*) \leq g(\bar{N}_{t'}) + g(\{r\})$.

Now, observe that in this case, agent $t$ crosses its balance point over $\bar{N}_{t'}$, i.e.,
\[
\alpha(\bar{N}_t) \geq b(N, z).
\]
As a result, we have by Corollary~\ref{rem: bp1-main} 
\[
\alpha(\bar{N}_{t'}) + \alpha_t = \alpha(\bar{N}_{t'} \cup \{t\}) \geq b(N, z) = \alpha(\bar{N}_{t'}) + \alpha_r,
\]
and hence $\alpha_t \geq \alpha_r$.

Since $\alpha_r \leq \alpha_t \leq \frac{1}{2}$, it follows that $g(\{r\}) \leq g(\{t\})$. Moreover, because $t \in N$ is not an auxiliary agent (unlike $r$), {\sf MAX} ensures that $g(\{t\}) \leq g(S^2)$. Therefore,
$g(\{r\}) \leq g(S^2)$.

Finally, from inequality~\eqref{eq:upper_lower_bound_Mt}, we also have $g(\bar{N}_{t'}) \leq g(S^1)$. Combining these bounds, we conclude that
\[
g(S^*) \leq g(\bar{N}_{t'}) + g(\{r\}) \leq g(S^1) + g(S^2).
\]
This completes the proof.
\end{proof}

\section{Negative Results}
\label{sec:negative}
This section delineates the boundaries of tractability for online contract design by establishing several impossibility results. We begin with additive rewards in Subsection~\ref{sec:negative-additive}, complementing our positive results by showing that randomization is necessary and that a $1/2$ competitive ratio is optimal. We then turn to XOS rewards in Subsection~\ref{sec:submodular}, where we identify the threshold at which rich combinatorial interactions preclude any constant-factor approximation. Finally, Subsection~\ref{app:dismiss-free} investigates the structural role of preemption, demonstrating that the ability to dismiss agents is essential for achieving meaningful performance guarantees in an online adversarial setting.

\subsection{Additive Rewards}
\label{sec:negative-additive}

We identify two fundamental limitations in the additive case, both arising from the same adversarial scenario: the arrival of an agent demanding a large share. Such an agent can either cause the principal to lose all project value, or block the recruitment of more effective agents. Consequently, deterministic strategies might perform arbitrarily poorly.

\begin{theorem}
    \label{theo:add-neg-det}
    The CR of any deterministic algorithm for OMAC can be arbitrarily small.
\end{theorem}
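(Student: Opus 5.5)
The plan is a direct adversarial construction against an arbitrary deterministic algorithm. Because the algorithm is deterministic, the adversary can simulate it and end the sequence at the worst moment. The construction uses one ``blocking'' agent whose share is close to $1$, followed by a stream of agents with tiny shares that are individually worthless but valuable together. Fix a small $\eta>0$ and a large parameter $K$. Throughout, $f$ is additive, $\alpha_i=c_i/f(\{i\})$, and $g(S)=(1-\alpha(S))f(S)$. Any team with $\alpha(S)>1$ has negative utility, and the empty team has utility $0$.

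First, agent $1$ arrives with $\alpha_1=1-\eta/2$ and value $f(\{1\})=2M/\eta$, so $g(\{1\})=M>0$. If the algorithm does not hire agent $1$, the adversary stops immediately. The algorithm then gets $0$ against $\OPT=M$, so from now on we may assume agent $1$ is held. Next, agents $2,3,\dots,m+1$ arrive, with $m=\lfloor 1/(2\eta)\rfloor$. Each has share $\eta$ and quality $q$, where $q$ is chosen so that $q\ge KM$ while $\eta q\le M/K$. This is possible by taking $q=KM$ and $\eta\le 1/K^2$. Any team containing agent $1$ and at least one small agent has total share greater than $1$, and hence negative utility. Taking all $m$ small agents gives total share about $1/2$, so $\OPT\ge g(\{2,\dots,m+1\})\approx q/4$.

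The key step is a case analysis on when, if ever, the algorithm dismisses agent $1$. Since the adversary may stop after any step, we may assume the team held after every step has nonnegative utility; otherwise the adversary stops there and the ratio is $\le 0$. So while agent $1$ is held, no small agent can be held alongside it. Each small agent that arrives in that period is therefore rejected on arrival and can never be rehired.

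\emph{Case 1:} the algorithm keeps agent $1$ through all $m$ small arrivals. Then its final utility is $M$, while $\OPT\gtrsim q/4\ge KM/4$, giving ratio $O(1/K)$.

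\emph{Case 2:} the algorithm dismisses agent $1$ upon the arrival of small agent $j$. The adversary stops right then. All earlier small agents are already irrevocably gone, so the algorithm holds at most agent $j$, with utility at most $(1-\eta)\eta q\le M/K$. Meanwhile $\OPT\ge g(\{1\})=M$, again giving ratio $\le 1/K$.

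Letting $K\to\infty$ yields the claim that no deterministic algorithm has a constant competitive ratio.

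I expect the main obstacle to be justifying rigorously that a deterministic algorithm can never hold an infeasible team, i.e.\ one with negative utility, ``temporarily'' and still profit. The argument is that $n$ is unknown and the adversary observes the algorithm's deterministic choices, so it can terminate at that step. A secondary detail is the arithmetic for choosing $\eta$, $m$, $q$ so that both inequalities hold simultaneously, together with the rounding of $m$. Both amount to routine parameter bookkeeping, with $\eta\le 1/K^2$ and $q=KM$.
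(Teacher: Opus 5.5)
Your proposal is correct and uses essentially the same construction as the paper: one agent with share close to $1$, then a stream of tiny-share agents whose union is worth far more. The argument is the same case analysis on whether and when agent $1$ is dismissed, with the sequence truncated at the adversarially chosen step. Your global reduction to ``nonnegative utility after every step'' simply absorbs the paper's separate ${\sf ALG}_i^+$ case, which cuts the sequence at $i-1$ when a small agent is held alongside agent $1$.
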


Furthermore, even randomized algorithms cannot fully avoid the trade-off induced by such an agent: no randomized algorithm can achieve a competitive ratio better than $1/2$ (our randomized algorithm in Section~\ref{sec:rand-alg-additive} matches this bound).  

\begin{theorem}
\label{thm:CR_rand_upper_bound}
    No (randomized) algorithm for OMAC can achieve a CR $ > 1/2$.
\end{theorem}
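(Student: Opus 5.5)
We prove Theorem~\ref{thm:CR_rand_upper_bound} with Yao's minimax principle. We build a distribution over two input sequences and show that every deterministic online algorithm gets expected utility at most $(1/2+\epsilon)$ times the expected optimum. Because the principal cannot tell the two sequences apart until after an irrevocable decision, a randomized algorithm can do no better either.

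The construction is built around an agent with a large share, as in Theorem~\ref{theo:add-neg-det}. The first agent $a$ has quality $q$ and share $\alpha_a=1/2$, so $g(\{a\})=(1-\tfrac12)\tfrac12 q=q/4$; by Corollary~\ref{rem: bp1-main} this is the best utility achievable from quality $q$ alone.
\begin{itemize}
\item In sequence $I_1$ the input ends after $a$, so $\OPT(I_1)=q/4$.
\item In sequence $I_2$ a second agent $b$ arrives with the same quality $q$ and share $\alpha_b=1/2+\delta$. Hiring both gives share greater than $1$ and hence negative utility, so the team is $\{a\}$ or $\{b\}$ alone, each worth about $q/4$. This gives no gap, so $b$ must be chosen differently.
\end{itemize}

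The choice of $b$ that works uses a much higher quality $Q\gg q$ and a share large enough that $a$ and $b$ cannot coexist profitably. By Lemma~\ref{lem: balance point}, with $S^{x-1}=\{a\}$ the balance point for quality $Q$ is
\[
\tfrac12+\tfrac12\bigl(1-\tfrac{q}{Q}\bigr)\tfrac12\approx \tfrac34 .
\]
We instead give $b$ share $\alpha_b=1/2$. Then $g(\{a,b\})=0$, since the total share is $1$. Also $g(\{b\})=Q/4$, and so $\OPT(I_2)=Q/4$.

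An online algorithm must decide after $a$ arrives whether to hold $a$. Let $p$ be the probability that a randomized algorithm holds $a$ at the end of step $1$.
\begin{itemize}
\item On $I_1$ it earns $p\,q/4$, giving ratio $p$.
\item On $I_2$ it can always dismiss $a$ and take $b$, earning $Q/4$. So holding $a$ costs nothing here, and this construction fails. The obstacle is to penalize the algorithm for \emph{keeping} $a$.
\end{itemize}

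We penalize keeping $a$ by making $a$ ``blocking'' through a third arrival whose value depends on $a$ having been dropped earlier. Since dismissal is final, the natural approach is a chain of increasing-quality agents, each with share $1/2$. Agent $k$ has quality $M^k$, and the sequence stops at a uniformly random point. At each step the algorithm holds at most one such agent. Holding the current agent, with value about $M^k/4$, means it keeps that agent over the previous one; dismissal is free, so the algorithm can always switch to the newest agent.

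The real tension must therefore come from shares that \emph{combine}. The plan is to use one high-share agent $a$ with $\alpha_a=1/2$, followed possibly by many tiny-share agents of the same total reward. Their sum reaches share $1/2$ only if $a$ has already been dismissed earlier; if $a$ is still held, adding them pushes the total share past $1$. Tuning so that either choice loses half the optimum in one branch yields the bound $p\cdot 1+(1-p)\cdot 0$ against $0\cdot p+(1-p)\cdot 1$ over the two scenarios. Hence the expected ratio is at most $\max_p\min(p,1-p)=1/2$.

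Pinning down this construction — one where keeping $a$ is irreversibly harmful even though dismissal is free — is the main obstacle. The key point is that the damage must come from an opportunity that is lost at the moment it arrives, namely a small agent that cannot be hired alongside $a$ and cannot be rehired later.
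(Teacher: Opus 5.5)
Your proposal stops short of a proof: the instance that is supposed to carry the argument is never pinned down, and the one you sketch at the end produces no gap. Suppose the tiny agents are identical, with total share $1/2$ and total reward equal to $a$'s. Then each has quality $q$, so on every prefix the optimum is at most $\max_\alpha (1-\alpha)\alpha q = q/4 = g(\{a\})$. The algorithm that hires $a$ and keeps it forever is therefore optimal on every prefix. (Also, with $\alpha_a=1/2$, holding $a$ plus a few tiny agents does not push the total share past $1$.)

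To create tension you need two things. First, a blocker that cannot profitably coexist with even one tiny agent. Second, tiny agents that are individually negligible next to the blocker but collectively far better. This is the paper's instance: agent $1$ has $\alpha_1=1-\varepsilon^2$ and $q_1=\varepsilon$, so $g(\{1\})\approx\varepsilon^3$ and $\alpha_1+\alpha_i=1$. It is followed by $1/(2\varepsilon^2)$ agents with $\alpha_i=q_i=\varepsilon^2$, each worth about $\varepsilon^4$ and together worth $\varepsilon^2/4$. The adversary chooses the prefix at which the input stops.

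The more serious gap is your final step $\max_p\min(p,1-p)$. It treats ``keep $a$'' versus ``drop $a$'' as a single binary decision checked in two scenarios. But an online algorithm can hold $a$ for a while and dismiss it at a \emph{random later time}, losing only the tiny agents that arrived in the meantime. A valid argument must show that \emph{every} distribution over dismissal times is punished at some stopping point.

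The paper attempts this by classifying algorithms by $\Pr(X)$ and by the first index $i$ with $\mathbf{E}[t_i]>1$. This step is genuinely delicate, and that argument does not close it as written. Consider the following algorithm on the paper's instance:
\begin{itemize}
\item With probability $0.6$, hire agent $1$, dismiss it when the $\lceil s/\varepsilon\rceil$-th small agent arrives (with $s$ uniform on $[1,3]$), and hire every small agent from then on.
\item With probability $0.4$, hire every small agent from the start.
\end{itemize}
Take a prefix with $x/\varepsilon$ small agents. For $x\le 1$, the first branch is optimal. For $x\ge 1$, the second branch is optimal, and the first still earns ratio at least roughly $\bigl(\Pr(s>x)+\mathbf{E}[(x-s)^+]\bigr)/x \ge \sqrt{7}/2-1\approx 0.32$. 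So the expected ratio is at least about $\min\{0.6,\;0.4+0.6\cdot 0.32\}\approx 0.59$ on every prefix.

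Hence no argument using only prefixes of that instance yields $1/2$. The place it breaks is the case $\mathsf{ALG}_i$: there $\mathbf{E}[t_k]>1$, an average over the whole event $X$, is applied to the sub-case where agent $1$ is still held, where in fact no small agent can be held. So the missing ingredient in your proposal is the whole difficulty, not a parameter to be tuned. You need an instance where dismissing the blocker at \emph{any} time is irreversibly costly, together with an argument covering randomized dismissal times.
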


\begin{proof}[Proof sketch]
    Let $\varepsilon \in (0, 1)$ be a small value, $n = \varepsilon^2/2 + 1$ and $N = \{1, \dots, n\}$ such that  $\alpha_1 = 1 - \varepsilon^2, q_1 = \varepsilon$ and $\alpha_i = \varepsilon^2, q_i = \varepsilon^2$ for every $2 \leq i \leq n$. 

 Given a (possibly randomized) algorithm, let $X$ be an indicator random variable for hiring agent 1 upon arrival. For any small agent $i \in N \setminus {1}$, let $T_i \subseteq N_i \setminus {1}$ be the random set of small agents hired by the algorithm among the first $i$ arrivals, conditioned on hiring agent 1, with $t_i = |T_i|$.   

    Then every randomized algorithm belongs to one of the following types:
    \begin{enumerate}
    \item ${\sf ALG}_1$ is the class of algorithms for which the corresponding indicator variable $X$ satisfies $\Pr(X) \leq 1/2$.

    \item For all $i \in N \setminus \{1\}$, let ${\sf ALG}_i$ be the class of algorithms for which $\Pr(X) > 1/2$. In addition, $i$ is the first index such that $\mathbf{E}[t_i] > 1$, meaning that for every $1 \leq j < i$, we have $\mathbf{E}[t_j] \leq 1$.
    
    \item ${\sf ALG}_{n+1}$ denotes the class of algorithms for which $\Pr(X) > 1/2$, and $\mathbf{E}[t_i] \leq 1$ for all $1 \leq i \leq n$.
    \end{enumerate}
Consider one of the classes ${\sf ALG}_i$ for $1 \leq i \leq n+1$. We claim that by setting $k = \min\{i, n\}$, the utility of every algorithm in this class on the input prefix $N_k$ can be made arbitrarily small compared to the optimal solution. In Appendix~\ref{app:neg-add}, we prove this claim separately for each class.
\end{proof}

\subsection{XOS Rewards}
\label{sec:submodular}

We study OMAC in a broader setting where the reward function $f$ may be XOS. 
In this generalized model, for every $S \subseteq N$ and $i \in S$, the contribution of agent $i$ may depend on the entire set $S$. As a result, we can no longer fix an agent’s share $\alpha_i$ upon arrival, since this value may vary as other agents are hired or dismissed. This dependency makes the problem substantially harder in adversarial settings.

We show that even in the free-dismissal model, where the principal may withdraw previously offered contracts, no (possibly randomized) algorithm can guarantee a constant competitive ratio. This difficulty stems from the adversary’s power to determine that any hired agent may lose its marginal contribution when additional agents arrive, whereas dismissed agents may retain their potential value. Analogous to real-world economic settings, the principal lacks foresight into how an arriving agent will interact with future ones, rendering reliable long-term decisions infeasible.

\begin{theorem}
    \label{theo: free-dismissal SM negative}
    For OMAC with an XOS contribution function $f$, the competitive ratio $\gamma$ of any algorithm can be made arbitrarily small.
\end{theorem}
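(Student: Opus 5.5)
We prove the statement with an adversarial instance built around an XOS reward. The instance is designed so that any agent the algorithm is holding can be made worthless by the next arrival, while agents it has already dismissed keep their value. By Yao's principle (the adversary is oblivious), it suffices to fix a distribution over inputs and show that every deterministic algorithm earns expected utility at most $\gamma\cdot\OPT$ on it, with $\gamma\to 0$.

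\textbf{Construction.} Fix a large $m$ and let agents $1,\dots,m$ arrive in order, each with the same small cost $c$. The reward $f$ is an XOS function with clauses $w^{(1)},\dots,w^{(m)}$. Clause $w^{(\ell)}$ gives value $1$ to each agent $j\neq \ell$ among the first $\ell$ arrivals, and gives zero to agent $\ell$ and to all later agents. The intended effect is that when agent $\ell$ arrives, all previously arrived agents still have positive value, but their marginal contribution inside any team that also contains $\ell$ can be made to vanish. Whenever $i\in S$ has zero marginal contribution $f(S)-f(S\setminus\{i\})$, Eq.~\eqref{eq: def alpha} forces $\alpha_i=\infty$ (take $c>0$), so $g(S)<0$. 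The principal must therefore dismiss such an agent, while a team that avoids the agents made useless still has positive utility.

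I would tune the clauses so that at every prefix the offline optimum $\OPT(N_\ell)$ is a constant, attained by a small team (say a single agent or a pair) that includes some earlier agent. The adversary chooses the stopping time $\ell$ uniformly from $\{1,\dots,m\}$, and possibly the identity of the ``spoiled'' agent as well. Any agent the online algorithm still holds at time $\ell$ has then, with high probability, been spoiled by a later clause, or else the only valuable team uses an agent it has already dismissed.

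\textbf{Analysis and main obstacle.} The argument is a counting bound. The algorithm holds a team $S_\ell$, and dismissed agents are lost for good. With a uniformly random stopping time, the probability that $S_\ell$ contains an agent still useful relative to the optimal team at time $\ell$ should be $O(1/m)$, or at least $O(1/\log m)$ under a geometric-scale variant. This gives $\mathbf{E}[g(S_\ell)]\le O(1/m)\cdot\OPT$. The main obstacle is the design of the XOS clauses themselves. They must simultaneously
\begin{itemize}
\item keep $f$ monotone,
\item make the held agents' marginal contributions vanish without also destroying the value of dismissed agents, and
\item prevent the algorithm from hedging by holding many agents at once.
\end{itemize}
My first attempt would be a geometric-value construction: agent $j$ has value $M^j$, and an agent's marginal contribution drops to zero only when it is paired with the immediately next arrival. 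If this cannot be made monotone, I would fall back to a two-phase gadget in which the adversary randomly chooses which of two hired agents becomes redundant, and then repeat the gadget to drive $\gamma$ down.
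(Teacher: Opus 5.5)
\textbf{There is a genuine gap.} You set up Yao's principle correctly, but you never supply the XOS instance, and that instance is the whole proof. The one concrete clause family you give does not work. Since $w^{(\ell)}(S)=|S\cap\{1,\dots,\ell-1\}|$ is nondecreasing in $\ell$, we get $f(S)=\max_\ell w^{(\ell)}(S)=|S\setminus\{m\}|$, which is additive. So no held agent's marginal contribution ever vanishes, and simply hiring while $c|S|<1/2$ is essentially optimal on every prefix.

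Your geometric fallback fails for a structural reason. Geometric scaling is what defeats algorithms \emph{without} dismissal (Theorem~\ref{theo: general negative}). With free dismissal, the algorithm just keeps the newest, dominant agent.

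More fundamentally, the ``spoiling'' mechanism you describe is not available. $f$ is a fixed set function on $2^N$, and the principal can evaluate it on arrived agents. Hence $g(S)$ for a held team $S\subseteq N_\ell$ never changes when later agents arrive: the algorithm can always ignore the newcomer. A later arrival can only do two things. It can make itself incompatible with the held team, by giving someone a zero marginal. It can also create a better team that needs agents the algorithm has already dismissed. If the structure is deterministic and fully visible, the algorithm knows which agents to retain. A lower bound therefore needs hidden information about \emph{which} agent to keep, and your sketch contains none. Your $O(1/m)$ bound is asserted rather than derived, and the three bulleted ``obstacles'' are exactly what the proof has to resolve.

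\textbf{The missing idea, as in the paper.} Take groups $M_1,\dots,M_m$ of $n$ agents each, all with cost $\varepsilon$ and standalone value $1$. In each $M_k$ with $k<m$, one agent $\sigma_k$ is designated uniformly at random and kept hidden. For every $i_k\in M_k$, add an additive clause giving value $1$ to $\sigma_1,\dots,\sigma_{k-1}$ and to $i_k$. This has three consequences:
\begin{itemize}
\item On the agents arrived so far, $f$ is symmetric in $M_k$ until $M_{k+1}$ arrives, so the candidates of a group are indistinguishable when the algorithm must choose among them.
\item Any team containing two agents of one group, or a non-designated agent together with an agent of a later group, has a zero marginal. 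Hence $\alpha=\infty$ and $g=-\infty$, so the algorithm can hold at most one candidate per group.
\item The held candidate turns out to be $\sigma_k$ only with probability $1/n$. Only then does it keep contributing once the next group is hired.
\end{itemize}
The optimum accumulates $\OPT\approx m$. The recursion in Lemma~\ref{lem: aux rem SM neg} shows the algorithm gets a ratio of at most $1/n+1/m$.

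Your random stopping time is worth keeping. Combined with the $\alpha=\infty$ penalty, it forces the algorithm to hold a valid team at every intermediate step. Otherwise the algorithm could retain several candidates of a group and identify $\sigma_k$ by querying $f$ once the next group starts arriving.
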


\paragraph{Proof Idea.}
We apply Yao’s min–max principle and construct a distribution $\mathcal{D}$ over input instances that defeats every deterministic algorithm $\mathsf{ALG}$. Consider a sequence of $n$ seemingly identical agents, each with the same positive action cost and the same individual contribution when hired alone. However, these agents are ineffective in teams: any group of two or more hired agents yields no more utility than a single one. Formally, for any subset $S$ of size at least two, $f(S) = f(\{i\})$ for some $i \in S$. Thus, the marginal contribution of any agent in such a group is zero, implying $\alpha_i = \infty$. The utility of hiring one agent is constant, but hiring a group results in utility $-\infty$ due to the equilibrium constraint.

Since the agents are indistinguishable, $\mathsf{ALG}$ has no better option than hiring one arbitrarily. However, the distribution $\mathcal{D}$ selects exactly one of these agents to be ``good”, chosen uniformly at random, while the rest are “bad.” Bad agents remain ineffective regardless of who else is hired. In contrast, the good agent has a latent property: once hired, it enables productive collaboration with agents from future groups.

After this first group, we present $\mathsf{ALG}$ with a second group of $n$ agents, again indistinguishable in cost and standalone contribution. Only one among them is good (chosen uniformly at random), and it can collaborate effectively with the good agent from the first group—but not with any bad agent. Lacking any way to identify the good agent, $\mathsf{ALG}$ can hire at most one. If it previously hired the good agent, the new hire increases utility; otherwise, no gain is possible.

At this point, $\mathsf{ALG}$ faces a dilemma: it must either retain its current hire (possibly a bad agent) or dismiss it in favor of a new, equally uncertain one. In either case, if it failed to identify the good agent in the first group, the best it can achieve is half the utility that would be obtained by correctly pairing the two good agents.

We repeat this process for $m$ groups. In each round, the good agent is selected uniformly at random, and $\mathsf{ALG}$ receives $n$ indistinguishable candidates. The probability of selecting the good agent in a group is $1/n$, so with high probability, the algorithm continues to miss key opportunities. As $m$ grows, the adversary’s ability to suppress utility increases, and the expected competitive ratio of $\mathsf{ALG}$ tends toward zero.

To further clarify the construction of the distribution $\mathcal{D}$ and the behavior of algorithms in this setting, we give Example~\ref{ex: neg sm}, which illustrates the core idea with parameters $n = 2$ and $m = 3$.

\begin{example}
\label{ex: neg sm}
Let \(N = \{B_1, G_1, B_2, G_2, B_3, G_3\}\), where \(B_j\) and \(G_j\) denote, respectively, the bad and good agents of group \(j\). Each agent has cost \(c_i = \varepsilon \in \mathbb{R}_{>0}\).  
For \(j = 1, 2\), the good agent \(G_j\) is selected uniformly at random from within its group, with the other designated as bad.  

Define the reward function \(f\) as:
\[
f(S)=
\begin{cases}
1,
& B_1\in S, \\[4pt]

\mathbf{1}_{G_1\in S}+1,
& B_1\notin S,\; B_2\in S, \\[4pt]

\mathbf{1}_{G_1\in S}+\mathbf{1}_{G_2\in S},
& B_1,B_2\notin S,\\
& S\cap\{B_3,G_3\}=\emptyset, \\[4pt]

\mathbf{1}_{G_1\in S}+\mathbf{1}_{G_2\in S}+1,
& B_1,B_2\notin S,\\
& S\cap\{B_3,G_3\}\neq\emptyset.
\end{cases}
\]

Consider a deterministic algorithm \({\sf ALG}\).  
To avoid utility of \(-\infty\), it must never hire a bad agent together with any other agent. Thus, when presented with the first pair \(\{B_1, G_1\}\), \({\sf ALG}\) must hire exactly one agent. Similarly, upon arrival of the second pair \(\{B_2, G_2\}\), it again hires exactly one, discarding the first agent if necessary (e.g., replacing \(B_1\), whose presence contributes no additional utility).

Upon observing the final pair \(\{B_3, G_3\}\), the algorithm again selects exactly one agent and may dismiss a previously hired one to avoid infeasible configurations.

Let us compute the expected utility. The probability that \({\sf ALG}\) hires both \(G_1\) and \(G_2\) is \(\frac{1}{2} \cdot \frac{1}{2} = \frac{1}{4}\), resulting in utility close to \(3\). The probability that both hires from the first two pairs are bad is also \(\frac{1}{4}\), yielding utility near \(1\). In the remaining \(\frac{1}{2}\) of cases, exactly one of \(G_1\), \(G_2\) is hired, leading to utility close to \(2\). Thus, the expected utility of \({\sf ALG}\) over \(I \sim \mathcal{D}\) is approximately:

\[
\mathbf{E}_{I \sim \mathcal{D}}[g({\sf ALG}(I))] 
= \tfrac{1}{4}(1 - 3\varepsilon) 3
+ \tfrac{1}{2}(1 - 2\varepsilon) 2
+ \tfrac{1}{4}(1 - \varepsilon)
\approx 2.
\]

In contrast, the optimal policy—hiring both good agents \(G_1, G_2\) and any one of \(\{B_3, G_3\}\)—achieves a utility of approximately \(3\). Hence, by Yao’s min–max principle, the competitive ratio of any randomized algorithm over $\mathcal{D}$ is at most \(\frac{2}{3}\).
\end{example}

Before presenting the formal proof of Theorem~\ref{theo: free-dismissal SM negative}, we state the following auxiliary lemma (proved in Appendix~\ref{app: rem SM neg 1}), which captures a key recurrence used in the analysis:

\begin{lemma}
\label{lem: aux rem SM neg}
For every \(1 \leq h, \ell \leq m\) and \(n \in \mathbb{N} \setminus \{0\}\), define:
\[
A_h^\ell = 
\begin{cases}
  \displaystyle \frac{h}{m}, & \text{if } \ell = m, \\[6pt]
  \displaystyle \frac{1}{n} A_{h+1}^{\ell+1} + \frac{n-1}{n} A_h^{\ell+1}, & \text{if } 1 \leq \ell < m.
\end{cases}
\]
Then it holds that \(A_1^1 \leq \frac{1}{n} + \frac{1}{m}\).
\end{lemma}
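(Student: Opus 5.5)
The plan is to solve the recurrence in closed form rather than bound it loosely. The recurrence describes the expectation of $H/m$, where $H$ starts at $h$ and, over each of the remaining $m-\ell$ steps, increases by $1$ with probability $1/n$ and otherwise stays put. So $A_h^\ell = \mathbf{E}[(h+B)/m]$ with $B\sim\mathrm{Bin}(m-\ell,1/n)$. This suggests the explicit formula
\[
A_h^\ell = \frac{1}{m}\Bigl(h + \frac{m-\ell}{n}\Bigr).
\]
The recurrence refers to $A_{h+1}^{\ell+1}$, so $h$ may exceed $m$. I would therefore treat the definition, and the formula, as valid for every positive integer $h$. Only the value at $h=\ell=1$ is needed in the end.

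First, I prove the formula by backward induction on $\ell$, from $\ell=m$ down to $\ell=1$, simultaneously for all $h$. For the base case $\ell=m$, the formula gives $h/m$, which matches the definition. For the inductive step with $\ell<m$, I substitute the hypothesis for level $\ell+1$ into the recurrence:
\[
\frac{1}{n}\cdot\frac{h+1+\frac{m-\ell-1}{n}}{m} + \frac{n-1}{n}\cdot\frac{h+\frac{m-\ell-1}{n}}{m}
= \frac{1}{m}\Bigl(h+\frac{m-\ell-1}{n}+\frac{1}{n}\Bigr),
\]
and the right-hand side equals $\frac{1}{m}\bigl(h+\frac{m-\ell}{n}\bigr)$, which is the formula at level $\ell$.

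Second, I plug in $h=\ell=1$:
\[
A_1^1=\frac{1}{m}+\frac{m-1}{nm}\le \frac{1}{m}+\frac{1}{n},
\]
since $(m-1)/m\le 1$. This gives the claimed bound.

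There is no real obstacle beyond guessing the closed form, which the probabilistic reading supplies directly. The only care needed is the indexing issue above: the induction must range over all $h\ge 1$, not just $h\le m$, so that the step from $\ell+1$ to $\ell$ is legitimate.
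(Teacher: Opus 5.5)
Your proof is correct and matches the paper's. The paper proves by backward induction on $k=m-\ell$ that $A_h^{m-k}=\frac{hn+k}{nm}$, which is exactly your closed form $\frac{1}{m}\bigl(h+\frac{m-\ell}{n}\bigr)$, and then evaluates it at $h=\ell=1$. Your remark about letting $h$ range over all positive integers also tidies up a point the paper glosses over, since its inductive step likewise invokes $A_{h+1}^{m-k}$ when $h=m$.
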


\begin{proof}[\bf Proof of Theorem \ref{theo: free-dismissal SM negative}:]
To show that the competitive ratio $\gamma$ can be made arbitrarily small, we apply Yao’s min-max principle:
\begin{align*}
    \gamma = \max_{\text{randomized } {\sf ALG}_r} \min_I \frac{\mathbf{E}[{\sf ALG}_r(I)]}{\OPT(I)} \leq 
    \min_{\mathcal{D}} \max_{\text{deterministic } {\sf ALG}} 
    \mathbf{E}_{I \leftarrow \mathcal{D}} \left[ \frac{{\sf ALG}(I)}{\OPT(I)} \right].
\end{align*}

It thus suffices to construct a distribution $\mathcal{D}$ such that
\[
\mathbf{E}_{I \leftarrow \mathcal{D}} \left[ \frac{{\sf ALG}(I)}{\OPT(I)} \right] \to 0
\quad \text{for every deterministic } {\sf ALG}.
\]

Let $n, m \in \mathbb{N}$ and $\varepsilon \in \mathbb{R}_{>0}$ be a small constant (assume $\varepsilon \ll \frac{1}{m}, \frac{1}{n}$).  
We define the agent set as \(N = M_1 \cup \dots \cup M_m\), where each group \(M_k\) contains $n$ agents labeled \(i_k\) for \(i \in [n]\), and the groups arrive sequentially in order of increasing $k$.

An input vector \(\sigma \in M_1 \times \dots \times M_{m-1}\) is a tuple \((\sigma_1, \dots, \sigma_{m-1})\) such that each \(\sigma_k \in M_k\) is a designated ``good" agent.

For any subset \(S \subseteq N\), let \(k_S\) denote the largest index such that \(M_{k_S} \cap S \neq \emptyset\) and \(M_k \cap S = \emptyset\) for all \(k > k_S\); define \(k_S = 0\) when \(S = \emptyset\).

We now define an input instance \(I_\sigma\) corresponding to each \(\sigma\). For every $k \in [m]$ and $i_k \in M_k$, we define an additive clause $w_{i_k}^\sigma:2^N\to \mathbb{R}_{\ge 0}$, as follows:
$$w_{i_k}^\sigma(S) = \left| S \cap \{\sigma_\ell\}_{\ell=1}^{k-1} \cap \{i_k\} \right|,$$
with a reward function \(f_\sigma : 2^N \to \mathbb{R}_{\ge 0}\) given by:
\[
f_\sigma(S) = \max_{k \in [m], i_k \in M_k}w^\sigma_{i_k}(S).
\]

We define the input distribution \(\mathcal{D}\) to be uniform over all instances \(I_\sigma\) where \(\sigma \in M_1 \times \dots \times M_{m-1}\).

\paragraph{XOS and Monotonicity.}
We note that for every $\sigma$ and  $i_k \in N$, $w_{i_k}^\sigma$ is additive and monotone, assigning some agents a potential contribution of $1$ and $0$ to the others. Thus, by definition (see Section \ref{sec:preliminaries}) $f_\sigma$ is necessarily XOS, and as a maximum of monotone functions, $f_\sigma$ is monotone as well.

\paragraph{Utility Analysis.}
Assume each agent \(i_k\) has cost \(c_{i_k} = \varepsilon\). If every agent in a hired set \(S \subseteq N\) contributes positively, then each has \(\alpha_{i_k} = c_{i_k}\), and using Eq.~\eqref{eq: g}, we compute:
\begin{align}
\label{eq: g-xos-neg}
g(S) = (1 - \alpha(S)) f_\sigma(S) 
= \left(1 - \sum_{i_k \in S} \varepsilon \right) |S| 
\approx |S|.
\end{align}
In contrast, if any agent \(i_k \in S\) does not contribute—i.e., is not a designated agent or included from an invalid group—then \(\alpha_{i_k} = \frac{c_{i_k}}{0} = \infty\), and the total utility becomes \(g(S) = -\infty\).

Thus, the algorithm must avoid non-contributing agents entirely, which the adversary exploits: since the designated agent in each group is chosen uniformly at random (from the perspective of \({\sf ALG}\)), the probability of consistently choosing contributing agents decays over rounds. This compounding uncertainty leads to arbitrarily small expected utility relative to the optimum, implying that the competitive ratio \(\gamma\) is unbounded below.

We now formalize the proof idea presented above. Let $A_h^\ell$ denote the expected competitive ratio achieved by the \emph{best} deterministic algorithm, conditioned on having exactly $h$ contributing agents hired after the arrival of the last agent in group $M_\ell$ (and no other agents).  
In particular, our goal is to bound $A_1^1$, which, by Yao's min-max principle, satisfies:
\[
A_1^1 = \max_{\text{deterministic } {\sf ALG}} \mathbf{E}_{\sigma \leftarrow \mathcal{D}} \left[ \frac{{\sf ALG}(I_\sigma)}{\OPT(I_\sigma)} \right].
\]
This equality holds because, at the first step, any optimal deterministic algorithm must hire exactly one agent from $M_1$, and all such agents are indistinguishable ex ante. 

We now define a recursive expression for $A_h^\ell$:
\[
A_h^\ell = 
\begin{cases}
  \displaystyle \frac{h}{m} & \text{if } \ell = m, \\[10pt]
  \displaystyle \frac{1}{n} A_{h+1}^{\ell+1} + \frac{n-1}{n} A_h^{\ell+1} & \text{if } 1 \leq \ell < m.
\end{cases}
\]
We argue that this recursion correctly captures the behavior of $A_h^\ell$.

\paragraph{Base case ($\ell = m$).}
At the arrival of the final group $M_m$, suppose the algorithm has hired $h$ contributing agents so far. Since the utility from each contributing agent is approximately $1$ (the precise value is given in Eq. \eqref{eq: g-xos-neg}), the total expected utility of the algorithm is approximately $h$.  
On the other hand, the optimal solution consists of the $m-1$ designated agents $\{\sigma_1, \dots, \sigma_{m-1}\}$ plus any single agent from $i_m \in M_m$, with $w_{i_m}^\sigma$ as the additive clause that maximizes $f_\sigma$, yielding:
\[
\OPT(I_\sigma) \approx m.
\]
Therefore, the competitive ratio is approximately $\frac{h}{m}$, which matches the definition of $A_h^m$.

\paragraph{Induction step ($1 \leq \ell < m$).}
Consider the best possible decision faced by a deterministic algorithm ${\sf ALG}$ after the arrival of the last agent in group $M_{\ell}$, assuming it has hired a group $S$ of $h$ contributing agents, and consider $w_{i_k}^\sigma$, the clause that realizes $f_\sigma$. The fact that hiring more than one agent from $M_k$ results in a utility of $-\infty$ as described above implies that there must be at least $h - 1$ agents in $S$ that match the designated agents $\{\sigma_1, \dots, \sigma_{k-1}\}$. Consequently, these agents will continue contributing regardless of future hiring decisions.

Let $i_k$ denote the last agent hired by ${\sf ALG}$, where $k \leq \ell$. We now argue that optimality requires $k = \ell$. Suppose instead that $i_k = \sigma_k$ for some $k < \ell$. Then, hiring any agent $i_\ell$ from $M_\ell$ would lead to $w_{i_\ell}^\sigma(S) = |S| + 1$, strictly increasing the utility—contradicting the optimality of ${\sf ALG}$. Therefore, it must be that $i_k \neq \sigma_k$, and it is strictly better for ${\sf ALG}$ to dismiss $i_k$ and hire some agent from $M_\ell$. Doing so preserves the contributions from earlier designated agents while introducing a chance that the new hire is $\sigma_\ell$, potentially increasing utility in the next stage.

Because $\sigma_\ell$ is drawn uniformly at random from $M_\ell$, and its members are indistinguishable upon arrival, the probability that ${\sf ALG}$ selects $\sigma_\ell$ is $\frac{1}{n}$. In that case, after the arrival of the last agent in $M_{\ell+1}$, hiring any agent from that group increases the number of contributing agents to $h+1$, transitioning us to state $A_{h+1}^{\ell+1}$. In the complementary case—where $i_\ell \neq \sigma_\ell$—the algorithm remains with only $h$ contributing agents, even after interacting with $M_{\ell+1}$, yielding state $A_h^{\ell+1}$.

Applying the law of total expectation over these two outcomes, we recover the recurrence:
\[
A_h^\ell = \frac{1}{n} A_{h+1}^{\ell+1} + \frac{n-1}{n} A_h^{\ell+1}.
\]

To conclude, by Lemma~\ref{lem: aux rem SM neg}, we have
\[
A_1^1 = O\left(\frac{1}{n} + \frac{1}{m}\right).
\]
Hence, the competitive ratio $\gamma$ can be made arbitrarily small by choosing sufficiently large $n$ and $m$. This completes the proof.
\end{proof}

\subsection{Without Preemption}
\label{app:dismiss-free}

To complete the picture, we show that dismiss-free OMAC admits no algorithm with a competitive ratio (CR) bounded away from zero. This impossibility holds even under the assumption of uniform agent quality (see Definition~\ref{def:quality-set}).
The impossibility follows by taking $n \rightarrow \infty$ in the next theorem: 

\begin{theorem}
    \label{theo: general negative}
    Consider OMAC with uniform agent quality $q$ and at most $n$ arriving agents. The CR $\gamma$ of any algorithm $\mathsf{ALG}$ satisfies $\gamma = O(1/n)$.
\end{theorem}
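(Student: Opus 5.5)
The plan is to use Yao's min--max principle, as in the proof of Theorem~\ref{theo: free-dismissal SM negative}. I will build a distribution over prefixes of one fixed adversarial sequence of $n$ agents, all of quality $q$, and show that every deterministic dismiss-free algorithm has expected ratio $O(1/n)$. With uniform quality, Eq.~\eqref{eq: g} gives $g(S)=q\,\alpha(S)(1-\alpha(S))$. So the utility depends only on the total share $\alpha$: it is a parabola, maximized at $\alpha=1/2$ and negative once $\alpha>1$. A dismiss-free algorithm is then described by the set $H$ of agents it accepts, and its total share $\alpha_k=\sum_{j\in H,\,j\le k}\alpha_j$ can only grow with $k$.

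For the sequence I take geometrically increasing shares $x_j=2^{j-n-1}$, $j=1,\dots,n$, so $x_n=1/2$ and $\sum_{j\le k}x_j<2x_k$. The distribution stops the arrivals after agent $k$, with $k$ uniform on $[n]$. For the benchmark, taking agent $k$ alone gives $\OPT(N_k)\ge q\,x_k(1-x_k)\ge q\,x_k/2$. For the algorithm, at stopping time $k$ we have $g\le q\,\alpha_k$, which gives
\[
\frac{\Alg(N_k)}{\OPT(N_k)}\le \frac{2}{x_k}\sum_{j\in H,\,j\le k}x_j=2\sum_{j\in H,\,j\le k}2^{j-k}.
\]
Averaging over $k$ and swapping the order of summation, each accepted agent $j$ contributes at most $\frac{2}{n}\sum_{k\ge j}2^{j-k}\le \frac4n$. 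The expected ratio is therefore at most $\frac{4}{n}|H|$. It remains to show that any algorithm that does well must have $|H|=O(1)$, or to modify the instance so that large $|H|$ is penalized.

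I expect this last point to be the main obstacle: nothing in the bare geometric sequence stops an algorithm from accepting every agent. That algorithm has $\alpha_k<2x_k$, so it stays within a constant factor of $\OPT$ at every $k<n$. I would penalize such behaviour with a second family of continuations. After prefix $k$, the adversary either stops, or sends one extra agent whose share is tuned so that combining it with a nonempty history overshoots the peak of the parabola and is nearly worthless, while taking it alone (available only to an algorithm that has accepted little) is near-optimal. My first attempt is to give this agent share close to $1$ and tune it against the alternative of stopping, so that each acceptance costs the algorithm a constant fraction of its value in these branches. If the tuning works, accepting many agents and accepting few agents each lose a constant fraction of the mass in one of the two branch families. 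Combined with the $\frac4n|H|$ bound, this would give $\gamma=O(1/n)$.

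If the penalty construction turns out not to balance, the fallback is to change the share profile instead. I would choose the shares so that $\sum_{j\le k}x_j$ exceeds $1$ quickly, which makes $g<0$ and caps $|H|$ by the share budget. The price is re-checking that $\OPT(N_k)=\Theta(q\,x_k)$ still holds, so that the per-acceptance charge of $O(1/n)$ survives.
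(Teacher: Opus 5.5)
Your proposal has a genuine gap: it never produces an instance that works. In your geometric sequence (call it the backbone), accepting every agent is exactly optimal on each prefix $N_k$ with $k\le n-1$. There $\alpha(N_k)<2x_k\le 1/2$, and $g$ is increasing on $[0,1/2]$.

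The penalty continuation cannot repair this. Let A accept every backbone agent and nothing else, let C accept only backbone agent $n$, and let B reject the backbone and accept the extra agent. Each has nonnegative ratio on every branch. A is optimal on every branch that stops at some $k\le n-1$. C is optimal on the branch stopping at $n$, since share $1/2$ attains the global maximum $q/4$. B is near-optimal on every extra-agent branch by your own design. So for any distribution over these branches the three expected ratios sum to nearly $1$, and one of them is at least about $1/3$.

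Your fallback is the right idea (it is the paper's), but as phrased it also fails. Capping $|H|$ through the budget means shares bounded below by a constant, and $\OPT(N_k)=\Theta(qx_k)$ presumes shares bounded away from $1$. Under both conditions, all singleton utilities are within a constant factor of $q/4\ge\OPT$, so accepting agent $1$ alone already has constant ratio on every prefix. The geometric decay your bookkeeping needs can only come from the factor $1-\alpha_i$. Shares must exceed $1/2$, start close to $1$ and \emph{decrease} along the sequence, and $g\le q\alpha_k$ must be replaced by $g(\{i\})\approx q(1-\alpha_i)$.

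Concretely (the paper's construction, up to an index shift), take $1-\alpha_i=\varepsilon^{n-i+1}$. A second acceptance pushes the total share above $1$ and makes the utility negative on every later prefix. So every deterministic dismiss-free algorithm is dominated by one holding at most one agent $j$. Since $\OPT(N_k)=g(\{k\})$, that algorithm has ratio $0$ on $N_k$ for $k<j$, ratio $1$ for $k=j$, and ratio $g(\{j\})/g(\{k\})\le\varepsilon^{k-j}/(1-\varepsilon)$ for $k>j$. Averaging over uniform $k$ gives $O(1/n)$; this is your swap-the-sums accounting with $|H|=1$.
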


\begin{proof}
    To show that $\gamma$ can be made arbitrarily close to zero, we invoke Yao's min-max principle:
    \begin{align*}
    \gamma = \max_{\text{randomized } \mathsf{ALG}_r} \min_I \frac{\mathbf{E}[\mathsf{ALG}_r(I)]}{\OPT(I)} \leq
    \min_{\mathcal{D}} \max_{\text{deterministic } \mathsf{ALG}} \mathbf{E}_{I \leftarrow \mathcal{D}}\left[\frac{\mathsf{ALG}(I)}{\OPT(I)}\right].
    \end{align*}
    Our goal is therefore to construct a distribution $\mathcal{D}$ over input instances such that for every deterministic algorithm $\mathsf{ALG}$, the expected competitive ratio tends to zero.

    Construct the set of agents $N = \{1, \dots, n\}$, where for each $1 \leq i \leq n$, agent $i$ has share size $\alpha_i = 1 - \varepsilon^{n - i}$. Accordingly, the principal’s utility from hiring agent $i$ is given by
    \[
    g(\{i\}) = \varepsilon^{n - i}(1 - \varepsilon^{n - i})q.
    \]

    For each $1 \leq i \leq n$, define $I_i$ to be an input instance consisting of the first $i$ agents in the arrival order. Let $\mathcal{D}$ be the uniform distribution over the set $\{I_1, \dots, I_n\}$.

    Since all agents satisfy $\alpha_i > \frac{1}{2}$, any algorithm that hires more than one agent will incur total share size exceeding $1$, violating feasibility. Thus, the optimal solution in any instance hires only a single agent. Consequently, we focus on deterministic algorithms $\mathsf{ALG}_j$ that hire exactly one agent—specifically, the $j$-th one.

    Over instance $I_j$, the algorithm $\mathsf{ALG}_j$ hires agent $j$ and achieves a competitive ratio of $1$, as this is also the optimal solution. However, for every $i < j$, the algorithm $\mathsf{ALG}_j$ fails to hire any agent (since agent $j$ has not yet arrived), resulting in a competitive ratio of $0$. For $i > j$, although $\mathsf{ALG}_j$ hires agent $j$, this is suboptimal since agent $i$ provides strictly higher utility. Specifically, the ratio of their utilities is
    \[
    \frac{g(\{j\})}{g(\{i\})} = \frac{\varepsilon^{n - j}(1 - \varepsilon^{n - j})q}{\varepsilon^{n - i}(1 - \varepsilon^{n - i})q} = \varepsilon^{i - j} \cdot \frac{1 - \varepsilon^{n - j}}{1 - \varepsilon^{n - i}} = O(\varepsilon^{i - j}),
    \]
    implying that the competitive ratio over $I_i$ is at most $O(\varepsilon^{i - j}) = O(\varepsilon)$.

    Since $\mathsf{ALG}_j$ achieves a non-negligible competitive ratio only on instance $I_j$, and $\Pr[I_j] = 1/n$ under $\mathcal{D}$, we conclude that the expected competitive ratio is
    \[
    \mathbf{E}_{I \leftarrow \mathcal{D}}\left[\frac{\mathsf{ALG}_j(I)}{\OPT(I)}\right] = O\left(\frac{1}{n} + \varepsilon\right).
    \]
    Because $\varepsilon$ is arbitrarily small, the result follows.
\end{proof}

\section{Conclusion}
\label{sec:conclusion}

We introduce the Online Multi-Agent Contracting (OMAC) model, bridging economic contract theory and adversarial online algorithm design. We provide an optimal $1/2$-competitive randomized algorithm for additive rewards, and establish that the problem becomes intractable under reward structures with complementarities.
Our findings demonstrate that incentive constraints introduce complexities absent from classical online optimization. While the problem is related to known preemption settings, standard techniques fall short, motivating the introduction of balance points to align sequential hiring with the principal’s utility. 

Dynamic contract design, like other algorithmic decisions fundamentally shaped by incentive compatibility, requires new tools; we outline several promising directions for future work:

\paragraph{Stochastic Online Models.}
Adversarial analysis provides a robust baseline, but exploring stochastic refinements is a natural next step. Following the standard progression in mechanism design—from adversarial to secretary~\cite{kleinberg2005multiple,babaioff2007matroids} and prophet models~\cite{samuelcahn1984comparison,kleinberg2012matroid}—could yield stronger guarantees and clarify the trade-off between robustness and efficiency in dynamic contracting.

\paragraph{Beyond Additive and XOS Reward.}
Bridging the gap between additive and XOS rewards by investigating intermediate classes—such as submodular functions \cite{dutting2022combinatorial}—remains a compelling frontier. Our reduction to online budgeted maximization in Appendix \ref{app:ks to omac} suggests that as the theory of the online knapsack problem continues to mature, breakthroughs in this regime are increasingly within reach \cite{doron2025algorithm, doronarad2024lower}.

\paragraph{Enhancing the Model.}
Another promising direction is to extend the OMAC framework itself beyond binary actions and linear contracts. Allowing agents to choose among multiple effort levels~\cite{dutting2025multi}, or considering contracts that are non-linear in the project's outcome, would capture a broader range of incentive environments. Additionally, one could model actions as part of the online process rather than fixed solely at evaluation, introducing new layers of strategic interaction.
On the structural side, one can move beyond unrestricted preemption towards more general notions of recourse, as studied in the consistency literature~\cite{Lattanzi2017,Fichtenberger2021,Gupta2020,Gupta2022,Lacki2021,Bhattachary2024}. 
Such extensions would continue to bring the model closer to practical applications and also raise new structural questions about how incentives interact with online decision-making.

\clearpage
\bibliographystyle{plain}
\bibliography{bibliography}
\newpage

\appendix

\begingroup

\setcounter{section}{0} 
\onecolumn

\section{Appendix for Section~\ref{sec:balance-points}}
\label{app:bp-analysis}

\begin{proof}[\bf Proof of Lemma~\ref{lem: balance point}:]
    
Using Observation \ref{obs: aux agent bp}, we focus our attention on finding the auxiliary agent \(r\). First, we fix \(r\)'s quality \(q_r = q^x\).
Adding this agent to \(S^{x-1}\) would induce a principal utility of \(g(S^{x-1} \cup \{r\})\). Let us simplify this expression: 
    \begin{align*}
g(S^{x-1} \cup \{r\}) 
&= \left( 1 - \alpha(S^{x-1}) - \alpha_r \right) \left( \alpha_r q^x + \alpha(S^{x-1}) q(S^{x-1}) \right) \\
&= \big(1 - \alpha(S^{x-1}) - \alpha_r\big) \cdot \alpha_r q^x 
    - \alpha_r \alpha(S^{x-1}) q(S^{x-1}) \\
&\quad + \big(1 - \alpha(S^{x-1})\big) \cdot \alpha(S^{x-1}) q(S^{x-1}).
\end{align*}
    The first equality follows from the definition of quality of a group of agents \eqref{eq:quality-set}. Moreover, by the definition of principal's utility, it holds that \(g(S^{x-1}) = (1 - \alpha(S^{x-1}))\alpha(S^{x-1})q(S^{x-1})\). Thus,
    \begin{align*}
g(S^{x-1} \cup \{r\}) 
&= (1 - \alpha(S^{x-1}) - \alpha_r) \cdot \alpha_r q^x 
   - \alpha_r \alpha(S^{x-1}) q(S^{x-1}) + g(S^{x-1}) \\
&= \alpha_r q^x \left( 1 - \alpha(S^{x-1}) \left( 1 + \frac{q(S^{x-1})}{q^x} \right) - \alpha_r \right) + g(S^{x-1}).
\end{align*}

    As \(g(S^{x-1})\) does not depend on \(r\), we get that this utility is maximized when
    \[
    \alpha_r = \frac{1 - \alpha(S^{x-1})\left(1 + \frac{q(S^{x-1})}{q^x}\right)}{2}.
    \]
    Plugging in this value, we have that
    \[
    b(S, x) = \alpha(S^{x-1}) + \alpha_r = \frac{1}{2}(1 + (1 - \frac{q(S^{x-1} )}{q^x})\alpha(S^{x-1} )),
    \]
    as required.
\end{proof}

The next lemma will be useful in the proof of 
Lemma~\ref{lem: one crossing bp} 
\begin{lemma}
    \label{app: bp order}
    For any \(S \subseteq N\) and \(1 \leq x\leq p\), \(b(S, x) \geq b(S, x+1).\) 
\end{lemma}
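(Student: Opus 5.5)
The plan is to work directly from the closed form of Lemma~\ref{lem: balance point} and rewrite it as a sum over individual hired agents, so that the comparison between consecutive quality groups reduces to a term-by-term sign check.

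\emph{Step 1: rewrite the balance point as a sum over agents.} By Definition~\ref{def:quality-set}, \(q(S^{x-1})\,\alpha(S^{x-1}) = f(S^{x-1})\). This also holds when \(S^{x-1}=\emptyset\), using the conventions \(q(\emptyset)=0\) and \(\alpha(\emptyset)=0\). Substituting into Eq.~\eqref{eq: bal_point} gives
\[
2b(S,x) - 1 = \alpha(S^{x-1}) - \frac{f(S^{x-1})}{q^x}.
\]
Since \(f\) is additive and \(f(\{i\}) = \alpha_i q_i\) for each agent \(i\), the right-hand side becomes
\[
2b(S,x) - 1 = \sum_{i \in S^{x-1}} \alpha_i\left(1 - \frac{q_i}{q^x}\right).
\]
The same identity at index \(x+1\) gives
\[
2b(S,x+1) - 1 = \sum_{i \in S^{x}} \alpha_i\left(1 - \frac{q_i}{q^{x+1}}\right).
\]

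\emph{Step 2: compare the two sums term by term.} Split \(S^x = S^{x-1} \cup (S\cap Q_x)\).
\begin{itemize}
\item For \(i \in S^{x-1}\): quality groups are strictly decreasing, so \(q^{x+1} < q^x\). Hence \(q_i/q^{x+1} \ge q_i/q^x\), so each coefficient \(1 - q_i/q^{x+1}\) is at most \(1 - q_i/q^x\). Since \(\alpha_i \ge 0\), the \(S^{x-1}\)-part of the second sum is at most the whole first sum.
\item For \(i \in S\cap Q_x\): we have \(q_i = q^x > q^{x+1}\), so the coefficient \(1 - q^x/q^{x+1}\) is negative. These extra terms therefore only decrease the second sum.
\end{itemize}
Together, \(2b(S,x+1)-1 \le 2b(S,x)-1\), which is the claim. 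The case \(x = p\) is vacuous, or follows identically if \(b(S,p+1)\) is interpreted via the formula.

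\emph{Main obstacle.} There is no real obstacle here. The only points needing care are the rewriting \(q(S^{x-1})\alpha(S^{x-1}) = f(S^{x-1})\), including the empty-set convention, and the use of strict ordering of group qualities together with nonnegativity of shares (\(c_i \ge 0\)). If some \(c_i = 0\), so that \(\alpha_i = 0\) and \(q_i = \infty\), that agent contributes a term equal to \(-f(\{i\})/q^x\) rather than a product. That term is handled by the same monotonicity in \(1/q\), so I would state the argument in the form \(\alpha_i - f(\{i\})/q^x\) to cover this case uniformly.
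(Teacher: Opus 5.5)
Your proof is correct and is essentially the paper's argument. The paper also expands $q(S^x)\alpha(S^x)$ as $\sum_{y\le x} q^y\alpha(S\cap Q_y)$ and replaces $1/q^{x+1}$ by $1/q^x$ throughout, which is your term-by-term comparison grouped by quality class: there the $Q_x$ terms cancel exactly, whereas you show they are nonpositive.
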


\begin{proof}
   Using Lemma \ref{lem: balance point}, we have that 
    \begin{align}
b(S, x+1) 
&= \frac{1}{2} \left(1 + \alpha(S^x) - \frac{q(S^x)}{q^{x+1}} \alpha(S^x) \right) \nonumber\\
&= \frac{1}{2} \left(1 + \alpha(S^{x-1}) + \alpha(S \cap Q_x) 
       - \frac{q(S^x) \alpha(S^x)}{q^{x+1}} \right) \label{eq:bp-order-1}\\
&= \frac{1}{2} \left(1 + \alpha(S^{x-1}) + \alpha(S \cap Q_x) 
       - \frac{\sum_{1 \leq y \leq x} q^y \alpha(S \cap Q_y)}{q^{x+1}} \right) \label{eq:bp-order-2}\\
&\leq \frac{1}{2} \left(1 + \alpha(S^{x-1}) + \alpha(S \cap Q_x) 
       - \frac{\sum_{1 \leq y \leq x} q^y \alpha(S \cap Q_y)}{q^x} \right) \label{eq:bp-order-3}\\
&= \frac{1}{2} \left(1 + \alpha(S^{x-1}) 
       - \frac{\sum_{1 \leq y < x} q^y \alpha(S \cap Q_y)}{q^x} \right)\nonumber.
\end{align}
    In \eqref{eq:bp-order-1} we use the fact that 
    \[
    S^x = S \cap (\cup_{1 \leq y \leq x}Q_y) = S^{x-1} \cup (S \cap Q_x),
    \]
    and \eqref{eq:bp-order-2} holds because \(S^x = \cup_{1 \leq y \leq p} (S \cap Q_y)\), as the quality groups are distinct. For inequality \eqref{eq:bp-order-3}, \(q^x \geq q^{x+1}\) due to sorting.

    Applying Lemma \ref{lem: balance point}, we have
    \begin{align*}
b(S, x+1) 
&\leq \frac{1}{2} \left(1 + \alpha(S^{x-1}) - \frac{\sum_{1 \leq y < x} q^y \alpha(S \cap Q_y)}{q^x} \right) \\
&= \frac{1}{2} \left(1 + \alpha(S^{x-1}) - \frac{q(S^{x-1}) \cdot \alpha(S^{x-1})}{q^x} \right) = b(S, x),
\end{align*}
which completes the proof.
\end{proof}

\begin{proof}[\bf Proof of Lemma~\ref{lem: one crossing bp}:]
    Consider any \(\bar{S}_j\) with \(j \geq i\), where \(j \in Q_y\) (with \(y \geq x\)). 
    By Lemma \ref{app: bp order}, we have that
    \[
    \alpha(\bar{S}_i \cup \{j\}) \geq \alpha(\bar{S}_i) \geq b(S,x) \geq b(S, y).
    \]
    Hence, by the definition of balance point (\ref{def: bp}), \(g(\bar{S}_i) \geq g(\bar{S}_i \cup \{j\})\).
    Iterating this argument over all subsequent agents gives
    \[
    g(\bar{S}_i) \geq g(\bar{S}_i \cup \{i+1\}) \geq g(\bar{S}_{i+1} \cup \{i+2\}) \geq \dots \geq g(S),
    \]
    as required.
\end{proof}

\begin{proof}[\bf Proof of Lemma~\ref{lem: comp bp 1}:]

Consider a quality group $Q_x$ for some $1 \leq x \leq z$.  
We define an auxiliary agent $r$ with quality $q_r = q^x$ and share $\alpha_r = \alpha(S \cap Q_x) - \alpha(S' \cap Q_x)$, which compensates for the gap between the shares taken by $S$ and $S'$ over this quality group.  
Note that this auxiliary agent is well-defined, since $\alpha_r \geq 0$.
We show that when agent $r$ is added to $S$, all subsequent balance points—i.e., $b(S, y)$ for $y > x$—decrease, while the preceding ones remain unchanged.  
Thus, by repeatedly applying this argument across all quality groups, we construct a hypothetical hired group of agents equivalent to $S'$, for which none of the balance points increase, as required.

Let us now analyze the effect of adding agent $r$ on each balance point up to $z$.  
For any quality group $Q_y$ with $1 \leq y \leq x$, the balance point $b(S, y)$ remains unchanged, since agents of lower-quality groups do not influence the balance points of higher-quality groups, by Definition~\ref{def: bp}.  
Furthermore, for every quality group $Q_y$ with $x < y$, using Lemma~\ref{lem: balance point} and standard algebraic manipulations, we have
    \begin{align*}
    b(S \cup \{r\}, y) 
    &= \frac{1}{2}\left(1 + \alpha(S^{y - 1} \cup \{r\})\left(1 - \frac{q(S^{y-1} \cup \{r\})}{q^y}\right)\right) \\
    &= \frac{1}{2} \left(1 + \alpha(S^{y - 1} \cup \{r\}) 
         - \frac{\alpha_r q_r + \alpha(S^{y-1}) q(S^{y-1})}{q^y} \right) \\
    &= \frac{1}{2} \left(1 + \alpha(S^{y - 1}) 
         - \frac{\alpha(S^{y-1}) q(S^{y-1})}{q^y} 
         + \alpha_r \left(1 - \frac{q_r}{q^y} \right) \right) \\
    &= b(S, y) + \frac{1}{2} \alpha_u \left(1 - \frac{q_r}{q^y} \right).
    \end{align*}
    Since \(x > y\) implies \(q_r = q^x > q^y\), we have that \(\frac{1}{2}\alpha_r(1 - \frac{q_r}{q^y}) \leq 0\). 
    Therefore,
    \[
    b(S \cup \{r\}, y) \leq b(S, y),
    \]
    as required.
\end{proof}

\begin{proof}[\bf Proof of Lemma~\ref{lem: comp bp 2}:]
Given the sets $S$ and $S'$, we construct a group of hired agents $T$ such that 
\[
g(S) \leq g(T) \leq g(S').
\]
This construction uses an auxiliary agent, so $T$ is not necessarily a subset of $N$.

We initialize $T \leftarrow \emptyset$.  
Then, we iterate over each agent $i \in \bar{S}'$ in order, adding $i$ to $T$ as long as $\alpha(T \cup \{i\}) < \alpha(S)$. Once this inequality no longer holds, we stop the process.  
Let $t \in S'$ denote the agent that caused the process to stop.
    Since
    \[
    \alpha(S') = \sum_{1 \leq x \leq p} \alpha(S' \cap Q_x) \geq \sum_{1 \leq x \leq p} \alpha(S \cap Q_x) = \alpha(S),
    \]
    such an agent \(t\) must exist. 
    Let $Q_z$ be its quality group.

    To finalize \(T\), we add an auxiliary agent \(t'\) with \(q_{t'} = q_t\) and \(\alpha_{t'} = \alpha(S) - \alpha(T)\), leading to \(\alpha(T) = \alpha(S)\).

By the quality group relation between $S$ and $S'$, the set $T$ reallocates some of the shares previously assigned to $Q_z$ to higher-quality agents, thereby improving the overall quality of the group. Therefore,
    \[
    g(S) = (1 - \alpha(S)) \alpha(S) q(S) = (1 - \alpha(T)) \alpha(T) q(S) \leq (1 - \alpha(T)) \alpha(T) q(T) = g(T).
    \]

Furthermore, since $T^{z-1} = S'^{z-1}$ and $\alpha(T^z) \leq \alpha(S'^z) \leq b(S', z) = b(T, z)$, it follows from the properties of balance points that
\[
g(S'^z) \geq g(T^z) = g(T).
\]
We note that adding agents from lower-quality groups $Q_x$ with $x > z$ only increases utility, as long as the allocations remain within their respective balance points.  
Therefore, we have that
\[
g(S) \leq g(T) \leq g(S').
\]
\end{proof}

\begin{proof}[\bf Proof of Lemma~\ref{lem: comp bp 3}:]
    We extend the construction in the proof of Lemma~\ref{lem: comp bp 2}. 
    We build \(T\) over \(S^y\) and \(S'^y\) such that \(g(S) \leq g(T) \leq g(S'^y \cup \{r\})\).

    As before, we initialize \(T \leftarrow \emptyset\) and iterate over agents \(i \in \bar{S}'^y\) in order, adding them to \(T\) as long as \(\alpha(T \cup \{i\}) < \alpha(S^y)\). 
    We stop at agent \(t \in S'^y\), which must exist since
    \[
    \alpha(S'^y) = \sum_{1 \leq x \leq y} \alpha(S' \cap Q_x) \geq \alpha(S^y).
    \]
    Let \(t \in Q_z\) be this agent's quality group.

    We now add two auxiliary agents: 
    \(t'\) with \(q_{t'} = q^z\) and \(\alpha_{t'} = \alpha(S^y) - \alpha(T)\), and \(t''\) with \(q_{t''} = q^{y+1}\) and \(\alpha_{t''} = \alpha(S) - \alpha(T)\). 
This ensures that $\alpha(T) = \alpha(S)$ and $q(T) \geq q(S)$,  
since the shares originally assigned in $S$ to agents of lower quality than $q^y$  
(i.e., those with quality at most $q^{y+1}$) are effectively replaced by shares allocated to agents of quality $q^{y+1}$.

Finally, since $T^y = S'^y$ by construction and $g(T^{y+1}) \leq g(S'^y \cup \{r\})$ by Observation~\ref{obs: aux agent bp}, we obtain:
\[
g(S) \leq g(T) = g(T^{y+1}) \leq g(S'^y \cup \{r\}),
\]
completing the proof.
\end{proof}

\section{From Online Budgeted Maximization To OMAC}
\label{app:ks to omac}

The technique developed in Sections~\ref{sec:balance-points} and~\ref{sec:rand-alg-additive} centers on the notion of \emph{balance points}—dynamic thresholds that capture the optimal tradeoff between allocating shares to agents and preserving the principal’s utility. To underscore the significance of this technique for our positive result, we present a simpler but weaker alternative, which treats the threshold statically. Rather than adjusting balance points dynamically as the process evolves, we fix a global constraint on the total share that may be distributed.

Viewing the problem through this lens establishes a structural alignment between OMAC and the broader landscape of online budgeted maximization with preemption. Namely, the problem becomes closely related to the classical \emph{Online Knapsack with preemption} (OKS) problem~\cite{RemovableKS}.
In OKS, we are given a budget $\beta \in \mathbb{R}_{>0}$ and a sequence of items arriving one by one. Each item $e \in E$ is characterized by a value $p_e$ and a cost $c_e$. Upon the arrival of each item, the algorithm must decide whether to accept it, with the option to subsequently discard previously accepted items. The goal is to maximize the total value $p(V) = \sum_{e \in V} p_e$ of the accepted items, subject to the constraint $\sum_{e \in V} c_e \le \beta$.

We can obtain a direct reduction from OKS to OMAC. Consider a sequence of OKS items $E$, each with value $p_e$ and cost $c_e$. For every budget $\beta > 0$, define an order-preserving mapping $\Phi_\beta : E \to N$, such that if item $e$ arrives before item $v$, then $\Phi_\beta(e)$ arrives before $\Phi_\beta(v)$. Under this mapping, each item $e$ becomes an agent $i = \Phi_\beta(e)$ with reward $f(\{i\}) = p_e$ and share $\alpha_i = c_e$. Extend $f$ additively over sets, so that $f(S) = \sum_{i \in S} f(\{i\})$.

With this setup, any subset $V \subseteq E$ of selected items corresponds to a set of agents $S = \Phi_\beta(V)$ such that $f(S) = \sum_{e \in V} p_e$ and $\alpha(S) = \sum_{e \in V} c_e \le \beta$. Thus, selecting items in OKS is equivalent to hiring agents in OMAC, and the option of free dismissal is preserved under the mapping $\Phi_\beta$.

Formally, this correspondence implies that the $1/2$-competitive algorithm for OKS from~\cite{RemovableKS} yields a $1/4$-competitive algorithm for OMAC under the reduction described above, when using a fixed budget $\beta = 1/2$. We denote this algorithm by ${\sf ALG}_{\text{OKS}}$, defined by running \textsc{Greedy} (equivalent to Algorithm~\ref{algo: bp} with a constant threshold $\beta$ instead of balance points) and \textsc{Max} (which selects the item with the highest profit, as opposed to Algorithm~\ref{alg: max}, which chooses the agent maximizing the principal’s utility), each with probability $1/2$.

Given a budget $\beta$, we use ${\sf ALG}_{\text{OKS}}$ together with the reduction $\Phi_\beta$ to construct a corresponding algorithm for OMAC, denoted ${\sf ALG}_{\text{OMAC}}^\beta$. For any sequence of items $V$, this algorithm returns the hired agent set $\Phi_\beta({\sf ALG}_{\text{OKS}}(V))$.

To handle edge cases in our setting, we extend \textsc{Max} to also consider elements whose cost exceeds the budget (but remains at most $1$), to avoid degenerate behavior. For example, without this extension, an item $e \in E$ with cost $c_e \in (\beta, 1]$ would be ignored, and if the input consists of only this item (e.g., with $c_e = \beta + \varepsilon$ for arbitrarily small $\varepsilon > 0$), ${\sf ALG}_{\text{OMAC}}^\beta$ would fail to select it—yielding a competitive ratio of $0$.

This construction leads to the following tight guarantee, indicating that the static-threshold approach fails to capture the full complexity of OMAC:

\begin{theorem}
    \label{theo:ks and omac pos}
    Let ${\sf ALG}_{\text{OMAC}}^\beta$ be as defined above. Then:
    \begin{itemize}
        \item The algorithm ${\sf ALG}_{\text{OMAC}}^{1/2}$ is $1/4$-competitive for OMAC.
        \item For any fixed budget $\beta > 0$, the competitive ratio of ${\sf ALG}_{\text{OMAC}}^\beta$ is at most $1/4$.
    \end{itemize}
\end{theorem}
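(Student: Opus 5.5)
The proof has two parts: a positive bound for $\beta=1/2$, which rests on the \textsc{Greedy}+\textsc{Max} analysis for OKS, and a family of bad instances for the upper bound.

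\textbf{Positive part ($\beta=1/2$).} The key fact is that for every $S$ with $\alpha(S)\le 1/2$ we have $1-\alpha(S)\in[1/2,1]$. Hence, on such sets, $g(S)$ and $f(S)$ agree up to a factor $2$:
\[
\tfrac12 f(S)\le g(S)\le f(S).
\]

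Fix an optimal OMAC team $S^*$. There are two cases.

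If $\alpha(S^*)\le 1/2$, then $S^*$ is a feasible OKS solution with budget $1/2$. By the analysis of~\cite{RemovableKS}, $f(\textsc{Greedy})+f(\textsc{Max})\ge \OPT_{\text{OKS}}\ge f(S^*)\ge g(S^*)$. Both outputs have share at most $1/2$, except possibly the extended \textsc{Max} output, which I treat below. So
\[
\mathbf{E}[g]\ \ge\ \tfrac12\cdot\tfrac12\bigl(f(\textsc{Greedy})+f(\textsc{Max})\bigr)\ \ge\ \tfrac14\,g(S^*).
\]

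If $\alpha(S^*)>1/2$, I split $S^*$ in decreasing quality order, as in Lemma~\ref{lem: one crossing bp}. This gives a prefix $P$ with $\alpha(P)\le 1/2$ and at most one crossing agent $t$, plus a tail. Using $g(S^*)=(1-\alpha(S^*))f(S^*)$ and $1-\alpha(S^*)<1/2$, I would bound $g(S^*)\le f(P)/2+g(\{t\})$-type terms. I then charge $f(P)$ to \textsc{Greedy}, via feasibility in OKS, and $g(\{t\})$ to the extended \textsc{Max}.

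\emph{Main obstacle.} \textsc{Max} maximizes $f$, not $g$. An agent with share in $(1/2,1]$ and large $f$ could be selected by \textsc{Max} while giving tiny $g$. Here I would check what the extension actually does, and restrict it so that such items are used only when they beat the budget-feasible candidates in $g$. I would also verify that the constants still give $1/4$.

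\textbf{Upper bound (any fixed $\beta$).} I would give a few adversarial instance families, one per regime.

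\emph{Large $\beta$ ($\beta\ge 1/2$).} Send many small agents of equal quality, with total share $\beta$. \textsc{Greedy} fills the budget to $\beta$ and gets $g=(1-\beta)\beta q$, while $\OPT=q/4$ at share $1/2$. \textsc{Max}, which picks by $f$, can be fooled by one agent with share near $1$, high $f$, and tiny $g$. So only \textsc{Greedy} contributes. Tuning this gives ratio $\tfrac12\cdot 4\beta(1-\beta)\cdot(\cdot)$; one more trick is needed to push it down to $1/4$. For example, first send an agent whose inclusion is greedy-optimal but makes $g$ comparable to $\OPT/2$, in the style of the OKS $1/2$ lower bound.

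\emph{Small $\beta$ ($\beta<1/2$).} Send small equal-quality agents totalling share $1/2$. Greedy then gets at most $4\beta(1-\beta)\cdot\OPT$, and \textsc{Max} gets $O(\varepsilon)$.

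\emph{Combining.} I would take the worst instance in each regime and show the randomized mixture is at most $1/4+o(1)$ at $\beta=1/2$, and smaller elsewhere. I expect that getting exactly $1/4$ requires combining the OKS tight example, which halves the value, with the $1-\alpha$ factor loss, which halves it again. That combination is the delicate construction.
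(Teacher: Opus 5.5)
Neither bullet is actually established. For the upper bound, your families use many small agents of a single quality. There \textsc{Greedy} loses only the factor $4\beta(1-\beta)$ and \textsc{Max} gets $\approx 0$, so the mixture is $\approx 2\beta(1-\beta)$. That is $1/2$ at $\beta=1/2$ and exceeds $1/4$ for every $\beta$ in roughly $(0.15,0.85)$, so the ``one more trick'' is the entire argument.

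The missing idea is to use unequal qualities, so that in each regime one subroutine earns $\approx 0$ and the other at most about $\OPT/2$. For $\beta\ge 1/2$ the paper takes $1/\varepsilon$ agents of reward $\varepsilon$ and share $\varepsilon^2$, so $\OPT\ge 1-\varepsilon$ at negligible share. It adds one agent of reward $\varepsilon$ and share just below $\beta-\varepsilon$. \textsc{Greedy} only checks the budget, keeps everything, and earns $\approx(1-\beta)\OPT\le \OPT/2$. \textsc{Max} faces equal rewards and earns at most $\varepsilon$. For $\beta<1/2$ the paper takes two agents of reward $1$ and share $\beta+\varepsilon$, and one of reward $1+\varepsilon$ and share $2(\beta+\varepsilon)$. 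No agent fits, so \textsc{Greedy} is empty. The extended \textsc{Max} picks the big agent, earning $(1-2\beta-2\varepsilon)(1+\varepsilon)$ against $2(1-2\beta-2\varepsilon)$ for the pair.

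For the positive bullet, your case $\alpha(S^*)\le 1/2$ is exactly the paper's chain $g(S^*)\le f(S^*)\le 2\mathbf{E}[f(S)]\le 4\mathbf{E}[g(S)]$. Your case $\alpha(S^*)>1/2$, however, is only a sketch, and charging $g(\{t\})$ to \textsc{Max} cannot work because \textsc{Max} ranks by $f$. The paper never bounds $g$ in that case. It shows $f(S^*)\le p(G)+p_v$, where $G$ is the greedy output and $v$ is the highest-quality element outside $G$. It gets there by using optimality of $S^*$ and the decrease of $x(1-x)$ on $(1/2,1)$ to show $c(G\cup\{v\})\ge\alpha(S^*)$. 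This needs only $p_v\le p(\textsc{Max})$, and then the same chain applies.

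The obstacle you flagged sits in the last step of that chain, which needs $\alpha(S)\le 1/2$ for every output. The paper does not resolve it either. It cites $\mathbf{E}[\alpha(S)]\le\beta$, which ignores the extension and would in any case be needed pointwise. Meanwhile $p_v\le p(\textsc{Max})$ genuinely needs the extension, since $c_v$ may exceed $1/2$.

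In fact the obstacle is fatal for the algorithm as defined, using the same fooling device you proposed for the upper bound. Let agents arrive in this order: $a$ (share $\varepsilon$, reward $2\varepsilon$), $b$ (share $\tfrac12-\tfrac{\varepsilon}{2}$, reward $1-2\varepsilon$), $c$ (share $1-\delta$, reward $1$). Since $q_a>q_b$, $\alpha_a+\alpha_b>1/2$ and $c$ never fits, \textsc{Greedy} ends with $\{a\}$. The extended \textsc{Max} ends with $\{c\}$. So the expected utility is $O(\varepsilon+\delta)$, while $g(\{b\})\to 1/2$. Restricting the extension, as you suggest, therefore changes the algorithm rather than proving the stated bullet, and the $1/4$ bound would have to be re-proved for the modified algorithm.
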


While this approach is weaker than the balance-point framework (which achieves a competitive ratio of $1/2$), it reveals a structural similarity between the two problems and underscores why dynamic thresholds are crucial for attaining tight guarantees in OMAC. Additionally, it highlights the potential value in studying the OKS model in its own right. The remainder of this section is dedicated to proving the above results.

Let $E$ be a set of elements, and let $\OPT_{\text{OKS}} \subseteq E$ and $\OPT_{\text{OMAC}} \subseteq N = \Phi_\beta(E)$ denote optimal solutions for the OKS instance and its corresponding OMAC instance, respectively.

\begin{lemma}
    \label{lem:ks and omac pos 1}
    The expected profit of the algorithm in OKS provides a 2-approximation to the principal’s reward in OMAC; that is,
    \[
    2\,\mathbf{E}\big[p({\sf ALG}_{\text{OKS}}(E))\big] \;\;\ge\;\; f(\OPT_{\text{OMAC}}).
    \]
\end{lemma}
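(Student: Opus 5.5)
We will prove the bound by chaining two inequalities through the optimal knapsack value with budget $\beta = 1/2$:
\[
f(\OPT_{\text{OMAC}}) \le 2\,p(\OPT_{\text{OKS}}) \le 4\,\mathbf{E}\bigl[p({\sf ALG}_{\text{OKS}}(E))\bigr].
\]
Neither inequality holds with those constants as written. The right one is the $1/2$-competitiveness of~\cite{RemovableKS}, $\mathbf{E}[p({\sf ALG}_{\text{OKS}}(E))] \ge \tfrac12 p(\OPT_{\text{OKS}})$, which would only give a factor of $4$ overall.

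The plan is therefore to argue directly from the structure of ${\sf ALG}_{\text{OKS}}$, which runs \textsc{Greedy} and \textsc{Max} each with probability $1/2$. It suffices to show
\[
p(\textsc{Greedy}(E)) + p(\textsc{Max}(E)) \ge f(\OPT_{\text{OMAC}}).
\]
This is exactly the OKS analog of Theorem~\ref{theo: A CR}.

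\textbf{Step 1 (reduce to shares at most $1$).} Since $\OPT_{\text{OMAC}}$ maximizes $g$, it has $\alpha(\OPT_{\text{OMAC}}) \le 1$; otherwise $g<0\le g(\emptyset)$. Sort $\OPT_{\text{OMAC}}$ by density $q_i = f(\{i\})/\alpha_i$, decreasing; this is the quality order of Section~\ref{sec:balance-points}.

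\textbf{Step 2 (split into two halves).} Take the greedy prefix $P$ of this order with $\alpha(P) \le 1/2$, and let $j$ be the first agent that would push the share above $1/2$.
\begin{itemize}
\item The suffix after $j$ has total share at most $1/2$ and density no higher than that of $P \cup \{j\}$. Its value is therefore at most $f(P \cup \{j\})$.
\item Hence
\[
f(\OPT_{\text{OMAC}}) \le 2 f(P \cup \{j\}) \le 2\bigl(f(P) + f(\{j\})\bigr).
\]
\end{itemize}

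\textbf{Step 3 (compare to the OKS routines).} Here we invoke the known OKS facts about \textsc{Greedy}, which is the constant-threshold version of Algorithm~\ref{algo: bp}.
\begin{itemize}
\item The density-ordered greedy on all of $E$ with budget $1/2$ retains, as in the proof of Lemma~\ref{lem: m in s1}, every item preceding the first budget-crossing item. Its value therefore dominates any feasible fractional prefix, giving $p(\textsc{Greedy}) \ge f(P)$.
\item The extended \textsc{Max}, which admits costs in $(\beta,1]$, gives $p(\textsc{Max}) \ge f(\{j\})$.
\end{itemize}
This yields $f(\OPT_{\text{OMAC}}) \le 2\bigl(p(\textsc{Greedy}) + p(\textsc{Max})\bigr)$, i.e.\ only a factor $4$ again.

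\textbf{Main obstacle.} The delicate step is removing this extra factor $2$, and it is where I would spend the effort. The fix I would try first is to avoid halving. Instead, compare $\OPT_{\text{OMAC}}$ to the fractional knapsack optimum with budget $1$ truncated at share $1/2$, then use:
\begin{itemize}
\item the greedy prefix of $E$, up to the first crossing item $t$ of budget $1/2$, which is kept by \textsc{Greedy};
\item the crossing item itself, covered by \textsc{Max};
\item everything of density below $t$'s, bounded by density times share.
\end{itemize}
Concretely, split $\OPT_{\text{OMAC}}$ into items before $t$ in density order and items after.
\begin{itemize}
\item The items before $t$ are contained in \textsc{Greedy}'s prefix.
\item The items after $t$ have share at most $1$ and density at most $q_t$, so their value is at most $q_t \le 2\, q_t\,\alpha(\text{prefix}\cup\{t\})$. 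This is at most $2\bigl(p(\textsc{Greedy}) + p(\textsc{Max})\bigr)$ on account of the prefix-plus-$t$ structure.
\end{itemize}
Whether this closes to a factor $2$ rather than $4$ is the crux. If it does not, I would carefully recheck whether the intended statement compares against $f$ (reward) rather than $p$ of a budget-$1$ optimum, since the $g$-to-$f$ loss of $(1-\alpha)\ge 1/2$ presumably supplies the second factor $2$ elsewhere in Theorem~\ref{theo:ks and omac pos}.
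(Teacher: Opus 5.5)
\textbf{Gap: the proposal never uses that $\OPT_{\text{OMAC}}$ maximizes $g$.} Your reduction to $f(\OPT_{\text{OMAC}})\le p(\textsc{Greedy}(E))+p(\textsc{Max}(E))$ is the right target. The proposal never establishes it, though: it stops at a factor $4$ and leaves the crux open.

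The concrete missing idea is that you never use that $\OPT_{\text{OMAC}}$ maximizes $g=(1-\alpha)f$; you use only $\alpha(\OPT_{\text{OMAC}})\le 1$ and the density order. With that information alone the target inequality is false, so no refinement of your halving or splitting can recover the factor $2$. Take $m$ items, each of cost $1/m$ and profit $q/m$. The whole set has share $1$ and reward $q$, while \textsc{Greedy} with budget $1/2$ earns about $q/2$ and \textsc{Max} earns $q/m$. The $g$-optimal set in this instance has share $1/2$, which is exactly why the lemma still holds. So the statement is correct as written, and its factor $2$ has to come from $g$-optimality, not from elsewhere in Theorem~\ref{theo:ks and omac pos}.

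Separately, your Step 3 claim $p(\textsc{Greedy})\ge f(P)$ is false. \textsc{Greedy}'s kept prefix dominates density prefixes of $E$, not arbitrary budget-feasible sets. With items $a$ (cost $0.3$, profit $6$) and $b$ (cost $0.4$, profit $7.6$), you get $\OPT_{\text{OMAC}}=P=\{b\}$ but $\textsc{Greedy}=\{a\}$. Only $p(\textsc{Greedy})+p_t\ge f(P)$ holds.

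\textbf{How the paper closes it.} The paper's argument supplies exactly this ingredient.
\begin{itemize}
\item If $\alpha(\OPT_{\text{OMAC}})\le 1/2$, the set is OKS-feasible, and the standard bound $p(\OPT_{\text{OKS}})\le p(\textsc{Greedy})+p(\textsc{Max})$ finishes.
\item Otherwise, let $G$ be the density prefix of $E$ kept by \textsc{Greedy}, and $v$ the highest-density item outside it, so $c(G\cup\{v\})>1/2$. Suppose $\alpha(\OPT_{\text{OMAC}})>c(G\cup\{v\})$. Then $q(\OPT_{\text{OMAC}})\le q(G\cup\{v\})$, because a set of larger share cannot beat the average density of a shorter density prefix.
\item Since $x(1-x)$ is strictly decreasing on $(1/2,1]$, this gives $g(\Phi(G\cup\{v\}))>g(\OPT_{\text{OMAC}})$, contradicting optimality. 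Hence $\alpha(\OPT_{\text{OMAC}})\le c(G\cup\{v\})$.
\item Fractional-knapsack dominance of the density prefix then yields $f(\OPT_{\text{OMAC}})\le p(G)+p_v\le p(\textsc{Greedy})+p(\textsc{Max})$.
\end{itemize}
In words, $g$-optimality stops the optimum from overshooting the first item that crosses share $1/2$. This is the static-threshold analogue of Lemma~\ref{lem: one crossing bp}, and it is precisely the factor $2$ your argument was losing.
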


\begin{proof}
We aim to prove that
\[
f(\OPT_{\text{OMAC}}) = p(\Phi^{-1}(\OPT_{\text{OMAC}})) \leq p(\textsc{Greedy}(E)) + p(\textsc{Max}(E)),
\]
from which the claim follows immediately by taking expectations.

We consider two cases. If \( \alpha(\OPT_{\text{OMAC}}) \leq \beta = \tfrac{1}{2} \), then \( \Phi^{-1}(\OPT_{\text{OMAC}}) \) is a feasible solution to OKS, and so
\[
f(\OPT_{\text{OMAC}}) \leq p(\OPT_{\text{OKS}}).
\]
It is well known that
\[
p(\OPT_{\text{OKS}}) \leq p(\textsc{Greedy}) + p(\textsc{Max}),
\]
which completes the proof in this case.

Now consider the case where \( \alpha(\OPT_{\text{OMAC}}) > \tfrac{1}{2} \). Following Equation~\eqref{eq:quality-set}, define the {\em quality} of a set \( U \subseteq E \) as \( q(U) = \frac{p(U)}{c(U)} \), and the quality of an element \( e \in E \) as \( q_e = \frac{p_e}{c_e} \). Let \( G \subseteq E \) be the output of \textsc{Greedy} on \( E \), and define
\[
v = \arg\max \{\, q_e \mid e \in E \setminus G \,\}.
\]

Note that \( \Phi^{-1}(\OPT_{\text{OMAC}}) \subseteq E \), and this set is nonempty because \textsc{Greedy} cannot select all elements in \( E \) without exceeding the budget. Therefore, some elements—including potentially \( v \)—must remain outside \( G \).

By construction, \textsc{Greedy} selects the highest-quality elements in order until the budget is exhausted. Thus, the set \( G \cup \{v\} \) contains the \( |G| + 1 \) elements of highest quality in \( E \). It follows that:
\[
q(G \cup \{v\}) \geq q(\OPT_{\text{OKS}}).
\]

Moreover, since \( v \notin G \), we know that including \( v \) must exceed the budget, implying
\[
c(G \cup \{v\}) > \tfrac{1}{2}.
\]
In particular, if we assume towards contradiction that
\[
c(G \cup \{v\}) < c(\Phi^{-1}(\OPT_{\text{OMAC}})),
\]
then, by the structure of the reduction \( \Phi \) and the definition of the principal’s utility in Equation~\eqref{eq: g}, it would follow that: \begin{align}
    g(\Phi(G \cup \{v\})) 
    &= (1 - \alpha(\Phi(G\cup\{u\})))\alpha(\Phi(G\cup\{u\})) q(\Phi(G\cup\{u\}))) \nonumber\\
    &= (1 - c(G\cup\{u\}))c(G\cup\{u\}) q(G\cup\{u\})) \nonumber\\
    &\geq (1 - c(G\cup\{u\}))c(G\cup\{u\}) q(\OPT_{OMAC})  \nonumber\\
    &> (1 - c(\Phi^{-1}(\OPT_{OMAC})))c(\Phi^{-1}(\OPT_{OMAC})) q(\OPT_{OMAC}) \label{eq:ks-algo-profit-1}\\
    &= (1 - \alpha(\OPT_{OMAC}))\alpha(\OPT_{OMAC}) q(\OPT_{OMAC})\nonumber\\
    &= g(\OPT_{OMAC})\nonumber,
\end{align}
    where the second inequality \eqref{eq:ks-algo-profit-1} is since $c(G \cup \{v\}) > c(\OPT_{OKS}) \geq 1/2$, leading to a contradiction to optimality.

    To conclude, since $q(G \cup \{v\}) \geq q(\OPT_{OMAC})$ and $c(G \cup \{v\}) \geq c(\Phi^{-1}(\OPT_{OMAC}))$, we obtain that $$p(G \cup \{v\}) = c(G \cup \{v\})q(G \cup \{v\}) \geq c(\Phi^{-1}(\OPT_{OMAC}))q(\Phi^{-1}(\OPT_{OMAC}) = p(\Phi^{-1}(\OPT_{OMAC})),$$
    and so $p(\Phi^{-1}(\OPT_{OMAC})) \leq p(G) + p_u \leq p(G) + p($\textsc{Max}$(E))$, as required.
\end{proof}

\begin{proof}[Proof of Theorem \ref{theo:ks and omac pos}]
    For the output corresponding to OMAC, $S = \Phi({\sf ALG}_{OKS}(E))$, we get that: \begin{align*}
        g(\OPT_{OMAC}) = 
        &(1 - \alpha(\OPT_{OMAC}))f(\OPT_{OMAC}) \leq (1 - \alpha(\OPT_{OMAC}))2\mathbf{E}[f(S)] \\
        &\leq 2 \mathbf{E}[f(S)] = 2 \mathbf{E}\big[\frac{g(S)}{1 - \alpha(S)}\big] \leq 4\mathbf{E}[g(S)].
    \end{align*}
    The first and second equalities are by the definition of the principal utility \eqref{eq: g}, the first inequality is by Lemma \ref{lem:ks and omac pos 1} and the third inequality is the budget imposed by the reduction ($\mathbf{E}[\alpha(S)] = \mathbf{E}[c(V)] \leq \beta = \frac{1}{2}$).

For the lower bound, we analyze two cases based on the value of the budget parameter $\beta$.

\textbf{Case 1: $\beta < \frac{1}{2}$}.  
We construct the following instance. Let $\varepsilon \in \mathbb{R}_{> 0}$, and define the element set \( E = \{v, u_1, u_2\} \) with:
\begin{itemize}
    \item Profits: \( p(u_1) = p(u_2) = 1 \), \( p(v) = 1 + \varepsilon \),
    \item Costs: \( c(u_1) = c(u_2) = \beta + \varepsilon \), \( c(v) = 2(\beta + \varepsilon) \).
\end{itemize}
The arrival order is irrelevant.

Since no element fits within the budget individually, \textsc{Greedy} selects nothing. Meanwhile, \textsc{Max} selects item \( v \), as it has the highest profit and \( c(v) < 1 \) (given \( \beta < \frac{1}{2} \)). This yields an expected utility of:
\[
\frac{1}{2}(1 - c(v)) \cdot p(v) = \frac{1}{2}(1 - 2(\beta + \varepsilon))(1 + \varepsilon).
\]
On the other hand, a solution selecting both \( u_1 \) and \( u_2 \) yields:
\[
(1 - c(u_1) - c(u_2)) \cdot (1 + 1) = (1 - 2(\beta + \varepsilon)) \cdot 2.
\]
Hence, the competitive ratio in this case is strictly less than \( \frac{1}{4} + O(\varepsilon) \), as desired.

\textbf{Case 2: $\beta \geq \frac{1}{2}$}. 
We construct another instance with the following setup:
\[
E = \{v\} \cup E_1,
\]
where:
\begin{itemize}
    \item \( p_v = \varepsilon \), \( c_v = \beta - \varepsilon \),
    \item For every \( e \in E_1 \): \( p_e = \varepsilon \), \( c_e = \varepsilon^2 \),
    \item \( |E_1| = \frac{1}{\varepsilon} \).
\end{itemize}

Observe that:
\[
c(E) = c_v + \sum_{e \in E_1} c_e = \beta - \varepsilon + \frac{1}{\varepsilon} \cdot \varepsilon^2 = \beta,
\]
so the total cost matches the budget, and \textsc{Greedy} can select all elements. The total profit is:
\[
p(E) = p_v + \sum_{e \in E_1} p_e = \varepsilon + \frac{1}{\varepsilon} \cdot \varepsilon = 1 + \varepsilon.
\]
Hence, the utility of \textsc{Greedy} is:
\[
g(\Phi(E)) = (1 - c(E)) \cdot p(E) = (1 - \beta)(1 + \varepsilon).
\]
Meanwhile, since all items have identical profit, \textsc{Max} selects one arbitrarily, achieving utility at most \( \varepsilon \). Therefore, the expected utility of the algorithm is:
\[
\frac{1}{2} \cdot (1 - \beta)(1 + \varepsilon) + \frac{1}{2} \cdot \varepsilon = \frac{1 - \beta}{2}(1 + \varepsilon) + \frac{\varepsilon}{2}.
\]
Since \( \beta \geq \frac{1}{2} \), we obtain:
\[
\text{Expected utility} \leq \frac{1}{4} + O(\varepsilon).
\]

In contrast, selecting only \( \Phi(E_1) \) yields utility:
\[
g(\Phi(E_1)) = (1 - c(E_1)) \cdot p(E_1) = (1 - \varepsilon) \cdot 1 = 1 - \varepsilon.
\]
Thus, we have shown that the competitive ratio is at most \( \frac{1}{4} + O(\varepsilon) \) in all cases.
\end{proof}

\section{Appendix for Section~\ref{sec:negative}}
\label{app:neg-add}

\subsection{Additive Rewards} 
\begin{proof}[Proof of Theorem \ref{theo:add-neg-det}]
Let $\varepsilon \in (0, 1)$ be a small positive value, and set $n = \frac{1}{2\varepsilon^2} + 1$. Consider the following sequence of agents: $N = \{1, \dots, n\}$, listed in their order of arrival, such that $\alpha_1 = 1 - \varepsilon^2$, $q_1 = \varepsilon$, and for every $2 \leq i \leq n$, we have $\alpha_i = \varepsilon^2$ and $q_i = \varepsilon^2$. 
We refer to all agents in $N \setminus \{1\}$ as \emph{small agents}.

Any deterministic algorithm can be classified into one of the following categories w.r.t $N$:
    \begin{enumerate}
    \item ${\sf ALG}_1$, a class of algorithms that reject agent $1$ upon arrival.
    
    \item $\forall i \in N \setminus \{1\}: {\sf ALG}_i^-$. An algorithm in this class initially hires agent $1$ and dismisses it following the arrival of agent $i$ (if such an arrival occurs). Furthermore, at the moment immediately after agent $i$’s arrival—but prior to making any further hiring decisions—algorithms in this class have no hired small agents.
    
    \item For all $i \in N \setminus \{1\}$, let ${\sf ALG}_i^+$ denote the class of algorithms that hire agent $1$ and dismiss them upon the arrival of agent $i$ (if such an arrival occurs). In addition, at the time of agent $i$'s arrival—before any decision is made—algorithms in this class have already hired at least one small agent.
    
    \item ${\sf ALG}_{n+1}$, a class of algorithms that hire agent $1$ and do not dismiss them.
    \end{enumerate}

    We claim that for each of the algorithm classes defined above, there exists a value $k$ such that the utility of every algorithm in the class on the prefix $N_k$ of the input sequence is arbitrarily small relative to the optimal.
    In the remainder of the proof, we focus on selecting an appropriate value of $k$ that ensures the claim holds uniformly across all algorithm classes, as detailed below.

    For the class ${\sf ALG}_1$, we choose $k = 1$. At this point, the input sequence has stopped, and every algorithm in this class has achieved a utility of $0$, while the optimal solution would obtain a strictly positive utility. Specifically, if the principal hires the first agent, the resulting utility is
$$
g(\{1\}) = (1 - \alpha_1) \cdot q_1 \cdot \alpha_1 = (1 - \varepsilon^2) \cdot \varepsilon^2 \cdot \varepsilon > 0.
$$

For the class ${\sf ALG}_i^-$, we choose $k = i$. By definition, upon arrival of agent $i$, an algorithm in ${\sf ALG}_i^-$ holds no small agents and, in particular, dismisses agent $1$ at this point. Consequently, its utility is at most $g(\{i\})$, since such an algorithm either hires agent $i$ or no agent at all. Observe that
$$
g(\{i\}) = (1 - \alpha_i) \cdot \varepsilon^2 \cdot \alpha_i = (1 - \varepsilon^2) \cdot \varepsilon^2 \cdot \varepsilon^2 = \varepsilon \cdot g(\{1\}).
$$
Therefore, as $\varepsilon \to 0$, we have $CR \to 0$.

For the class ${\sf ALG}_i^+$, we set $k = i - 1$. Since the input sequence $N_k$ is a prefix of the full arrival sequence, it follows that an algorithm from ${\sf ALG}_i^+$ has not yet dismissed agent $1$—as the input ends just before it would make that decision. Moreover, the algorithm’s selected group of agents includes agent $1$ and at least one of the small agents that have arrived thus far. However, hiring more than one small agent alongside agent $1$ results in a non-positive utility, as the total share allocated would be at least $1$. In contrast, the optimal solution can achieve a strictly positive utility in many different ways. This implies that the competitive ratio satisfies $CR \leq 0$.

Finally, for the class ${\sf ALG}_{n+1}$, we set $k = n$, so that algorithms in this class receive the full input sequence $N_n = N$. As discussed earlier, in scenarios where the algorithm never dismisses agent $1$, its utility is at most $g(\{1\})$, since hiring additional small agents after hiring agent $1$ does not improve the utility. Nevertheless, the optimal solution could instead hire all $n - 1$ small agents, forming the group $N \setminus \{1\}$, and thereby achieve a utility 
 \begin{align*}
    g(N \setminus \{1\})
&=(1 - \alpha(N \setminus \{1\})) \alpha(N \setminus \{1\}) q(N \setminus \{1\})\\ 
&= (1 - (n-1) \cdot \alpha_i) \cdot (n-1) \cdot \alpha_i \cdot \varepsilon^2 \\
&= \left(1 - \frac{1}{2\varepsilon^2} \cdot \varepsilon^2 \right) \cdot \frac{1}{2\varepsilon^2} \cdot \varepsilon^2 \cdot \varepsilon^2 \\
&= \frac{\varepsilon^2}{4} 
= \frac{\varepsilon^3 (1 - \varepsilon^2)}{4\varepsilon (1 - \varepsilon^2)}
= \frac{g(\{1\})}{4\varepsilon (1 - \varepsilon^2)},
\end{align*}
    resulting in arbitrarily small CR. This completes the proof.
\end{proof}

\begin{proof}[Proof of Theorem~\ref{thm:CR_rand_upper_bound}]
To complete the proof from the main body, we show the next claim. 
\begin{claim}
For any $1 \le i \le n+1$, letting $k = \min \{i, n \}$, the utility achieved by any algorithm in the class ${\sf ALG}_i$ on the input prefix $N_k$ can be made arbitrarily small relative to the optimal solution.
\end{claim}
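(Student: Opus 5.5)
The plan is to follow the four-way case analysis used in the proof of Theorem~\ref{theo:add-neg-det}, with each deterministic branch replaced by the corresponding probabilistic event. First I will record the three utilities that drive every comparison, using $g(S)=(1-\alpha(S))\alpha(S)q(S)$ and $n-1=1/(2\varepsilon^2)$:
\[
g(\{1\})=(1-\varepsilon^2)\varepsilon^3,\qquad g(\{j\})=(1-\varepsilon^2)\varepsilon^4 \ \text{ for a small agent } j,\qquad g(N\setminus\{1\})=\varepsilon^2/4 .
\]
More generally, a set of $t$ small agents has utility at most $t\varepsilon^4$. Any set containing agent $1$ together with at least one small agent has total share at least $1$, so its utility is $\le 0$. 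Hence on every prefix $N_k$ we have $\OPT(N_k)\ge g(\{1\})=\Theta(\varepsilon^3)$, and $\OPT(N_n)=\Theta(\varepsilon^2)$. On the event $X$, the algorithm's final set on $N_k$ falls into one of three types: (a) agent $1$ plus at least one small agent, with utility $\le 0$; (b) agent $1$ alone, with utility $g(\{1\})$; (c) agent $1$ dismissed with some $t$ small agents kept, with utility $\le t\varepsilon^4$.

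For ${\sf ALG}_1$ I will take $k=1$. On the event $\neg X$, which has probability at least $1/2$, the algorithm holds nothing, so its utility is $0$ while $\OPT(N_1)=g(\{1\})>0$. The utility on this event is therefore zero relative to the optimum. For ${\sf ALG}_{n+1}$ I will take $k=n$ and work on the event $X$. In case (b) the ratio is $g(\{1\})/\OPT(N_n)=O(\varepsilon)$. In case (c), the hypothesis $\mathbf{E}[t_n]\le 1$ bounds the expected utility by $O(\varepsilon^4)$, which is again $O(\varepsilon^2)\cdot\OPT$. Case (a) contributes nothing positive. So conditioned on $X$, the expected utility is $O(\varepsilon)\cdot\OPT(N_n)$, which is arbitrarily small.

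For the middle classes ${\sf ALG}_i$ with $2\le i\le n$, I will take $k=i$. By minimality of $i$, $\mathbf{E}[t_{i-1}]\le 1$, and since $t_i\le t_{i-1}+1$, also $\mathbf{E}[t_i]\le 2$. In case (c), the expected utility is therefore $O(\varepsilon^4)=O(\varepsilon)\cdot\OPT(N_i)$. The condition $\mathbf{E}[t_i]>1$ forces positive probability mass on holding small agents. While agent $1$ is still held, all of that mass lands in case (a), where utility is $\le 0$. The remaining mass lands in case (c), which I have just bounded as negligible.

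The main obstacle is case (b) in this middle class. Here agent $1$ is held alone, and the algorithm matches $\OPT(N_i)$ whenever $i$ is small. I would try to show, via Markov's inequality applied to $t_i$, that $\mathbf{E}[t_i]>1$ together with $\mathbf{E}[t_{i-1}]\le 1$ pushes the probability of case (b) below any fixed constant as $\varepsilon\to 0$. If that fails, the fallback is to choose the prefix $N_i$ so that $\OPT(N_i)$ is dominated by small agents, i.e.\ $i-1=\Omega(1/\varepsilon)$. Then $g(\{1\})/\OPT(N_i)=O(\varepsilon)$, and the short-prefix regime is handled by the same argument as for ${\sf ALG}_1$.
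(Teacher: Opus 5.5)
Your handling of ${\sf ALG}_1$ and ${\sf ALG}_{n+1}$ matches the paper's. For the middle classes, however, the gap is exactly the one you flag, and neither of your remedies closes it.

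Markov's inequality only upper-bounds $\Pr(t_i\ge a)$. Since $t_i\le i-1$, the hypotheses $\mathbf{E}[t_{i-1}]\le 1<\mathbf{E}[t_i]$ yield no more than $\Pr(t_i\ge 1)>1/(i-1)$: the excess expectation can sit on a low-probability branch that holds many small agents.

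The fallback has three problems. It abandons the prescribed prefix $k=i$. It loses the bound $\mathbf{E}[t]\le 2$ on longer prefixes. It borrows the ${\sf ALG}_1$ argument, which needs $\Pr(X)\le 1/2$ and that fails in this class. Also, $i-1=\Omega(1/\varepsilon)$ only makes the small agents comparable to agent $1$; getting $g(\{1\})/\OPT(N_i)=O(\varepsilon)$ requires $i-1=\Omega(1/\varepsilon^2)$.

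What the paper does instead, and what you throw away by bounding case (a) merely by $\le 0$, is to use the fact that co-holding is strongly negative. With agent $1$ held, the utility is $h(\alpha(T_i))$ for the concave function $h(a)=(\varepsilon^2-a)(a\varepsilon^2+\alpha_1q_1)$, and $h(t\varepsilon^2)\approx(1-t)\varepsilon^3$. So each co-held small agent costs about what agent $1$ alone earns. Jensen then gives $\mathbf{E}[h(\alpha(T_i))]\le h(\mathbf{E}[\alpha(T_i)])<0$ once $\mathbf{E}[t_i]>1$, and this negative mass is what pays for case (b).

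That argument only works if agent $1$ is kept on every branch where small agents are held. The paper tacitly uses the unconditional $\mathbf{E}[t_i]>1$ in place of $\mathbf{E}[t_i\mid \text{agent }1\text{ kept}]$, and your worry about case (b) is in fact fatal to the claim. Consider this algorithm: hire agent $1$; with probability $0.9$ keep $\{1\}$ forever; with probability $0.1$ dismiss agent $1$ when agent $2$ arrives and then hire every small agent.
\begin{itemize}
\item Here $\Pr(X)=1$ and $\mathbf{E}[t_j]=0.1(j-1)$, so the algorithm lies in ${\sf ALG}_{12}$.
\item For small $\varepsilon$, $\OPT(N_{12})=g(\{1\})$, since eleven small agents yield at most $11\varepsilon^4$.
\item The algorithm earns at least $0.9\,g(\{1\})$ on $N_{12}$.
\end{itemize}
Hence the statement with $k=\min\{i,n\}$ is false, and no argument can fill your case (b).

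Changing the prefix does not help either. Take the algorithm that hires agent $1$, dismisses it at a random step $s$ with $\Pr(s\le k)=e^{k\varepsilon-1}$ for $2\le k\le 1/\varepsilon$, and hires every small agent from step $s$ on. Its expected utility is at least $(1-1/e-O(\varepsilon))\OPT(N_k)$ on every prefix $N_k$, and for small $\varepsilon$ it lies in ${\sf ALG}_4$. So this instance family cannot certify the $1/2$ bound of Theorem~\ref{thm:CR_rand_upper_bound} at all.
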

\begin{proof}
For ${\sf ALG}_1$, we choose $k = 1$. In this case, the optimal solution is to hire agent $1$. An algorithm achieves a competitive ratio (CR) of $1$ when event $X$ occurs, and $0$ otherwise (since it obtains zero utility on $\overline{X}$), resulting in an expected CR
bounded above by $\Pr(X) \leq 1/2$, as required.

For ${\sf ALG}_i$, where $2 \leq i \leq n$, we set $k = i$. Here, the utility obtained when event $X$ occurs is negligible. To see this, we consider two cases depending on whether the algorithm keeps agent $1$ or not, and show that in either case the expected utility is negligible compared to that of hiring agent $1$ alone.

If an algorithm retains agent $1$, then by the definition of the principal's utility \eqref{eq: g}, we have that 
\begin{align} 
&\mathbf{E}[g(T_k \cup \{1\}) \mid X] = \nonumber\\
&\mathbf{E}[(1-\alpha(T_k \cup \{1\}))(\alpha(T_k)q(T_k) + \alpha_1 q_1)] \leq \nonumber\\
&(1 - \mathbf{E}[\alpha(T_k)] - \alpha_1)(\mathbf{E}[\alpha(T_k)] q_i + \alpha_1 q_1) \label{eq:add-neg-det-1} = \\
&(1 - \mathbf{E}[t_k] \alpha_i - \alpha_1)(\mathbf{E}[t_k] \alpha_i q_i + \alpha_1 q_1) \label{eq:add-neg-det-2} = \\
&(\varepsilon^2 - \mathbf{E}[t_k] \varepsilon^2)(\mathbf{E}[t_k] \varepsilon^4 + \varepsilon - \varepsilon^3) \leq \label{eq:add-neg-det-3}\\
&\varepsilon^3 (1 - \mathbf{E}[t_k]) + \varepsilon^6 (1 - \mathbf{E}[t_k]) \mathbf{E}[t_k] \leq \nonumber \\
&\varepsilon^3 (1 - \mathbf{E}[t_k]) + O(\varepsilon^6) = \label{eq:add-neg-det-4}\\
&\varepsilon^3 (1 - \mathbf{E}[t_k]) + O(\varepsilon^3) \cdot g(\{1\}) \label{eq:add-neg-det-5},
\end{align}

We note that \eqref{eq:add-neg-det-1} follows from Jensen's inequality and the concavity of $g$ in its continuous form over $\alpha$.  
Equality \eqref{eq:add-neg-det-2} uses the definitions of small agents and the set $T_k$.
The inequality in \eqref{eq:add-neg-det-4} holds since $(1 - \mathbf{E}[t_k]) \mathbf{E}[t_k] \leq 1$, and the final equality \eqref{eq:add-neg-det-5} substitutes the value of $$g(\{1\})  = (1 - \alpha_1) \cdot q_1 \cdot \alpha_1 = (1 - \varepsilon^2) \cdot \varepsilon^3.$$
Together, these steps imply that 
\[
\mathbf{E}[g(T_k \cup \{1\}) \mid X] < O(\varepsilon^3) \cdot g(\{1\}),
\]
showing that the utility in this case is negligible.

If, instead, an algorithm in this class does not keep agent $1$, then by $\mathbf{E}[t_{k-1}] \leq 1$ and the definition of $t_k$, it follows that $\mathbf{E}[t_k] \leq 2$. This is because the algorithm can increase $t_k$ by at most one upon accepting the arriving agent with probability $1$, i.e., $T_k = T_{k-1} \cup \{k\}$. 
The utility obtained from hiring two small agents is insignificant compared to that of agent $1$ alone. Consequently, the utility $\mathbf{E}[g(T_k)]$ also approaches $0$ as $\varepsilon \to 0$.

To summarize the case of the class ${\sf ALG}_i$, note that its expected competitive ratio satisfies  
\[
\begin{array}{ll}
\mathbf{E}[\textit{CR of } {\sf ALG}_i] 
&= \mathbf{E}[\textit{CR of } {\sf ALG}_i \mid X] \cdot Pr(X) \\ \\
&+ \mathbf{E}[\textit{CR of } {\sf ALG}_i \mid \bar{X}] \cdot (1 - Pr(X)) \\ \\
&\leq O(\varepsilon) \cdot Pr(X) + 1 - Pr(X) < 1/2 + O(\varepsilon),
\end{array}
\]
    as required.

For ${\sf ALG}_{n+1}$, we set $k = n$.  
We again show that the competitive ratio obtained when event $X$ occurs is $O(\varepsilon)$.
If the algorithm did not keep agent $1$, its utility is at most that of a single hired small agent, meaning $\mathbf{E}[g(T_k)]$ is negligible. Otherwise, we get that:
\[
\begin{array}{ll}
\mathbf{E}[g(T_k \cup \{1\}) \mid X] 
&\leq \mathbf{E}[g(T_k)] + g(\{1\}) \\ \\
&\leq (1 - \mathbf{E}[t_k] \alpha_i) \cdot \mathbf{E}[t_k] \alpha_i q_i + g(\{1\}) \\ \\
&\leq \alpha_i q_i + g(\{1\}) = \varepsilon^4 + g(\{1\}) \leq \varepsilon^3 + \varepsilon^4.
\end{array}
\]
However, if the algorithm were instead to hire all of the $n$ small agents to arrive, it would obtain a utility of $$g(N \setminus \{1\}) = (1 - (n-1) \cdot \varepsilon^2)(n-1) \cdot \varepsilon^2 \cdot \varepsilon^2 = \varepsilon^2 / 4,$$ resulting in a CR that approaches $0$ when $\varepsilon \rightarrow 0$.
\end{proof}
This completes the proof of the theorem.
\end{proof}

\subsection{XOS Rewards} 
\label{app: xos proof example}
\label{app: rem SM neg 1}

\begin{proof}[\bf Proof of Lemma~\ref{lem: aux rem SM neg}:]
    We prove by induction over $0 \leq k < m$ that for every $1 \leq h \leq m$, it holds that $A^{m-k}_h = \frac{hn + k}{n m}$. Thus, $$A^1_1 = A^{m - (m-1)}_1 = \frac{1\cdot n + m - 1}{nm} \leq \frac{n + m}{nm},$$ as required.

    \textbf{Base:} $k  = 0$. For every $1 \leq h \leq m$ it immediately holds that $A_s^{m-0} = \frac{h}{m} = \frac{hn + 0}{nm}$.

    \textbf{Step}: assume the claim holds for $A_h^{m-k}$ for every $1 \leq h \leq m$. It follows that: \begin{align*}
A^{m - (k + 1)}_h 
&= \frac{1}{n}A^{m-k}_{h+1} + \frac{n-1}{n}A^{m-k}_h \\
&= \frac{1}{n} \cdot \frac{(h+1)n + k}{nm} + \frac{n-1}{n} \cdot \frac{hn + k}{nm} \\
&= \frac{(h+1)n + k}{n^2 m} + \frac{(n-1)(hn + k)}{n^2 m} \\
&= \frac{nh + n + k + n^2 h - nh + kn - k}{n^2 m} \\
&= \frac{n^2 h + kn + n}{n^2 m} \\
&= \frac{n h + (k + 1)}{n m}.
\end{align*}
\end{proof}

\endgroup

\end{document}